\newtheorem*{rep@theorem}{\rep@title}
\newcommand{\newreptheorem}[2]{%
\newenvironment{rep#1}[1]{%
 \def\rep@title{#2 \ref{##1}}%
 \begin{rep@theorem}}%
 {\end{rep@theorem}}}
\definecolor{dnrbl}{rgb}{0,0,0.3}
\definecolor{dnrgr}{rgb}{0,0.3,0}
\definecolor{dnrre}{rgb}{0.5,0,0}
\theoremstyle{plain}
\newtheorem{thm}{Theorem}[section]
\newtheorem{lem}[thm]{Lemma}
\newtheorem{coro}[thm]{Corollary}
\numberwithin{equation}{subsection}
\let\c@table\c@figure
\newtheorem{defin}[thm]{Definition}
\def\Pu{P_{\textrm{\footnotesize{unhap}}}}
\def\Ps{P_{\textrm{\footnotesize{stab}}}}
\newcommand{\bpm}{\begin{pmatrix}}
\newcommand{\epm}{\end{pmatrix}}
\newcommand{\eps}{\varepsilon}
\def\Pu{P_{\textrm{\footnotesize{unhap}}}}
\def\Ps{P_{\textrm{\footnotesize{stab}}}}
\newcommand{\ds}{\displaystyle}
\renewenvironment{abstract}
 { \normalsize
  \list{}{
    \setlength{\leftmargin}{.0cm}%
    \setlength{\rightmargin}{\leftmargin}%
    }%
  \item {\scshape \abstractname.} \relax}
 {\endlist}
\renewcommand{\@makecaption}[2]{%
  \setbox\@tempboxa\vbox{\color@setgroup
    \advance\hsize-2\captionindent\noindent
    \@captionfont\@captionheadfont#1\@xp\@ifnotempty\@xp
        {\@cdr#2\@nil}{.\@captionfont\upshape\enspace#2}%
    \unskip\kern-2\captionindent\par
    \global\setbox\@ne\lastbox\color@endgroup}%
  \ifhbox\@ne 
    \setbox\@ne\hbox{\unhbox\@ne\unskip\unskip\unpenalty\unkern}%
  \fi
  \ifdim\wd\@tempboxa=\z@ 
    \setbox\@ne\hbox to\columnwidth{\hss\kern-2\captionindent\box\@ne\hss}%
  \else 
    \setbox\@ne\vbox{\unvbox\@tempboxa\parskip\z@skip
        \noindent\unhbox\@ne\advance\hsize-2\captionindent\par}%
  \fi
  \addvspace\abovecaptionskip
  \hbox to\hsize{\kern\captionindent\box\@ne\hss}%
\relax
}
\begin{document}

\title[Digital morphogenesis via Schelling segregation]
{Digital morphogenesis via Schelling segregation}

\author{George Barmpalias}
\author{Richard Elwes}
\author{Andy Lewis-Pye}
\thanks{An extended abstract of this paper with most proofs and some results omitted, has appeared in
 the Proceedings of the  IEEE Symposium on  
 Foundations of Computer Science (FOCS) 2014.  Authors are listed alphabetically. 
Barmpalias was supported by the 
1000 Talents Program for Young Scholars from the Chinese Government,
and the Chinese Academy of Sciences (CAS) President's International 
Fellowship Initiative No. 2010Y2GB03.
Additional support was received by
the CAS and the Institute of Software of the CAS.
Partial support was also received from a Marsden grant of New Zealand 
and the China Basic Research Program (973) grant No.~2014CB340302.
Lewis-Pye was supported by a Royal Society University 
Research Fellowship.}

\maketitle
\begin{abstract}  
Schelling's model of segregation looks to explain the way in which  particles or agents of two types may come to arrange themselves spatially into configurations consisting of large homogeneous clusters, i.e.\ connected regions consisting of only one type.  
As one of the earliest agent based models studied by economists and perhaps the most famous model of self-organising behaviour, it also has direct links to areas at the interface between computer science and statistical mechanics, such as the  Ising model and  the study of contagion and cascading phenomena in networks.  

While the model has been extensively studied it has largely resisted rigorous analysis, prior results from the literature generally pertaining to variants of the model which are tweaked so as to be  amenable to standard techniques from statistical mechanics or stochastic evolutionary game theory.    In \cite{BK}, Brandt, Immorlica, Kamath and Kleinberg provided the first  rigorous analysis of the  unperturbed model,  for a specific set of input parameters.  Here we provide a rigorous analysis of the model's behaviour much more generally and establish some  surprising forms of threshold behaviour, notably the existence of situations where an \emph{increased} level of intolerance for neighbouring agents of opposite type  leads almost certainly to \emph{decreased} segregation.  
\end{abstract}

\vspace*{\fill}
\noindent\textsc{{\bf George Barmpalias}}\\[0.5em]
\noindent
State Key Lab of Computer Science, 
Institute of Software, Chinese Academy of Sciences, Beijing, China.
School of Mathematics, Statistics and Operations Research,
Victoria University of Wellington, New Zealand.\\[0.2em] \textit{E-mail:} \texttt{barmpalias@gmail.com}.
\textit{Web:} \texttt{\href{http://barmpalias.net}{http://barmpalias.net}}\\[0.1em]\par
\addvspace{\medskipamount}
\noindent\textsc{{\bf Richard Elwes}}\\[0.5em]
\noindent School of Mathematics,
University of Leeds, LS2 9JT Leeds, United Kingdom.\\[0.2em]
\textit{E-mail:} \texttt{r.h.elwes@leeds.ac.uk.}
\textit{Web:} \texttt{\href{http://richardelwes.co.uk}{http://richardelwes.co.uk}} \\[0.1em]\par
\addvspace{\medskipamount}
\noindent\textsc{{\bf Andy Lewis-Pye}}\\[0.5em]  
\noindent Department of Mathematics,
Columbia House, London School of Economics, 
Houghton Street, London, WC2A 2AE, United Kingdom.\\[0.2em]
\textit{E-mail address:} \texttt{A.Lewis7@lse.ac.uk.}
\textit{Web:} \texttt{\href{http://aemlewis.co.uk/}{http://aemlewis.co.uk}} 
\vfill 
\thispagestyle{empty}
\clearpage

\section{Introduction} 

While Alan Turing is best known within the mathematical logic and computer science communities  for his work formalising the algorithmically calculable functions, it is interesting to note that  his most cited work \cite{AT} is actually that relating to morphogenesis. Turing wanted to understand certain biological processes: the gastrulation phase of embryonic development, the process whereby dappling effects arise on animal coats, and phyllotaxy, i.e.\ the arrangement of leaves on plant stems. One can consider the more general question, however, as to how morphogenesis occurs --  how structure can arise from an initially random, or near random configuration. Along these lines, one  of the major contributions of the economist and game theorist Thomas Schelling was an elegant model of segregation, first described in 1969 \cite{TS1},  which turns out to provide a very simple model of such a morphogenic process. This model looks to describe how individuals of different types come to organise themselves spatially into segregated regions, each of largely one type.  Today it has become perhaps the best known model of self-organising behaviour, and was one of the reasons cited by the committee upon awarding Schelling his Nobel memorial prize in 2005. 

For Schelling, part of the significance of his model was that it provided evidence for a recurrent theme in his research -- especially as elaborated upon in his influential work \cite{TS2} -- that individuals acting according to their local interests can produce global results which are undesired by all. Running small simulations of his model, for example, he observed that large levels of segregation can result in communities, when each individual has no preference for segregation,  but still requires a certain low proportion of their own local neighbourhood to be of their own type.  Although the explicit aim was initially to model the kind of racial segregation observed in large American cities, Schelling himself pointed out that the analysis is sufficiently abstract that any situation in which objects of two types arrange themselves geographically according to a certain preference not to be of a minority type in their neighbourhood, could constitute an interpretation. As pointed out in \cite{VK}, for example, Schelling's model can be seen as a finite difference version of  differential equations describing interparticle forces and applied in  modelling cluster formation. Many authors (see for example \cite{SS,DM,PW,GVN,GO}) have pointed out direct links to spin-1 models used to analyse phase transitions -- by introducing noise into the dynamics of the underlying Markov process one can arrive at the Boltzmann distribution for the set of possible configurations, with the `energy'  typically corresponding to  some measure of the mixing of types. From there one can immediately deduce that the modified (now ergodic) process spends a large proportion of the time in completely segregated states, with this proportion tending to 1 as the analogue of the temperature is taken to 0. So had Schelling been aware of these connections to variants of the Ising model, he could have based his work on a long history of physics research. 

Our own avenue into these questions, however, came via the work of computer scientists \cite{BK} and the study of cascading phenomena on networks as studied by Barabasi, Kleinberg and many others, a good introduction to which can be found in \cite{JK}.  The dynamics of the Schelling process (as will be clear once it has been formally defined below) are almost identical to many of those used to model the flow of information or behaviour on large social or physical networks such as the internet or the web of social contacts via which disease may spread --  the principal differences here being the initial conditions considered and the use of a much simpler underlying graph structure.  An immediate concern, given the results of this paper, is as to whether techniques developed here can be extended and applied to understand emergent phenomena on the various (normally more complex) random graph structures studied by those in the networks community.  Along these lines, Henry, Pra\l at and Zhang have described a simple but elegant model of network clustering \cite{HPZ}, inspired by the Schelling model. Their model doesn't display the kind of involved threshold behaviour, however,  that one might expect to be exhibited by a direct translation of the Schelling process to random underlying graph structures.

In \cite{CF} Clark and Fossett give an account of the role that  Schelling's model has had on the debate concerning anti-discrimination policy and the ongoing battle against residential segregation, along ethnic lines and others.   Certainly there has been increased governmental recognition of the fact that segregation has become one of the most important socio-political and public economic issues, with residential segregation having knock-on effects in terms of education, underachievement and disadvantages in the labour market \cite{TN}  and healthcare \cite{RB}.  In order to inform policy choices in combatting segregation it is essential to understand the extent to which this is an outcome  caused by voluntary residential preferences on the one hand, or by constraints on choices resulting from discrimination inherent  the system on the other.  Schelling's model (and more sophisticated variants, see for example \cite{JZ1})  have been used to argue that one might expect segregation to persist  even in the absence of systemic discrimination.  While the  simplicity of the model might initially suggest that one should be wary of rushing to draw real world conclusions, the robustness of the segregation phenomenon is certainly striking and occurs in all versions of the model which have been considered. Somewhat counter-intuitively, Zhang \cite{JZ2,JZ3} and Pancs and Vriend \cite{PV} have even shown segregation to result in variants of the model in which individuals \emph{actively prefer} integration. \\

  We concentrate here on the one-dimensional version of the model, as in \cite{BK}. The model works as follows. One begins with a large number $n$  of
nodes (individuals) arranged in a circle. Each node is initially assigned a \emph{type}, and has probability $\frac{1}{2}$ of being of type $\alpha$ and probability $\frac{1}{2}$ of being of type $\beta$ (the types of distinct individuals being independently distributed).  We fix a parameter $w$, which specifies the `neighbourhood' of each node in the following way: at each point in time the neighbourhood of the node $u$, denoted $\mathcal{N}(u)$,  is the set containing $u$ and the $w$-many closest neighbours on each side -- so the neighbourhood consists of $2w+1$ many nodes in total. The second parameter $\tau\in [0,1]$ specifies the proportion of a node's neighbourhood which must be of their type for them to be happy. So, at any given moment in time, we define $u$ to be \emph{happy} if at least $\tau(2w+1)$ of the nodes in $\mathcal{N}(u)$ are of the same type as $u$. 
One then considers a discrete time process, in which, at each stage, one pair of unhappy individuals of opposite types are  selected uniformly at random and are given the
opportunity to swap locations. Following \cite{BK}, we work according to the assumption that the swap will take place as long as each member of the pair has at least as many neighbours of the same type at their new location as at their former one (note that for $\tau\leq \frac{1}{2}$ this will automatically be the case). The process ends when (and if) one reaches a stage at which no further swaps are possible. \\

 Much of the difficulty  in providing a rigorous analysis stems from the large variety of absorbing states for the underlying Markov process. Various authors have therefore worked with variants of the model in which perturbations are introduced into the dynamics so as to avoid this problem, i.e.\ the model is  altered by introducing a further random  element, allowing individuals to sometimes make  moves which are detrimental with respect to their utility function. Such changes in the model dramatically simplify the dynamics, and might be justified by the assumption that we are dealing with individuals of  `bounded rationality' -- one might consider that precise information concerning the racial composition of each neighbourhood is not available, for example. As remarked above, these discussions are often couched in the language of statistical mechanics. In fact though, Young used techniques from evolutionary game theory -- an analysis in terms of stochastically stable states -- to develop the first results along these lines \cite{HY}, and these ideas were then substantially developed in a number of papers by Zhang \cite{JZ1,JZ2,JZ3}. While the language used may differ from that of the physicists, the basic analysis is essentially equivalent: Zhang establishes a Boltzmann distribution for the set of configurations, and then his stochastically stable states correspond to ground states. \\

   In \cite{BK}, Brandt, Immorlica, Kamath and Kleinberg  used an  analysis of  locally defined stable configurations, combined with results of Wormald \cite{NW}, to provide the first rigorous analysis of the unperturbed one-dimensional Schelling model, for the case $\tau=\frac{1}{2}$. The results obtained there are very different to those for the perturbed models: in the final configuration the average length of maximal segregated region is independent of $n$ and only polynomial in $w$. So now the local dynamics do not induce global segregation in proportion to the size of the society but instead only a small degree of segregation at the local level. The suggestion is that these results are in accord with empirical studies of residential segregation in large populations \cite{WC,FR,MW}. \\

  \textbf{Our contribution}.   In this paper we shall consider what happens more generally for $\tau \in [0,1]$ (for the unperturbed model), and we shall observe, in particular, that some remarkable threshold behaviour occurs.  While some aspects of the approach from \cite{BK} remain, in particular the focus on locally defined stable configurations which can be used to understand the global picture, the specific methods of their proof (the use of `firewall incubators', and so on) apply only to the case  $\tau=\frac{1}{2}$, and so largely speaking we shall require different techniques here.   The picture which emerges is one in which one observes different behaviour in five regions. For $\kappa$ which is the unique solution in $[0,1]$ to: 
   
   \[ \left( \frac{1}{2}-\kappa \right)^{1-2\kappa} = \left( 1-\kappa \right)^{2-2\kappa}, \] 
   
    ($\kappa \approx 0.353092313$) these regions are: \begin{inparaenum}[(i)] \item $\tau<\kappa$, \  \item $\tau=\kappa$, \  \item  $\kappa<\tau<\frac{1}{2}$, \  \item  $\tau= \frac{1}{2} $, \  \item  $\tau >\frac{1}{2}$ \end{inparaenum}. In fact we shall not consider the case $\tau =\kappa$, but the behaviour for all other values of $\tau$ is given by the theorems below. Perhaps the most surprising fact is that, in some cases,  increasing $\tau$ almost certainly leads to decreased segregation. The assumption is always that we work with $n\gg w$, i.e.\ all results hold for all $n$ which are sufficiently large compared to $w$.   A \emph{run} of length $d$ is a set of $d$-many consecutive nodes all of the same type. \emph{Complete segregation} refers to any configuration in which there exists a single run to which  all $\alpha$ nodes belong. 
   
    The first theorem deals with low values of $\tau$, and formally establishes the (perhaps rather intuitive)  idea that very low levels of intolerance lead to low levels of segregation:  
   
   \begin{thm} \label{1}
   Suppose  $\tau<\kappa$ and $\epsilon>0$. For all sufficiently large $w$, if a node $u$ is chosen uniformly at random, then the probability that any node in $\mathcal{N}(u)$ is ever involved in a swap is $<\epsilon$.  
Thus there exists a constant $d$ such that, for sufficiently large $w$,  the probability $u$ belongs to a run of length $>d$ in the final configuration is $<\epsilon$.\footnote{A comment on conventions concerning quantification may be in order here. In the  statement of Theorem \ref{1} we fix $\tau$ and $\epsilon$ and then conclude that, for all sufficiently large $w$,  a certain statement holds. It is to be understood here that how large one has to take $w$ may depend upon $\tau$ and $\epsilon$. Similarly, in Theorem \ref{3}, how large one has to take $w$, and then how large $n$ must be in comparison to $w$, may depend upon $\tau$ and $\epsilon$. The upshot of this is that, where we subsequently make multiple uses of the weak law of large numbers, there shall be no requirement of \emph{uniform} convergence for the random variables in question.} 
   \end{thm}

 As one increases $\tau$ beyond the threshold $\kappa$, however, the dynamics of the process qualitatively change:  
    
   \begin{thm} \label{3} Suppose $\tau \in \left( \kappa,\frac{1}{2} \right)$ and  $\epsilon>0$. 
   There exists a constant $d$ such that 
   (for all $w$ and all $n$ such that $w\ll n$) the probability that $u$ chosen uniformly at random will belong to a run of length $\geq e^{w/d}$ in the final configuration, is greater than $1-\epsilon$.
      \end{thm}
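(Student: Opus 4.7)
The plan is to identify short immortal configurations (``seeds'') which, under the assumption $\tau > \kappa$, appear in any interval of length $e^{w/d}$ with probability at least $1-\epsilon$, and which grow under the dynamics into long monochromatic runs. The argument breaks into three parts.

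First I would define a seed of type $\alpha$: a contiguous window of length $O(w)$ whose $\alpha$-fraction is strictly greater than $\tau$ by a definite margin, chosen so that its central nodes are happy and, crucially, so that no admissible swap can reduce the $\alpha$-count in a protected core. By a Chernoff/large-deviations estimate for a fixed window of length $2w+1$, the probability that the window contains at least a $\tau$-fraction of $\alpha$-nodes is $\exp(-w \cdot I(\tau))$, where $I(\tau) = \tau \log(2\tau) + (1-\tau)\log(2(1-\tau))$ is the Bernoulli relative entropy. The defining equation $(1/2-\kappa)^{1-2\kappa} = (1-\kappa)^{2-2\kappa}$ for $\kappa$ is precisely the threshold at which this decay rate crosses the critical value separating ``seeds appear exponentially often in $w$'' from ``seeds are too rare to appear reliably''. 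Consequently, for $\tau \in (\kappa, 1/2)$ I can fix $d = d(\tau,\epsilon)$ so that, partitioning the cycle into intervals of length $e^{w/d}$, each interval contains at least one seed of each type with probability $\geq 1-\epsilon/2$.

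Second, I would control the dynamics in a neighborhood of a seed. Since $\tau < 1/2$, the swap rule automatically accepts any pair of unhappy $\alpha,\beta$ nodes that both weakly gain, and in particular the $\alpha$-count at a happy core of a seed cannot decrease at any subsequent time. I would then show monotone expansion: an unhappy $\alpha$-node immediately outside an $\alpha$-seed tends to be paired with an unhappy $\beta$-node nearby, and the resulting swap extends the $\alpha$-region by one position while leaving the seed intact. Iterating this reasoning, the $\alpha$-region around each $\alpha$-seed grows until it either reaches length $e^{w/d}$ or collides with the grown region of an opposite-type seed. A careful bookkeeping of the boundary evolution should show that such a collision yields two long monochromatic runs meeting cleanly, rather than a persistent mixed zone.

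Finally, combining the two, a uniformly random $u$ lies with probability $\geq 1-\epsilon$ in an interval containing seeds of both types, and the preceding dynamical argument forces $u$ into a run of length at least $e^{w/d}$. The main obstacle I expect is Step two: ruling out pathological orderings of swap events that could stall a seed's outward growth, and in particular verifying that opposite-type seed regions meeting in a bounded region resolve into two long runs rather than into a transitional zone that would leave $u$ stranded in a short run. This is where the precise matching between the exponential scales produced by Step~1 and the growth rates of seeds in Step~2, dictated by the equation defining $\kappa$, must be made quantitative.
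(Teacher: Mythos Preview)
Your plan inverts the role of $\kappa$ and misidentifies the seeds. The equation defining $\kappa$ is the crossover between two exponentially small probabilities: the probability $\Pu\asymp e^{-2wI(\tau)}$ that a given node is \emph{unhappy} in the initial configuration (bias at least $(1-2\tau)(2w+1)$ in its window of size $2w+1$), and the probability $\Ps\asymp e^{-wI(2\tau)}$ that a given length-$(w{+}1)$ interval is \emph{stable} (it needs $\tau(2w+1)\approx 2\tau w$ nodes of one type, i.e.\ fraction $\approx 2\tau$). One checks $2I(\kappa)=I(2\kappa)$ is exactly the paper's equation. For $\tau>\kappa$ one has $\Pu\gg\Ps$, so moving outward from a random $u_0$ one meets unhappy nodes long before any stable interval; indeed the paper proves that with high probability there is no $\tau'$-stable subinterval anywhere in $[l_{k_0},r_{k_0}]$ for $\tau'$ slightly below $\tau$. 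Your ``protected-core'' seeds therefore do not exist in the region you need them. (Separately, a window of length $O(w)$ having $\alpha$-fraction at least $\tau$ is the \emph{typical} event when $\tau<\tfrac12$, not one of probability $e^{-wI(\tau)}$; to get a protected core you would need fraction $\approx 2\tau$ in a half-window, and that event is rarer than unhappiness precisely when $\tau>\kappa$.)

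The actual seeds are the unhappy (high-bias) nodes $l_i,r_i$ themselves, spaced $\gtrsim e^{w/d}$ apart, and firewalls are created \emph{dynamically} rather than being present initially. The core of the argument is a dichotomy: once a high-bias node begins to flip, the cascade in its neighbourhood either dies after $O(w/k_1)$ flips---too few to manufacture a stable interval, given none were nearly present to begin with---or it converts the entire neighbourhood into a firewall. A biased-random-walk comparison then shows each cascade succeeds with probability at least some $\delta>0$ independent of $w$, so after $k_0\approx\delta^{-1}\log(1/\epsilon)$ independent attempts on each side one gets firewalls bracketing $u_0$. Establishing this dichotomy requires controlling the bias profile near each $l_i$ (the paper's ``smoothness'' lemmas) and proving each cascade runs to completion before interference from neighbouring $l_j$'s arrives; this, together with the final step showing that when the two bracketing firewalls are of opposite type $u_0$ is still absorbed into one of them, is where the real work lies. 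Your Step~2 would need to be rebuilt around this picture: note in particular that it is unhappy $\beta$-nodes adjacent to an $\alpha$-firewall that flip to extend it, not unhappy $\alpha$-nodes, and the ``monotone expansion from a pre-existing stable core'' mechanism you describe is not available.
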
 
      
        So, to summarise Theorem \ref{3} less formally, increasing $\tau$ beyond $\kappa$ suddenly causes high levels of  segregation, in the form of run-lengths which are exponential in $w$. Furthermore, by analysing the proof of Theorem \ref{3}, we shall be able to prove (a formalised version of the statement) that  increasing $\tau$ in this interval actually decreases  run-lengths.   
      The next case is that dealt with in \cite{BK}, where one sees polynomially bounded run-lengths:

      \begin{thm}[\cite{BK}] Suppose $\tau=\frac{1}{2}$. There exists a constant $c < 1$  such that for all $\lambda > 0$, the probability that $u$ chosen uniformly at random will belong to a run of length greater than $\lambda w^2$ in the final configuration,  is bounded above by $c^{\lambda}$. 
      \end{thm}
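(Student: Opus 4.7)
The plan is to identify ``firewalls''---stable local configurations that, once present, persist forever and block opposite-type runs from extending through them---and then argue that firewalls of each type form with positive probability in every window of length $\Theta(w^2)$, so that excessively long runs become exponentially rare.

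First I would establish the firewall machinery. At $\tau = 1/2$, call any run of $w+1$ consecutive same-type nodes a firewall. Each such node has at least $w+1$ same-type neighbours (counting itself) within its $(2w+1)$-element neighbourhood and is therefore happy. Since for $\tau \leq 1/2$ the swap rule automatically leaves each participant with at least as many same-type neighbours as before, a node that is ever happy remains happy throughout the process, and hence firewall nodes never move. A $\beta$-firewall consequently prevents any $\alpha$-run in the final configuration from extending through it: if $u$ has final type $\alpha$ and $\beta$-firewalls lie within distance $D$ of $u$ on both sides, then the $\alpha$-run containing $u$ has length at most $2D$.

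The heart of the argument is to show that, within any window of length $Cw^2$ for a suitable absolute constant $C$, the final configuration contains a firewall of each type with probability at least some $p>0$. The key is to identify a ``firewall incubator''---an initial local pattern that reliably evolves into a firewall despite the globally defined dynamics. A natural candidate is a sub-window of length $\Theta(w)$ exhibiting a $\sqrt{w}$-scale excess of (say) $\alpha$'s, which appears in a block of length $Cw^2$ with constant probability by a scanning argument applied to the i.i.d.\ Bernoulli initial configuration. Inside such a window the $\alpha$'s are already happy and remain so, while the $\beta$'s are unhappy and over time swap out with unhappy $\alpha$'s from elsewhere on the circle, each swap replacing a window-$\beta$ with a window-$\alpha$ and reinforcing the imbalance. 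A Wormald-type differential equation tracking the $\alpha$-density in the window then shows that, with probability bounded away from $0$ uniformly in $w$, the window eventually contains a run of $w+1$ $\alpha$'s; by symmetry the same applies to $\beta$-firewalls.

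The final step is combinatorial. Since the initial configurations on disjoint blocks of length $Cw^2$ are independent, the events ``block $B_k$ contains no incubator for a $\beta$-firewall'' are independent across $k$, and the probability that $\lambda/C$ consecutive blocks fail is at most $(1-p)^{\lambda/C}$; combined with the barrier property this yields the required $c^\lambda$ bound. The main obstacle is the incubator analysis: because the swap dynamics are global, one must show that firewall formation inside a window is robust to arbitrary external swaps that could remove $\alpha$'s from the window before the firewall has solidified. Verifying Wormald's drift and Lipschitz hypotheses for the local $\alpha$-count, and ensuring the incubator robustly produces a firewall regardless of the external schedule, is the delicate technical step, corresponding to the firewall-incubator machinery developed in \cite{BK}.
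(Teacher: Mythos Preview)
The paper does not contain a proof of this theorem: it is quoted from \cite{BK}, and Section~\ref{half} simply states that this case ``was dealt with in \cite{BK}.'' So there is no in-paper proof to compare against. Your sketch is broadly consistent with what the present paper says about the \cite{BK} argument---local stable configurations (firewalls), the notion of ``firewall incubators,'' and the use of Wormald's differential-equation method to handle the global swap dynamics are all explicitly mentioned here as the ingredients of the \cite{BK} proof.

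That said, your firewall-stability paragraph contains a genuine error. You write that ``a node that is ever happy remains happy throughout the process,'' deducing this from the fact that each \emph{participant} in a swap keeps at least as many same-type neighbours. This is false: the swap guarantee applies only to the two swapping nodes, not to their bystander neighbours. A happy $\alpha$-node $u$ can certainly become unhappy if an $\alpha$-neighbour of $u$ swaps to $\beta$. The correct argument (given in Section~\ref{notation} of this paper) is collective: in a run of $w+1$ same-type nodes, each node is happy \emph{because the other $w$ nodes of the run lie in its neighbourhood}; since all $w+1$ are happy, none is ever selected to swap, so the run persists. Your conclusion about firewalls is right, but the justification you gave does not establish it.

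You correctly flag the real difficulty yourself: the incubator must be shown to produce a firewall robustly against an arbitrary external swap schedule, and the independence argument across blocks applies only to the \emph{initial} configuration, not to the dynamics. Your sketch waves at Wormald here but does not indicate how to set up the local state vector or verify the hypotheses; this is precisely the substantial technical content of \cite{BK}, and your outline does not go beyond naming it.
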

       
   The final case is when $\tau>\frac{1}{2}$. Here a combinatorial argument can be used to argue that complete segregation will eventually occur. Note that, if $\frac{1}{2}<\tau\leq   \frac{w+1}{2w+1}$, then the process is identical to that for $\tau=\frac{1}{2}$, since in both cases a node requires $ w+1$ many nodes of its own type in its neighbourhood  in order to be happy. 
       
   \begin{thm} \label{5}  Suppose that $\tau>\frac{1}{2}$, and that $w$ is sufficiently large that $\tau>
   \frac{w+1}{2w+1}$. Then, with probability tending to 1 as $n\rightarrow \infty$, the initial configuration is such that complete segregation is inevitable.\footnote{Note that, while Theorem \ref{5} requires   $\tau> \frac{w+1}{2w+1}$, Theorem \ref{3} does not. Briefly, this is because the choice of $d$ can be made to so as to deal with finitely many small $w$ anyway, meaning that Theorem \ref{3} is essentially a statement about what happens  for large $w$.}  
   \end{thm}

 \begin{figure} \label{firsta} 
\includegraphics[scale=0.6, clip =true, trim=4cm 0.5cm 4cm 0.5cm]{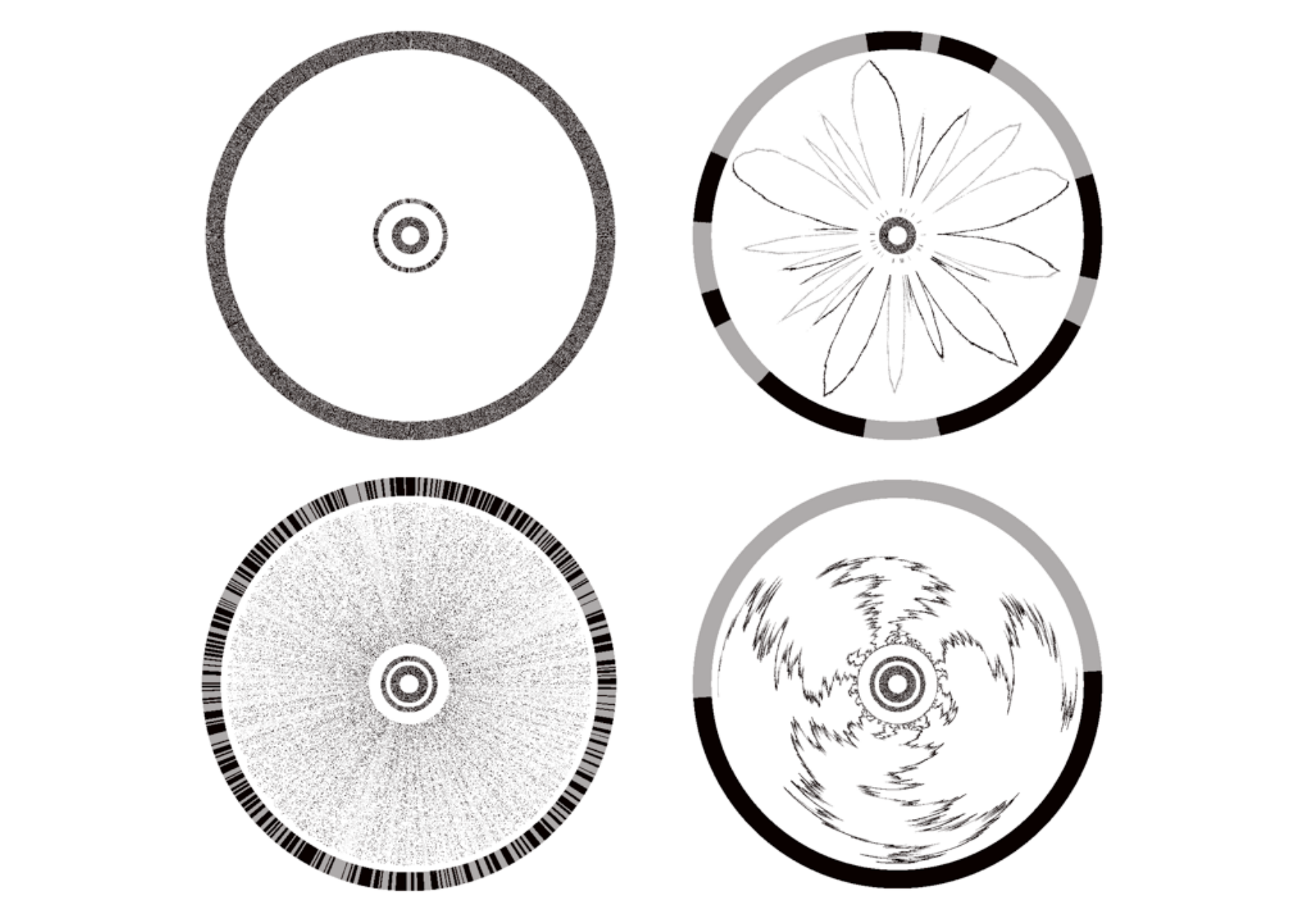}
 \caption{Threshold behaviour: top left $\tau=0.3$; top right $\tau= 0.38$; bottom left $\tau= 0.5$; bottom right $\tau =0.7$. }
\label{fig:thresbehav}
\end{figure}  

We have constructed a program which efficiently simulates the process. The outcomes of some simulations are  illustrated in Figure \ref{fig:thresbehav}. In the processes depicted here the number of nodes $n=100000$, $w=60$ and in the diagrams individuals of type $\alpha$ are coloured light grey and individuals of type $\beta$ are coloured black (much larger simulations are also illustrated in Section \ref{half}).  The inner ring displays the initial  mixed configuration (in fact the configuration is sufficiently mixed that changes of type are not really visible, so that the inner ring appears dark grey). The outer ring displays the final configuration. Just immediately exterior to the innermost ring are second and third inner rings, which display individuals which are unhappy in the initial configuration and individuals belonging to `stable' intervals in the initial configuration respectively (stable intervals will be defined subsequently, and in fact, this third ring is empty in these examples except for the case $\tau=0.3$).  The process by which the final configuration is reached is indicated in the space between the inner rings and the outer ring in the following way: when an individual changes type this is indicated with a mark,  at a distance from the inner rings which is proportional to the time at which the change of type takes place. In fact, for the case $\tau> \frac{1}{2}$ one has to be a little careful in talking about the `final' configuration -- there will, almost certainly, always be unhappy individuals of both types able to swap, but once a completely segregated configuration is reached all future configurations must remain completely segregated.     \\
 
 We shall also be interested in a variant of the model, which we shall call the \emph{simple model} (also see \cite{BEL}), and which proceeds in the same way except that at each stage \emph{one} unhappy node $u$ is selected uniformly at random and changes type so long as this does not cause it to have less nodes of its own type within $\mathcal{N}(u)$. The process ends when no more legal changes are possible. One might justify interest in this version of the model in various ways. Firstly there are a number of situations in which it is much easier  to work with (in one instance here, and also in \cite{BK}, results are proved for  the standard model by first considering the simple model and then arguing that the proof can be extended to the standard case). The simple variant of the model also makes sense as soon as one drops the assumption that we are working within a closed system. One might suppose that unhappy individuals living in a city will move to a location in the same city if one should be available, but will move elsewhere otherwise, and similarly that individuals will move into the city to fill locations becoming vacant. Lastly, this simple variant of the model is also closer to the various spin-1 models normally studied in statistical mechanics (although versions of the Ising model with `Kawasaki' dynamics involving particles which swap location are also considered there).   In all cases we shall prove the same results for both models, except for the case $\tau>\frac{1}{2}$, where the  simple model eventually yields a society of nodes all of one type.  The rough conjecture is that the simple model accurately describes local behaviour for the standard model.  \\
 
 In Section \ref{notation} we shall discuss terminology and we shall make some easy observations about how the evolution of the model can be understood. In Section \ref{taulessthankappa} we prove Theorem \ref{1}. In Section \ref{hardtau} we prove Theorem \ref{3}. In Section \ref{half} we discuss the case $\tau=\frac{1}{2}$, which was already dealt with in \cite{BK}, and in  Section \ref{complete} we prove Theorem \ref{5}.  Sections \ref{def1}, \ref{def2} and \ref{def3} deal with deferred proofs from Sections   \ref{taulessthankappa}, \ref{hardtau} and  \ref{complete} respectively. 
  
\section{Some notation, terminology and some easy observations} \label{notation} 

In describing the model earlier, we talked in terms of nodes or individuals swapping locations at various stages of the dynamic process. To work in this way, however, requires one to draw a distinction between individuals and the locations they occupy at each stage, and to maintain a bijective map at each stage between these two sets. In fact, it is notationally easier  to consider a process whereby one simply has a set of $n$ nodes, with two unhappy nodes of opposite type selected at each stage (if such exist), which may then both change type (when this occurs we shall still refer to the nodes as `swapping', but now they are swapping type rather than location). 
Thus nodes are identified with indices for their locations amongst the set $\{ 0,1,...,n-1 \}$, and unless stated otherwise, addition and subtraction on these indices are performed modulo $n$. In the context of discussing a node $u_1$, for example,  we might refer to the immediate neighbour on the right as node $u_1+1$.  Since we work modulo $n$ it is worth clarifying some details of the interval notation: for  $0\leq b<a<n$, we let $[a,b]$ denote the set of nodes (`interval')  $[a,n-1] \cup [0,b]$ (while $[b,a]$ is, of course, understood in the standard way). 

As noted before, for any node $u$, we let $\mathcal{N}(u)$ denote the neighbourhood of $u$, which is  the interval $[u-w, u+w]$. For any set of nodes $I$, suppose that $x$ is the number of $\alpha$ nodes in $I$, while $y=|I|-x$. Then $\Theta(I):=x-y$ and is called the \emph{bias} of $I$. By the bias of a node we mean the bias of its neighbourhood.  Recall that by a \emph{run} of length $m+1$ we mean an interval $[u,u+m]$ in which all nodes are of the same type. 

We shall be particularly interested in  local configurations which are stable, in the sense that certain nodes in them  can never be caused to change type. Note that if an interval of length $w+1$ contains at least $\tau (2w+1)$ many $\alpha$ nodes, then each of those $\alpha$ nodes is happy so long as the others do not change type, meaning that, in fact, no $\alpha$ nodes in that interval will ever change type. We say that such an interval of length $w+1$ is $\alpha$-stable (and similarly for $\beta$). An interval of length $w+1$ is \emph{stable} if it is either $\alpha$-stable or $\beta$-stable. We shall also make use of a particular kind of stable interval which was used in \cite{BK}:   a \emph{firewall}  is a run of length at least $w+1$. 
We write `for $0\ll w\ll n$', to mean `for all sufficiently large $w$ and all $n$ sufficiently large compared to $w$'. \\

 \textbf{Arguing that the process ends.} We define the \emph{harmony index} corresponding to any given configuration to be the sum  over all nodes of the number of  their own type within their neighbourhood. For $\tau \leq \frac{1}{2}$, this harmony index is easily seen to strictly increase whenever an unhappy node changes type, which combined with the existence of an upper bound $n(2w+1)$, implies that the process must terminate after finitely  many stages. For $\tau>\frac{1}{2}$, we shall argue that with probability tending to 1 as $n\rightarrow \infty$, the initial configuration is such that complete segregation eventually occurs with probability 1. Once complete segregation has occurred it is easy to see that all future states must be completely segregated, but that `rotations' can occur, i.e.\ if the nodes of type $\alpha$ are precisely the interval $[a,b]$ at stage $s$, then at stage $s+1$ they must be either $[a-1,b-1],[a,b]$ or $[a+1,b+1]$.

\section{The case \texorpdfstring{$\tau<\kappa$}{tau less than kappa}} \label{taulessthankappa}

The analysis here is identical for the standard and simple models. Of course, we are yet to explain how $\kappa$ comes to be defined as described previously  -- we shall do so shortly. 

 \textbf{Outline.} The basic idea is that we wish to find $\kappa$, which is that value of $\tau$ at which stable intervals become more likely than unhappy nodes in the initial configuration. For such $\tau$, taking $w$ large, we shall have that stable intervals are \emph{much} more likely than unhappy nodes in the initial configuration. With a little bit more work one can then show that, for a node  $u$ which is selected uniformly at random, we shall almost certainly find stable intervals of both types on either side of $u$ before any unhappy element. The following lemma then suffices to establish that, given such an initial configuration, $u$ can never change type. 

\begin{lem} \label{stopsem} 
Suppose that, in the initial configuration,  $u_1$ and $u_2$ each belong to (possibly different) $\alpha$-stable intervals, and that there are no unhappy $\alpha$ nodes in $[u_1,u_2]$.  Then no $\alpha$ node in $[u_1,u_2]$ will ever become unhappy. A similar result holds for $\beta$. 
\end{lem} 
\begin{proof} 
Suppose otherwise, and let $v$ be the first $\alpha$ node in the interval $[u_1,u_2]$ to become unhappy. In order for $v$ to become unhappy, another $\alpha$ node $v'\in \mathcal{N}(v)$ must change to type $\beta$. Since $v'\notin [u_1,u_2]$, we either have $v'<u_1 \leq v$, or else $v\leq u_2 <v'$. Suppose that the first case holds, the other is similar. Then, together with the fact that any $\alpha$-stable interval to which $u_1$ belongs is of length $w+1$, $v'\in \mathcal{N}(v)$ implies that $v$ belongs to any stable interval to which $u_1$ belongs, and so cannot become unhappy. This  gives the required contradiction.  
\end{proof}

 \textbf{Finding} $\kappa$. We are interested in the probability that a randomly chosen node belongs to a $\beta$-stable interval in the initial configuration.  As a first approximation, however, we begin by asking the following slightly simpler version of that question: given some node $u$, chosen uniformly at random and of type $\beta$, say, what is the probability that $[u,u+w]$ is $\beta$-stable in the initial configuration? This can be modelled with a binomial distribution $X \sim b(w, \frac{1}{2})$, from which we are interested in the probability $\Ps = \textbf{P}(X \geq (2w+1) \tau -1)$.

Similarly, an $\alpha$ node is unhappy if its neighbourhood contains more than  $(2w+1)(1 - \tau)$ many $\beta$ nodes. We model this as  $Y \sim b(2w, \frac{1}{2})$, from which we are interested in the probability $\Pu = \textbf{P}(Y > (2w+1)(1 - \tau))$. We are interested in finding conditions on $w$ and $ \tau$ which ensure: 
  $$\frac{\Pu}{\Ps}<1 \ \ \mbox{or}\ \ \frac{\Pu}{\Ps}< \eps.$$

First, notice that if $\tau \leq \frac{1}{4}$, then $\Ps \geq \frac{1}{2}$ for all $w$, while $\Pu \to 0$ as $w \to \infty$. Thus for all large enough $w$, we shall have $\frac{\Pu}{\Ps}< \eps$.  Thus, as we look to find $\kappa$,  we shall assume that $\frac{1}{4} < \tau < \frac{1}{2}$. Our key probabilistic fact will be the following result concerning the binomial distribution, which is presumably known, and which is proved in Section \ref{def1}:

\begin{lem} \label{lem:binom}
Suppose $h: \mathbb{N} \to \mathbb{N}$ and $p \in (0,1)$ are such that there exists $k \in (0,1)$ so that for all large enough $N$, we have $\left( 1+ \left( \frac{1}{p} -1 \right) k \right) h(N)> N \geq h(N)> pN > 0$. Then for all large enough $N$, if $X_N \sim b(N,p)$, we have 
$$P \left( X_N = h(N) \right) \ \ \leq \ \  \textbf{P} \left( X_N \geq h(N) \right) \ \ \leq \ \ \left( \frac{1}{1-k} \right) \cdot \textbf{P} \left( X_N = h(N) \right).$$
That is to say in asymptotic notation, $\textbf{P} \left( X_N \geq h(N) \right)= \Theta \left( \textbf{P} \left( X_N = h(N) \right) \right)$.
\end{lem}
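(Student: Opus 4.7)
The plan is a standard geometric-tail argument on the individual binomial probabilities. Writing $h = h(N)$ and $q_j = P(X_N = j) = \binom{N}{j} p^j (1-p)^{N-j}$, I would exploit the consecutive-ratio identity
$$\frac{q_{j+1}}{q_j} \;=\; \frac{(N-j)\,p}{(j+1)(1-p)},$$
which is manifestly decreasing in $j$. So in order to control $q_{j+1}/q_j$ throughout the upper tail $j \geq h$, it suffices to bound it at the single index $j = h$.

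First I would show this ratio at $j=h$ is bounded by $k$. The hypothesis $\left(1 + \left(\tfrac{1}{p}-1\right)k\right) h > N$ rearranges to $(N-h)/h < (1-p)k/p$, and since $(N-h)/(h+1) \leq (N-h)/h$, multiplying through by $p/(1-p)$ yields $q_{h+1}/q_h < k$. By monotonicity of the ratio in $j$, we then have $q_{h+i+1}/q_{h+i} \leq k$ for every $i \geq 0$, and hence inductively $q_{h+i} \leq k^i q_h$.

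Summing the resulting geometric series gives
$$P(X_N \geq h) \;=\; \sum_{i=0}^{N-h} q_{h+i} \;\leq\; q_h \sum_{i=0}^{\infty} k^i \;=\; \frac{q_h}{1-k},$$
which is the required upper bound; the lower bound $q_h \leq P(X_N \geq h)$ is immediate since $\{X_N = h\}$ is one of the events in the union. The remaining hypotheses $h \leq N$ and $h > pN$ are used only to ensure that $h$ is a legal binomial index and lies strictly above the mode, so that the upper tail is genuinely geometrically decaying. There is no real obstacle: the proof is essentially algebraic once the ratio identity is noticed, the only mild care needed being the off-by-one between $h$ and $h+1$ in the denominator of the ratio, absorbed by the trivial bound $(N-h)/(h+1) \leq (N-h)/h$.
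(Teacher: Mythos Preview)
Your proposal is correct and is essentially identical to the paper's own proof: both compute the consecutive-term ratio $\frac{p}{1-p}\cdot\frac{N-j}{j+1}$, observe it is decreasing in $j$, bound it at $j=h$ by $k$ via the hypothesis, and sum the resulting geometric series. Your explicit handling of the off-by-one via $(N-h)/(h+1)\le (N-h)/h$ is slightly more careful than the paper, but the argument is otherwise the same.
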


Applying Lemma \ref{lem:binom} in our setting with $p=\frac{1}{2}$, $N=w$, and $1>k> \frac{\frac{1}{2}- \tau}{\tau}$ gives us that $\ds \Ps \approx \frac{1}{2^w} \bpm w \\ h \epm $ where $h=\lceil (2w+1) \tau \rceil-1$ and where by ``$\approx$'' we mean the asymptotic notion $\Theta$.

Similarly, taking $N=2w$ and $1>k'>\frac{\tau}{1- \tau}$ gives us $\ds \Pu \approx \frac{1}{2^{2w}} \bpm 2w \\ h' \epm$, where $h'=\lfloor (2w+1)(1 - \tau))\rfloor +1$.
 Thus  $$\frac{\Pu}{\Ps} \approx \frac{1}{2^w} \cdot \frac{\bpm 2w \\ h' \epm}{\bpm w \\ h \epm}.$$

 We now employ Stirling's formula, that $n! \approx n^{n + \frac{1}{2}} e^{-n}$. Then, the powers of $e$ cancel and we see:

\begin{equation} \label{R2} \frac{\Pu}{\Ps} \approx \frac{1}{2^w} \cdot  \frac{(2w)^{2w+\frac{1}{2}}(h)^{h+\frac{1}{2}}(w-h)^{w-h+\frac{1}{2}}}{w^{w+\frac{1}{2}}(h')^{h'+\frac{1}{2}}(2w-h')^{2w-h'+\frac{1}{2}}}. \end{equation}

 Now, approximating $h$ by $2w \tau$ and approximating $h'$ by $ 2w(1-\tau)$ (which we shall justify shortly) we get:

$$ \frac{\Pu}{\Ps} \approxeq {(2w)^w} \cdot \frac{(2w \tau)^{2w \tau +\frac{1}{2}}\left(w(1-2 \tau) \right)^{w(1- 2 \tau) + \frac{1}{2}} }{(2w(1-\tau))^{2w(1-\tau)+\frac{1}{2}} (2w \tau)^{2w \tau +\frac{1}{2}}}.$$

and so \begin{equation} \label{R3} \frac{\Pu}{\Ps} \approxeq \left( \frac{ \left(\frac{1}{2}- \tau \right) ^{(1- 2 \tau)} }{(1-\tau)^{2(1-\tau)}} \right)^w. \end{equation}

 Here, we write $f\approxeq g$ to mean that there are rational functions $P$ and $Q$ such that $P(w),Q(w)>0$ and $P(w)g(w) \leq f(w) \leq Q(w)g(w)$.

\begin{figure}
\includegraphics[width=3.4in]{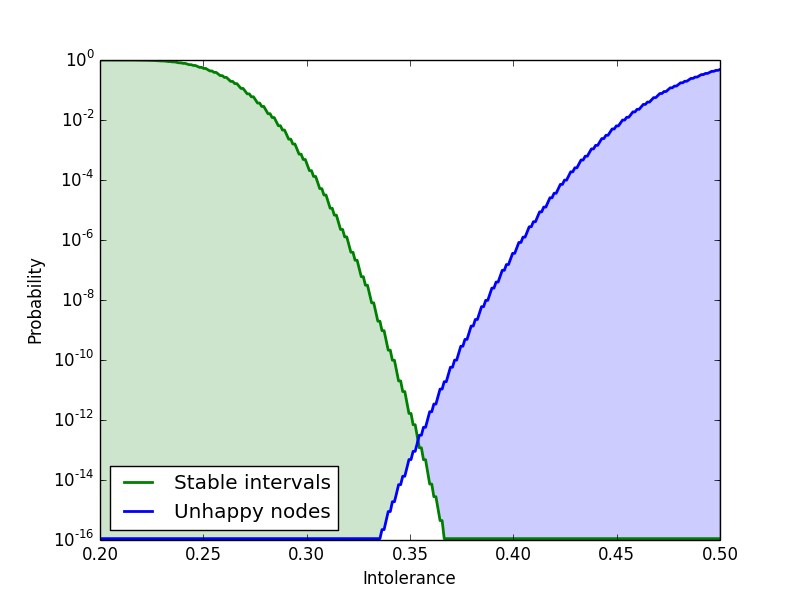}
\caption{Probabilities (log scale) for $w=300$.}
\label{fig:sample_figure}
\end{figure}

Now put $f(x)=x^{2x}$. Then the value $ \kappa$ we are looking for is  exactly $\tau$  such that
\[  f \left( \frac{1}{2}-\tau \right)=f(1-\tau).\]  In other words  $x=\frac{1}{2}-\kappa$ is exactly such that
\[  f(x)=f\left( x+\frac{1}{2} \right).\]  Since $f$ has a unique turning point at $e^{-1}$, it follows that  $\kappa$ is unique, and   numerical analysis gives \[ \kappa \approx 0.353092313\]  
 (which is just slightly less than $\sqrt{2}/4$).

It remains to justify the approximations introduced for $h$ and $h'$ after \ref{R2}. Well, $h = 2w \tau + \delta$ where $-1 < \delta < 1$. Thus $h^{h+\frac{1}{2}} = (2w \tau + \delta)^{\delta +\frac{1}{2}} \cdot  (2 w \tau)^{2 w \tau} \cdot \left(1 + \frac{\delta}{2w \tau} \right)^{2w \tau}$. The final term lies within $[e^{-1}, e]$. Thus the necessary correction amounts to multiplying or dividing $(2w \tau)^{2w \tau +\frac{1}{2}}$ by at most a polynomial term in $w$. Similar small polynomial corrections may be required for the other terms in \ref{R2}, but the exponential nature of \ref{R3} guarantees that these correcting terms will play a negligible role for large $w$.\\

There were various other simplifications which were made in the  analysis above, and which must now be attended to.  In defining $\Ps$, we considered the probability that a $\beta$ node belongs to a $\beta$-stable interval, rather than the probability that a node chosen uniformly at random belongs to a $\beta$-stable interval. 
We also considered a  specific interval of length $w+1$ to which a node might belong and which might be stable, namely $[u,u+w]$, and so did not overtly take account of the fact that a given node belongs to $w+1$ many intervals of length $w+1$ which might be stable, namely $[u-\ell,u-\ell+w]$ for $0 \leq \ell \leq w$. Again, the exponential nature of (\ref{R3}) accommodates all such modifications, which only cause changes to $\Pu$ and $\Ps$ which are are polynomially bounded in $w$. While (\ref{R3}) was derived under the assumption that $\frac{1}{4}<\tau<\frac{1}{2}$, in fact we can deduce that the following statement holds more generally: 

\begin{enumerate} 
\item[($\dagger_0$)]  Consider the initial configuration. For $u$ chosen uniformly at random, let $\Ps$ be the probability that $u$ belongs to an $\alpha$-stable interval, and let $\Pu$ be the probability that  $u$ is an unhappy $\alpha$ node.  If  $\tau <\kappa$ and $k,r>0$, then for all sufficiently large $w$, $\Ps> kw^r\cdot \Pu$. When $\tau >\kappa$, $\Pu> kw^r\cdot \Ps$. A similar result holds for $\beta$. \end{enumerate}

 In order to see that $(\dagger_0)$ holds outside the interval $\frac{1}{4}<\tau <\frac{1}{2}$, (for the non-trivial cases) one can simply compare the relevant probability ratios with those for $\tau$ just inside the interval. \\

 \textbf{Completing the argument}. As hinted previously, we need something more than $(\dagger_0)$  though. For any $u$, let $x_u$ be the first node to the left\footnote{By the first node to the left of $u$ satisfying a certain condition we mean the first in the sequence $u,u-1,u-2,\cdots$  which satisfies the condition.} of $u$ which, in the initial configuration, is either an unhappy $\alpha$ node, or else belongs to an $\alpha$-stable interval. We must show that when $\tau<\kappa$ and $w$ is sufficiently large, it is almost certainly the case that $x_u$ belongs to an $\alpha$-stable interval (and similarly for $\beta$ and to the right of $u$).  With this in place, the result then follows directly from Lemma \ref{stopsem}. 
 When combined with $(\dagger_0)$, the following lemma (which is stated in such a way as to be general purpose, so we can apply it again later) therefore completes our proof, and is proved in Section \ref{def1}.\footnote{For the reader who is concerned that we need to make four applications of Lemma \ref{gen} to various different events $P_u$ and $Q_u$ which are not independent (with $P_u$ either being the event that $u$ belongs to an $\alpha$-stable interval or that $u$ belongs to a $\beta$-stable interval, and with $Q_u$ either being the event that $u$ is an unhappy $\alpha$ node or the event that it is an unhappy $\beta$-node), note that the probability that any one of four unlikely events occurs is at most the sum of their individual probabilities.}

\begin{lem} \label{gen}  

Let $P_u$ and $Q_u$ be events which only depend on the neighbourhood of $u$ in the initial configuration, meaning that if the neighbourhood of $v$ in the initial configuration is identical that of  $u$ (i.e. for all $i \in [-w,w]$, $u+i$ is of the same type as $v+i$), then $P_u$ holds iff $P_v$ holds and $Q_u$ holds iff $Q_v$ holds. Suppose also that:
\begin{enumerate}[(i)]  
\item $\textbf{P}(P_u)\neq 0$ and $\textbf{P}(Q_u)\neq 0$. 
\item  For all $k$, for all sufficiently large $w$, $\textbf{P}(P_u)/\textbf{P}(Q_u) >kw$.
\end{enumerate}   
For any $u$, let $x_u$ be the first node to the left  of $u$ such that either $P_{x_u}$ or $Q_{x_u}$ holds.  For any $\epsilon>0$, if $0\ll w \ll n$ then the following occurs with probability $>1-\epsilon$ for $u$ chosen uniformly at random:  $x_u$ is defined and for no node $v$  in $[x_u-2w,x_u]$ does $Q_v$ hold.

An analogous result holds when `left' is replaced by `right'. 
\end{lem}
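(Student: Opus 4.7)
Writing $p=\mathbf{P}(P_u)$ and $q=\mathbf{P}(Q_u)$ (which by translation invariance of the i.i.d.\ initial distribution do not depend on $u$), the plan is to fix a window length $L=L(w)$ and argue that, with high probability, both (a) some $v\in[u-L,u]$ satisfies $P_v$, and (b) no $v\in[u-L-2w,u]$ satisfies $Q_v$. Let $M$ (respectively $N$) denote the largest index $v\le u$ at which $P_v$ (resp.\ $Q_v$) holds, taking the value $-\infty$ when no such $v$ exists. Then (a) forces $M\ge u-L$ and (b) forces $N<u-L-2w$, so $x_u=\max(M,N)=M$ is defined and $[x_u-2w,x_u]\subseteq[u-L-2w,u]$ contains no node at which $Q$ holds, which is exactly the conclusion required.

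For (a), the crucial observation is that each $P_v$ depends only on the neighbourhood of $v$, and thus the events $P_{u-i(2w+1)}$ for $i=0,1,2,\ldots$ have pairwise disjoint neighbourhoods and so are mutually independent. I would take $m=\lfloor L/(2w+1)\rfloor$ such samples inside the window; the failure probability of (a) is then at most $(1-p)^m\le e^{-pm}$. For (b), a direct union bound gives a failure probability of at most $(L+2w+1)q$.

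The balancing of parameters is the heart of the argument and is really the only obstacle. Setting $L=\lceil Cw/p\rceil$ for a constant $C$ to be chosen, one gets $pm$ of order $C/2$, so the bound for (a) is at most (say) $e^{-C/3}$ once $w$ is large. Using the ratio hypothesis $q<p/(kw)$, the bound for (b) becomes at most $(C+O(1))/k$. Given $\epsilon>0$, I would first fix $C$ so that $e^{-C/3}<\epsilon/2$, and then fix $k$ large enough that $(C+O(1))/k<\epsilon/2$; hypothesis (ii) guarantees $p/q>kw$ for all sufficiently large $w$, so both estimates apply and sum to less than $\epsilon$. The window length $L$ may grow quickly with $w$ (in the intended applications, exponentially), so we implicitly need $n$ sufficiently large compared to $w$ and $\epsilon$; this is precisely the ``$0\ll w\ll n$'' regime. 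The analogous statement with ``right'' in place of ``left'' follows from reflection symmetry.
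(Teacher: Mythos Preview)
Your proof is correct and takes a genuinely different route from the paper's. The paper argues via an iterative colouring scheme: starting from a random $t_0$, it repeatedly locates $v_s=x_{t_s}$, colours $t_s$ white or black according to whether any $Q$ occurs in $[v_s-2w,v_s]$, then jumps to $t_{s+1}=v_s-(2w+1)$ and continues around the circle. A counting argument shows that each step contributes at most $2w+1$ nodes of ``type~1'' (where $P$ holds and no nearby $Q$) while each black step contributes at least one node of ``type~2'' (where $Q$ holds); hence the global ratio $\rho$ of type-1 to type-2 nodes on the circle bounds $(2w+1)/(1-\pi')$, where $\pi'$ is the white fraction. The law of large numbers relates $\rho$ to $p_1/p_2$, and the ratio hypothesis then forces $\pi'$ close to~$1$.

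Your approach is more direct and more elementary: you fix a deterministic window of length $L\sim Cw/p$, exploit the genuine mutual independence of the events $P_{u-i(2w+1)}$ (their neighbourhoods being disjoint) to get an exponential bound $(1-p)^m\le e^{-pm}$ on the failure of~(a), and a plain union bound for~(b). This sidesteps both the law of large numbers and the somewhat delicate identification of the empirical colouring fraction $\pi'$ with the target probability $\pi$ in the paper's argument. What the paper's iterative framework buys is reusability: the same skeleton is adapted later (Lemma~\ref{lismooth}) to a setting where no clean independence is available and a multi-colour counting argument is genuinely needed. For the present lemma in isolation, your argument is the cleaner one.
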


\section{The case \texorpdfstring{$\kappa <\tau <\frac{1}{2}$}{tau between kappa and 1/2}} \label{hardtau}   
We work first with the simple model, and then supply the necessary modifications for the standard case. 

In what follows we shall work with some fixed  $\tau$ in the interval $\left( \kappa,\frac{1}{2} \right)$, some fixed $\epsilon>0$,  and we shall assume that $n$ is large compared to $w$. We want to show that there exists a constant $d$ such that for all sufficiently large $w$ the probability that a randomly chosen node  will belong to a run of length $\geq e^{w/d}$ (in the final configuration) is greater than $1-\epsilon$. Of course, proving the result for all sufficiently large $w$ suffices to give the result for all $w$ since one can simply adjust the choice of $d$ to deal with finitely many small values, but we shall make frequent use of the fact that we need only work for all sufficiently large $w$ in what  follows and so stating the theorem in this way is instructive. \\

Our entire analysis takes place relative to a node $u_0$, chosen uniformly at random. Roughly, the aim is to establish that in the final configuration $u_0$ very probably belongs to a firewall of considerable length. The argument consists essentially of two main parts: first we consider what can be expected from the vicinity\footnote{To be clear, the term `vicinity' of $u_0$ is used informally here, to mean an interval containing $u_0$, which may be large compared to $w$ but which is small compared to $n$.} of $u_0$ in the initial configuration, and then we consider how events are likely to develop in subsequent stages. 

Before we begin with the technicalities,  let us consider very informally what  can be expected from the vicinity of $u_0$ in the initial configuration. Since $\kappa<\tau<\frac{1}{2}$, for large $w$ we shall have that unhappy nodes are much more likely than stable intervals, but that unhappy nodes themselves are few and far between. Starting at $u_0$ and moving to the left, (since $n$ is large) we can expect to find a first unhappy node, $l_1$ say, and it will very likely be the case that $[l_1,u_0]$ is an interval of considerable length, containing no stable intervals. To the left of $l_1$ and inside $\mathcal{N}(l_1)$, there may be some other unhappy nodes. If we move now to $l_1-(2w+1)$, however,  and repeat the process (with $l_1-(2w+1)$ taking the place of $u_0$),   then so long as $w$ is large enough, we can expect the same to happen again, i.e.\ we find a first unhappy node $l_2$ and  $[l_2,l_1-(2w+1)]$ is an interval of considerable length, containing no stable intervals. In fact, for any fixed $k$ which does not depend on $w$, if we repeat this process $k$ many times, then so long as $w$ is large enough (and how large we have to take $w$ will depend on $k$) we can be pretty sure that the same thing will happen at every one of those $k$-many steps. Now let us establish this informal picture more carefully.

\subsection*{The initial configuration} Recall that for any set of nodes $I$,  $\Theta(I)$ is the  \emph{bias} of $I$ and that by the bias of a node we mean the bias of its neighbourhood.  In general, if $x_1,...,x_k$ are independent random variables with $\textbf{P}(x_i=1)=\textbf{P}(x_i=-1)=\frac{1}{2} $ when $1\leq i \leq k$, then letting $X= \sum_{i=1}^k x_i$ Hoeffding's inequality gives,  for arbitrary $\lambda>0$ : 

\[ \textbf{P}(|X|>\lambda  \sqrt{k} ) <2e^{-\lambda^2/2}. \] 

Now we use this to bound the probability that a node $u$ has  bias in the initial configuration which will cause it to be unhappy, should $u$ be of the minority type in its neighbourhood. So, we wish to bound the probability that the number of $\alpha$ nodes in $\mathcal{N}(u)$ is $>(1-\tau ) (2w+1)$ or the   number of $\beta$ nodes in $\mathcal{N}(u)$ is $>(1-\tau ) (2w+1)$. This corresponds to a bias $\Theta(\mathcal{N}(u))$ of $> (1-2\tau ) (2w+1) $ or $<-(1-2\tau ) (2w+1) $.

\begin{defin} 
  When $ | \Theta(\mathcal{N}(u))|> (1-2\tau ) (2w+1) $ we say that $u$ has \textbf{high bias}, denoted $\mathtt{Hb}(u)$.  If this holds for the initial configuration, we say that $\mathtt{Hb}^{\ast}(u)$ holds. 
\end{defin}  

\begin{defin} \label{d} 
For the remainder of this section we define $d= \frac{2}{(1-2\tau)^2} $. 
\end{defin}

\begin{lem}[Likely happiness]  \label{unhappy}
Let $u$ be a node chosen uniformly at random. For any $\epsilon'>0$ and for all sufficiently large $w$, the probability that $\mathtt{Hb}^{\ast}(u)$ holds is $<\epsilon' e^{-w/d}$. 
\end{lem}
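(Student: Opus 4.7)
\medskip

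\noindent\textbf{Proof plan.} The event $\mathtt{Hb}^{\ast}(u)$ is exactly the event that the bias $\Theta(\mathcal{N}(u))$ of the $(2w+1)$-node initial neighbourhood of $u$ exceeds $(1-2\tau)(2w+1)$ in absolute value. Since each of the $2w+1$ nodes in $\mathcal{N}(u)$ is independently $\alpha$ or $\beta$ with probability $\tfrac{1}{2}$, the bias is a sum of $2w+1$ i.i.d.\ $\pm 1$ variables, so Hoeffding's inequality (quoted at the start of this subsection with $k=2w+1$) applies directly.

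The plan is simply to choose $\lambda$ in the Hoeffding bound so that $\lambda\sqrt{2w+1} = (1-2\tau)(2w+1)$, i.e.\ $\lambda = (1-2\tau)\sqrt{2w+1}$. Then
\[
\mathbf{P}(\mathtt{Hb}^{\ast}(u)) \ =\ \mathbf{P}\bigl(|\Theta(\mathcal{N}(u))| > \lambda\sqrt{2w+1}\bigr) \ <\ 2e^{-\lambda^2/2}
\ =\ 2\,\expp\!\left(-\frac{(1-2\tau)^2(2w+1)}{2}\right).
\]
Using the definition $d = 2/(1-2\tau)^2$ from Definition \ref{d}, the exponent equals $-(2w+1)/d$, so
\[
\mathbf{P}(\mathtt{Hb}^{\ast}(u)) \ <\ 2e^{-(2w+1)/d} \ =\ \bigl(2e^{-(w+1)/d}\bigr)\cdot e^{-w/d}.
\]
The leading factor $2e^{-(w+1)/d}$ tends to $0$ as $w\to\infty$, so for all sufficiently large $w$ it is less than any given $\epsilon'>0$, which yields the claimed bound $\mathbf{P}(\mathtt{Hb}^{\ast}(u)) < \epsilon' e^{-w/d}$.

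There is essentially no obstacle here; the lemma is a one-line application of Hoeffding with the constant $d$ chosen precisely so that a factor of $e^{-w/d}$ can be peeled off the Hoeffding bound while leaving a vanishing prefactor. The only thing to be a little careful about is that the chosen $u$ is uniform over the $n$ nodes, but the distribution of the initial types in $\mathcal{N}(u)$ is the same for every $u$ (all nodes have i.i.d.\ initial types), so averaging over $u$ contributes nothing beyond the pointwise bound.
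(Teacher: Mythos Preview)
Your proof is correct and follows essentially the same approach as the paper: apply Hoeffding's inequality with $\lambda=(1-2\tau)\sqrt{2w+1}$ to bound $\mathbf{P}(\mathtt{Hb}^{\ast}(u))$ by $2\exp\!\bigl(-(1-2\tau)^2(2w+1)/2\bigr)$, then extract $e^{-w/d}$ and observe the remaining factor tends to $0$. The paper presents the comparison via an auxiliary constant $d^{\ast}=\tfrac{2w}{2w+1}\cdot\tfrac{1}{(1-2\tau)^2}$ and the observation that $d>1/(1-2\tau)^2>d^{\ast}$, whereas you factor $2e^{-(2w+1)/d}=(2e^{-(w+1)/d})e^{-w/d}$ directly; this is purely a cosmetic difference in bookkeeping.
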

  
\begin{proof}  Putting $\lambda \sqrt{2w+1} =  (1-2\tau ) (2w+1) $ in  Hoeffding's inequality above,  we get

\[ \lambda^2/2 =(1-2\tau)^2\frac{(2w+1)}{2}, \ \ \mbox{so}\ \  e^{-\lambda^2/2}= e^{-w/d^{\ast}}\ \ \ \mbox{where}\ \ \ \ d^{\ast}=\frac{2w}{2w+1}\cdot \frac{1}{(1-2\tau)^2}.\]

 We chose $d>\frac{1}{(1-2\tau)^2}$, which means that for any $\epsilon'>0$ and for all sufficiently large $w$, the probability that $u$ has high bias in the initial configuration is $<\epsilon' e^{-w/d}$, as required.
\end{proof}

 \textbf{Defining the nodes} $l_i$ \textbf{and} $r_i$. For now, we fix some $k_0>0$. We shall choose a specific value of $k_0$ which is appropriate later -- for now, however, we make the promise that our choice of $k_0$ will not depend on $w$.  

For $1\leq i \leq k_0$  we define a node   $l_i$ to the left of $u_0$ and also a node $r_i$ to the right.  We let $l_1$ be the first node $v$ to the left of $u_0$ such that $\mathtt{Hb}^{\ast}(v)$ holds, so long as this node is in the interval $[u_0-\frac{1}{4}n,u_0]$ (otherwise $l_1$ is undefined).  Then, given $l_i$ for $i<k_0$ we let $l_{i+1}$ be the first node $v$ to the left of $l_i-(2w+1)$  such that $\mathtt{Hb}^{\ast}(v)$ holds, so long as no nodes in the interval $[l_{i+1},l_i]$ are outside the interval  $[u_0-\frac{1}{4}n,u_0]$ (otherwise $l_{i+1}$ is undefined). 
We let $r_1$ be the first node $v$  to the right of $u_0$  such that $\mathtt{Hb}^{\ast}(v)$ holds, so long as this node is in the interval $[u_0,u_0+\frac{1}{4}n]$. Given $r_i$ for $i<k_0$ we let $r_{i+1}$ be the first node $v$ to the right  of $r_i+(2w+1)$  such that $\mathtt{Hb}^{\ast}(v)$ holds, so long as no nodes in the interval $[r_i,r_{i+1}]$ are outside the interval $[u_0,u_0+\frac{1}{4}n]$. 

 The reason for considering the intervals $[u_0-\frac{1}{4}n,u_0]$ and $[u_0,u_0+\frac{1}{4}n]$ in the above, is that we wish to be able to move left from $u_0$ to $l_{k_0}$ without meeting any of the nodes $r_i$. \\

\begin{figure}
\epsfig{figure=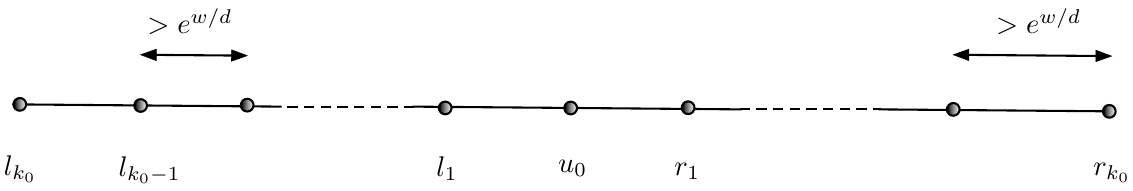,width=11cm}
\caption{Picking out nodes of high bias in the vicinity of $u_0$.}
\label{fig:definingthel_i}
\end{figure}

\begin{defin}[\textbf{Good spacing}] \label{Q0}
Let $d$ be as in Definition \ref{d}. It is notationally convenient to let $l_0=u_0=r_0$.   We let $\mbox{\textbf{Good spacing}}$   be the event that for  $1\leq i \leq k_0$:
\begin{enumerate} 
\item[(i)]  $l_i$ and $r_i$ are both defined, and;
\item[(ii)] $|l_i-l_{i-1}|>e^{w/d}$ and $|r_i-r_{i-1}|>e^{w/d}$. 
\end{enumerate} 
\end{defin} 

Note that for any fixed $w$, as $n\rightarrow \infty$ the probability that any $l_i$ or $r_i$ is undefined (for $i\leq k_0$) goes to 0. 
By Lemma \ref{unhappy}, and since the probability that any node in an interval $I$ has high bias in the initial configuration is at most $\Sigma_{u\in I} \textbf{P}(\mathtt{Hb}^{\ast}(u))$, for any $\epsilon'>0$ and for any fixed $k_0\geq 1$  we can ensure that $\textbf{P}(\mbox{\textbf{Good spacing}})>1-\epsilon'$ by taking $w$ sufficiently large (and by taking $n$ sufficiently large compared to $w$). Thus, for $0\ll w \ll n$,  the picture we are presented with is almost certainly as in Figure \ref{fig:definingthel_i}.\\


\textbf{Building the informal picture}. Recall that, by a \emph{run} of length $m+1$ we mean an interval $[u,u+m]$ in which all nodes are of the same type, and that a \emph{firewall} is a run of length at least $w+1$. The basic observation on which we now wish to build is as follows:  if the interval $[u-w,u]$ is a firewall of type $\alpha$, then when $u+1$ is of type $\beta$, it cannot be happy unless the interval $[u+1,u+w+1]$ is $\beta$-stable. So, since we are dealing with $\tau\leq \frac{1}{2}$: 

\[ \parbox{11cm}{Firewalls will spread until they hit stable intervals of the opposite type.\footnote{When dealing with the standard rather than the simple model, this is only true so long as there exist unhappy nodes of both types. Later, our analysis using Wormald's machinery will address this issue. }} \] 

With this in mind, let us now consider informally what can be expected to happen in the neighbourhood of $l_i$. Suppose that $l_i$ is initially of type $\beta$ and is unhappy in the initial configuration. Then with  probability close to 1 for sufficiently large $w$, there will not be any unhappy nodes of type $\alpha$ in the neighbourhood of $l_i$ in the initial configuration. If $l_i$ changes type, then this will make the bias in its neighbourhood still more positive, which may cause further nodes of type $\beta$ to become unhappy. If these change to type $\alpha$ then this will further increase the bias, potentially causing more nodes to become unhappy, and so on. The following definitions formalise some of the  ways in which this process might play out, and in particular the possibility that this process might play out without interference from what happens in other neighbourhoods $\mathcal{N}(l_j)$ or $\mathcal{N}(r_j)$. 

\begin{defin} 
For $0<i<k_0$ we say that $l_i$  \textbf{completes at stage} $s$ if both: 

\begin{enumerate} 
\item No node in $\mathcal{N}(l_i)$ is unhappy at stage $s$, and this is not true for any $s'<s$. 
\item There exist $x_0$ and $x_1$ with $l_{i+1}+2w<x_0 < l_i -2w<l_i+2w<x_1< l_{i-1}-2w$, such that by the end of stage $s$, no node in $[x_0-w,x_0]$ or $[x_1,x_1+w]$ has changed type. 
\end{enumerate} 
We say that $l_i$ \textbf{completes} if it completes at some stage.  We also define completion for $r_i$ analogously.  
\end{defin} 

\begin{defin} 
We say that $l_i$ (or $r_i$) \textbf{originates a firewall} if it completes at some stage $s$ and (i) it belongs to firewall at stage $s$, and (ii) all type changes in $\mathcal{N}(l_i)$ (or $\mathcal{N}(r_i)$) at stages $\leq s$ are of the same kind (i.e. all $\alpha$ to $\beta$, or all $\beta$ to $\alpha$).\footnote{This definition might initially seem to neglect the possibility, for example, that $l_i$ completes at some stage and does not belong to a firewall at that stage, but that, nevertheless, the sequence of type changes in its neighbourhood and surrounding neighbourhoods which have led to completion have caused the creation of a firewall. In fact we shall be able to ensure (Lemma \ref{dichotomy}) that with probability close to 1 (for large $w$) such events do not occur.} 
\end{defin}

The informal idea, is that we now wish to show that each $l_i$ and each $r_i$ has some reasonable chance of originating a firewall (and that this reasonable chance is bounded below by some value which doesn't depend on $w$). Then we can choose $k_0$ so that the probability none of the $l_i$ originate a firewall or none of the $r_i$ originate a firewall is $\ll \epsilon $, i.e.\ with  probability close to 1 firewalls will originate either side of $u_0$ within the interval $[l_{k_0}, r_{k_0}]$. Then, letting $i_1$ be the least $i$ such that $l_i$ originates a firewall, and letting $i_2$ be the least $i$ such that $r_i$ originates a firewall, we wish to show that with probability close to 1 the firewalls originated at $l_{i_1}$ and $r_{i_2}$  will spread until $u_0$ is contained in one of them. Since these two firewalls have originated at nodes which are at distance at least $e^{w/d}$ apart, $u_0$ ultimately belongs to a firewall of at least this length. So to sum up:  

\begin{quote} 
The approximate reason Theorem \ref{3} holds is that $u_0$ can be expected to join a firewall which --  \emph{precisely because} unhappy nodes are rare in the initial 
configuration --  originated at a long distance from $u_0$. 
\end{quote}      

In order to make this basic picture work, however, we need to be careful about the formation of stable intervals in $[l_{k_0}, r_{k_0}]$. As noted above,  firewalls will spread until they hit stable intervals of the opposite type. Now suppose that, with $i_1$ and $i_2$ as above,  $i_1=i_2=2$ and, for now, suppose that $\alpha$-firewalls are originated at both $l_2$ and $r_2$.  In order to show that these two  firewalls will spread until they meet each other, it will be helpful first of all, to be able to assume that in the initial configuration there are no stable subintervals of $[l_{k_0}, r_{k_0}]$. This will follow quite easily for large $w$, from our previous analysis of the ratio between  the probability of unhappy nodes and stable intervals. A further danger that we have to be able to avoid, however, is that, while $l_1$ and $r_1$ do not originate firewalls, they \emph{do} get as far as creating $\beta$-stable intervals.

\begin{defin} Given $i$ with $0<i<k_0$, let $u=l_i$ or $u=r_i$. Let $u_1=u-(2w+1)$ and $u_2=u+(2w+1)$. 
We say that $u$ \textbf{subsides} if it completes at some stage $s$, and:
\begin{itemize} 
\item  There are no nodes in $[u_1,u_2]$ belonging to stable intervals at stage $s$; 
\item No nodes in $\mathcal{N}(u_1)$ or $\mathcal{N}(u_2)$ have been unhappy at any stage $\leq s$. 
\end{itemize} 
\end{defin}

 So we need to be able to show, in fact, that with probability close to 1 each $l_i$ and $r_i$ either originates a firewall or subsides.  To do this clearly involves a careful analysis what is likely to happen in each of the neighbourhoods $\mathcal{N}(l_i)$ and $\mathcal{N}(r_i)$.  First of all, the  large distances between these nodes mean that, for fixed $k_0$ and sufficiently large $w$, we can expect all of the $l_i$ and $r_i$ (for $0<i<k_0$) to complete, so that one can understand the early stages of the process for each of these neighbourhoods by considering each in isolation. We then wish to show: 

\begin{quote} The required  dichotomy: in the neighbourhood of each $l_i$ and $r_i$, either a small number of type changes will occur before completion,  or else a large number of type changes will occur before completion and a firewall will be created.
\end{quote} 

 Now, if we strengthen our original requirement that there are no stable subintervals of $[l_{k_0}, r_{k_0}]$ in the initial configuration, to a requirement that there are no subintervals which are `close' to being stable in the initial configuration (where `close' is to be made precise in such a way as to ensure that when a small number of type changes occur in the neighbourhood of $l_i$  before completion, these are not enough to create any stable intervals), then we shall have that with probability close to 1 each $l_i$ and each $r_i$ either originates a firewall or else completes without creating stable intervals. \\

Once all this is in place, there is then one further hurdle. In the above, we assumed that the firewalls originating at $l_2$ and $r_2$ are both $\alpha$-firewalls. If they are firewalls of opposite type, however, we still have some work to do in order to prove that $u_0$ will almost certainly end up belonging to one of these two firewalls.   \\

 \begin{figure} \label{first} 
\epsfig{file=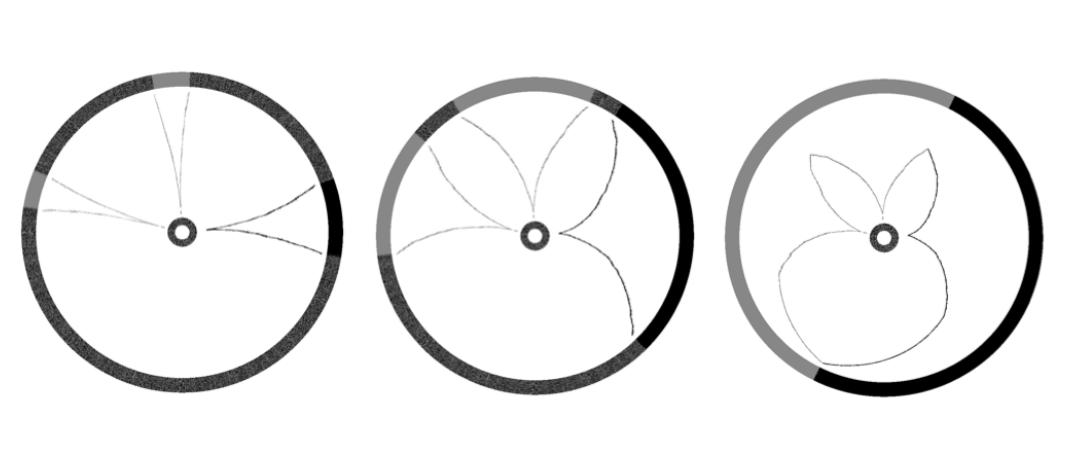,width=14cm}
 \caption{An example of the developing process with $n=100000, w=75, \tau=0.37$, at stages $s= 5000, 15000$ and $24986$.} 
\label{fig:process}
\end{figure}

 \textbf{Formalising the intuitive picture.}  Thus far we have defined an event, $\mbox{\textbf{Good spacing}}$, which depends upon  the value $k_0$. We are yet to specify $k_0$, but have promised that this choice will not depend on $w$.  For any $\epsilon'>0$ and for any fixed $k_0\geq 1$  we can ensure that $\textbf{P}(\mbox{\textbf{Good spacing}})>1-\epsilon'$ by taking $w$ sufficiently large (and by taking $n$ sufficiently large compared to $w$). 

In order to specify how the type changing process can be expected to develop in the vicinity  of $u_0$, we shall now proceed to define a finite number of other events of this kind. This finite set of events (having $\mbox{\textbf{Good spacing}}$ as a member)  we call shall call $\Pi$.   Our aim is to show that, for any $\epsilon'>0$, the probability that all the  events in $\Pi$ occur is greater than $1-\epsilon'$ for all sufficiently large $w$ (and for $n$ sufficiently large compared to $w$). Suppose that  this is established for some $\Pi' \subset \Pi$.    To establish the result  for $\Pi' \cup \{ Q \}$ with $Q\in \Pi-\Pi'$,  it then suffices to prove for each $\epsilon'>0$ and all sufficiently large $w$, that the probability of $Q$ given $P$ is greater than $1-\epsilon'$, where $P$ is any conjunction (possibly empty) of the events in $\Pi'$.  Of course, we choose that $P$ which is most convenient to work with. 

In discussing  the `required dichotomy'  above,  a requirement was suggested, that there should be no subintervals of $[l_{k_0}, r_{k_0}]$ which are `close' to being stable in the initial configuration. In fact an appropriate formalisation of this idea is easy to describe, and we now do so.  

\begin{defin} 
For any $\tau'\in (0,1)$, we say that an interval of length $w+1$ is $\tau'$\textbf{-stable}, if it contains $\tau'(2w+1)$ many $\alpha$-nodes or $\tau'(2w+1)$ many $\beta$ nodes.
\end{defin} 

\begin{defin}[$\mbox{\textbf{Stable clear}}$]  \label{tau0}
Once and for all, fix some $\tau_0$ with $\kappa <\tau_0 <\tau$. Let $\mbox{\textbf{Stable clear}}\in \Pi$ be the event that $l_i$ and $r_i$ are  defined for all $1\leq i \leq k_0$, and there do not exist any $\tau_0$-stable subintervals of $[l_{k_0},r_{k_0}]$ in the initial configuration.  
\end{defin}

The following lemma has a simple proof, given in Section \ref{def2}: 

\begin{lem} \label{stableclear}  
For fixed $k_0$ and $\epsilon'>0$,  $\textbf{P}(\mbox{\textbf{Stable clear}})>1-\epsilon'$ for all  $w$ sufficiently large (and all $n$ sufficiently large compared to $w$). 
\end{lem} 

 Now, in order to establish the required dichotomy, we need to build up a clear picture of what the neighbourhoods $\mathcal{N}(l_i)$ and $\mathcal{N}(r_i)$ can be expected to look like. The distributions for these intervals in the initial configuration are a little difficult to attack directly, however,  due to the nature of their definition. In choosing $l_1$ we move left until we find the \emph{first} node which has high bias -- this gives an asymmetry to the given information concerning the neighbourhood. Roughly, we might expect something like a hypergeometric distribution, but how good is this as an approximation? What we shall do, in fact, is first of all to understand what can be expected from the neighbourhood of a node which is chosen uniformly at random from among those with borderline bias:

\begin{defin}
Let us say that a node $u$ has \textbf{borderline bias}, denoted $\mathtt{Bb}(u)$, if: 

\[ |\Theta(\mathcal{N}(u))|=\mbox{Min}\{\theta \in 2\mathbb{N}+1: \theta > (1 - 2\tau)(2w+1)\}, \] 

 i.e.,  $u$ has high bias but  decreasing the modulus of the bias by the minimum possible amount of 2 would cause it not to have high bias. 
We say that $\mathtt{Bb}^{\ast}(u)$ holds if $u$ has borderline bias in the initial configuration. 
\end{defin}

 The following is worth emphasising: 
\begin{enumerate} 
\item[$(\dagger_1)$]  Each of the nodes $l_i$ and $r_i$ ($0<i\leq k_0$) has borderline bias.  
\end{enumerate}

 In what follows it is often convenient to work with some fixed $k\geq 1$ and to divide an interval $I=[a,b]$ into $k$ parts of equal length. This occasions the minor inconvenience that the length of the interval might not be a multiple of $k$, motivating the following definition: 

\begin{defin} \label{divide} Let $I=[a,b]$ and suppose $k\geq 1$.  We define the subintervals 
$I(1:k):= \left[ a, a+\left\lfloor \frac{b-a}{k} \right\rfloor \right]:=$ $\left[ I(1:k)_1,I(1:k)_2 \right]$ and $$I(j:k) := \left[ a+ \left\lfloor \frac{(j-1) (b-a)}{k} \right\rfloor +1, a+\left\lfloor \frac{j (b-a)}{k} \right\rfloor \right] := \left[ I(j:k)_1,I(j:k)_2 \right]$$ for $2 \leq j \leq k$.

 \end{defin} 
 
 In Definition \ref{divide} the intervals are counted from left to right, but it is also useful sometimes to work from right to left: 
 
 \begin{defin} Let $I=[a,b]$ and suppose $k\geq 1$. For $1\leq j\leq k$ we define $I(j:k)^-= I(k-j+1:k)$,  $I(j:k)^-_1= I(k-j+1:k)_1$ and $I(j:k)^-_2=I(k-j+1:k)_2$. 
  \end{defin}

\begin{lem}[Smoothness Lemma]  \label{smooth} Suppose $u$ is such that the proportion of $\alpha$ nodes in $I:=\mathcal{N}(u)$ is $\theta$, and that $u$ is selected uniformly at random from nodes with this property. Then for any fixed $k\geq 1$ and $\epsilon'>0$, for all sufficiently large $w$ the following holds with probability $>1-\epsilon'$: for every $j$ with $1\leq j \leq k$, the proportion of the nodes in $ I(j:k)$ which are $\alpha$ nodes,  lies in the interval  $[\theta-\epsilon', \theta+\epsilon']$. 
 \end{lem}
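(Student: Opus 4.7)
My plan is to reduce the statement to a concentration estimate for the hypergeometric distribution, via a symmetry argument about the selection procedure. Let $m:=\theta(2w+1)$ (an integer, since such a node $u$ exists).

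The first step is to show that under the prescribed selection rule, the conditional distribution of the pattern of $\alpha$s and $\beta$s on $\mathcal{N}(u)$ is the uniform distribution over the $\binom{2w+1}{m}$ binary strings of weight $m$. This will follow from exchangeability of the i.i.d.\ initial assignment of types: for any two such patterns $c,c'$, globally swapping their roles defines a measure-preserving bijection on the space of initial configurations that simultaneously interchanges $\{v : \mathcal{N}(v)=c\}$ and $\{v : \mathcal{N}(v)=c'\}$. Hence, averaging over the joint randomness of the initial configuration and the uniform choice of $u$ from among valid nodes (which is a.s.\ a non-empty set for $n$ large), the probability that $\mathcal{N}(u)=c$ equals the corresponding probability for $c'$, giving uniformity.

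With the reduction in hand, fix $j\in\{1,\dots,k\}$ and write $\ell := |I(j:k)|$. Because $k$ is fixed, $\ell = (2w+1)/k + O(1)$ tends to infinity with $w$. Under uniform sampling of an $m$-subset from the $2w+1$ slots of $I$, the number $X_j$ of $\alpha$ nodes falling in $I(j:k)$ is hypergeometric with mean $\theta\ell$. Hoeffding's concentration inequality for sampling without replacement then gives
\[ \textbf{P}\!\left[\left| \tfrac{X_j}{\ell} - \theta \right| > \epsilon' \right] \ \leq\ 2 e^{-2(\epsilon')^{2}\ell}, \]
which tends to $0$ as $w\to\infty$. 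A union bound over the $k$ (constantly many) subintervals, with $w$ chosen large enough that $2k\,e^{-2(\epsilon')^{2}\lfloor (2w+1)/k\rfloor}<\epsilon'$, completes the proof.

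The one step that requires genuine care is the symmetry reduction in the first paragraph above: one might initially worry that conditioning on uniform selection among a random (configuration-dependent) set of valid nodes introduces bias, but the bijection argument rules this out. Once uniform sampling of the neighborhood is established, the concentration estimate and union bound are entirely routine. I do not expect any essential difficulty beyond packaging the symmetry argument cleanly.
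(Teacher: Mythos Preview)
Your approach is essentially the same as the paper's: reduce to the hypergeometric distribution, apply a concentration inequality for each subinterval, and take a union bound over the $k$ subintervals. The paper uses Chebyshev's inequality with the hypergeometric variance (yielding an $O(1/\ell)$ bound) where you use Hoeffding's inequality for sampling without replacement; either suffices since $k$ is fixed and $\ell\to\infty$.

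One remark on the symmetry step you flag as the crux: the claimed measure-preserving bijection on configurations that ``simultaneously interchanges $\{v : \mathcal{N}(v)=c\}$ and $\{v : \mathcal{N}(v)=c'\}$'' does not exist in general, since two patterns of equal weight can occur different numbers of times in a given configuration (so no global map can swap their occurrence sets pointwise). The paper does not attempt to justify this step beyond noting that conditioning i.i.d.\ Bernoulli$(\tfrac12)$ variables on their sum gives the hypergeometric law for a fixed neighbourhood. The selection-procedure bias you worry about is genuinely present but is of order $O(w/n)$ (only the $O(w)$ nodes whose neighbourhoods overlap $\mathcal{N}(u)$ are affected by the specific pattern $c$) and vanishes under the standing assumption $n\gg w$, so the reduction stands --- just not via the bijection you describe.
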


 \begin{proof} 
 Once we are given that $\theta$ is the proportion of the nodes in $I$ which are of type $ \alpha$, the nodes in this interval cease to be independently distributed, and the distribution becomes hypergeometric. Since we consider fixed $k$ and $\epsilon'$ and take $w$ large, it suffices to prove the result for given $j$ with $1\leq j \leq k$, i.e.\ if $P_1,...,P_k$ are events, each of which occurs with probability tending to 1 as $w\rightarrow \infty$, then their conjunction also occurs with probability tending to 1 as $w\rightarrow \infty$.  For a given $j$, the result follows directly, however, from an application of Chebyshev's inequality and standard results for the mean and variance of a hypergeometric distribution. Let $ x$ be the number of $\alpha$ nodes in the interval $I(j:k)$ and let $\ell$ be the length of the interval, so that $|\ell -(2w+1)/k|\leq 1$. Then we have: 
 \[ \textbf{P}(|x/\ell-\theta|>\epsilon')<\ell^{-2} \epsilon'^{-2} \mbox{ Var}(x) =O(1) \ell^{-1}. \] 
  \end{proof} 

Lemma \ref{smooth} basically tells us that if we choose a node $u$ with borderline bias uniformly at random, then for large $w$ we can expect the bias to move towards 0 fairly smoothly as we move to $u+(2w+1)$ or $u-(2w+1)$. In order to see roughly why this is true, suppose that  $|\Theta(\mathcal{N}(u))|=\rho$ and let $\theta$ be the proportion of the nodes in $\mathcal{N}(u)$ which are of type $\alpha$. Let  $I=[u,u+(2w+1)]$ and, for some $k$, consider the sequence of evenly spaced nodes $v_j= I(j,k)_2$.  Now in forming the neighbourhood of $v_j$, we lose an interval of length (almost exactly) $\lfloor j(2w+1)/k \rfloor$ from the neighbourhood of $u$, which  by Lemma \ref{smooth} we can expect to have a proportion of $\alpha$ nodes very close to $\theta$. We also gain an interval of the same length from outside $\mathcal{N}(u)$, which we can expect to have a proportion of $\alpha$ nodes very close to $\frac{1}{2}$.  This means a bias for $v_j$  close to $\rho \frac{k-j}{k} $. 

The following definition allows us to express this more formally:

\begin{defin} Suppose that $\mathtt{Hb}^{\ast}(u)$ holds. 
Let $I_1= [u-(2w+1),u]$ and $I_2=[u,u+(2w+1)]$. Let $|\Theta(\mathcal{N}(u))|=\rho$ and let $\theta$ be the proportion of the nodes in $\mathcal{N}(u)$ which are of type $\alpha$. Suppose that $k\geq 1$ is even and $\epsilon'>0$. For $1\leq j \leq k$ let $v_j=I_2(j:k)_2$ and let  $v_{-j}=I_1(j:k)^-_1$. We say that  $\mathtt{Smooth}_{k,\epsilon'}(u)$ holds if  both: 

\begin{itemize} 
\item For every $j$ with $1\leq |j| \leq k$,  $|\Theta(\mathcal{N}(v_j))-\rho \frac{k-j}{k}|/w < \epsilon'$. 
\item For every $j$ with $1\leq j\leq k/2$ the proportion of the nodes in $I_1(j:k)^-$ which are of type $\alpha$ lies in the interval $[\theta-\epsilon',\theta+\epsilon']$, and similarly for $I_2(j,k)$. 
\end{itemize} 

 We say that $\mathtt{Smooth}_{k,\epsilon'}^{\ast}(u)$ holds if $\mathtt{Smooth}_{k,\epsilon'}(u)$ holds in the initial configuration. Figure \ref{fig:definingthel_2} illustrates the picture for $k=2$. 
\end{defin}

\begin{figure}
\epsfig{figure=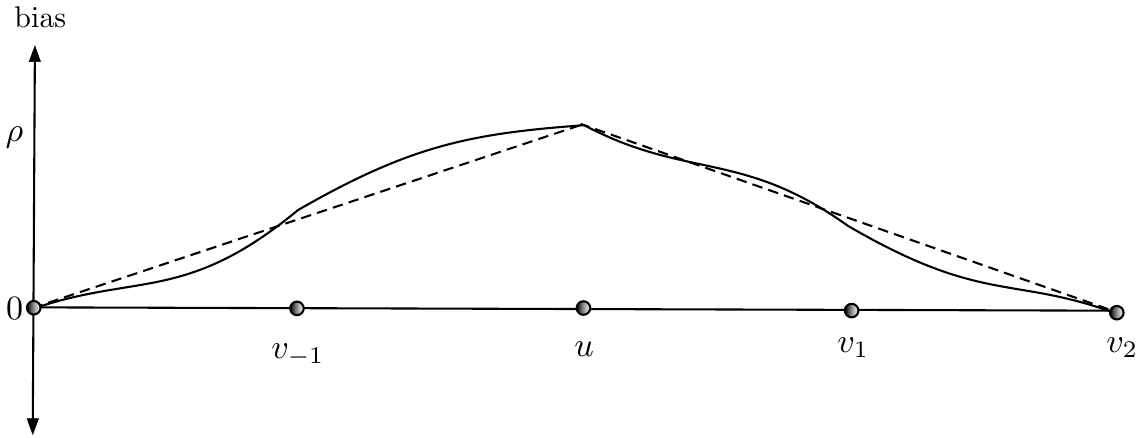,width=11cm}
\caption{Smooth bias change for $k=2$.}
\label{fig:definingthel_2}
\end{figure}

\begin{coro}[Smoothness Corollary] \label{smooth2} Suppose $u$ is selected uniformly at random among those nodes such that $\mathtt{Bb}^{\ast}(u)$ holds.  For all $k\geq 1$ and $\epsilon'>0$, and for all sufficiently large $w$, $\mathtt{Smooth}^{\ast}_{k,\epsilon'}(u)$ holds with probability $>1-\epsilon'$. 
\end{coro}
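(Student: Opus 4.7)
The plan is to reduce the corollary to the Smoothness Lemma together with a Hoeffding bound for the ``fresh'' nodes entering each $\mathcal{N}(v_j)$, handling the conditioning on borderline bias via the translation invariance of the initial distribution.

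By translation invariance of the i.i.d.\ initial assignment, picking $u$ uniformly at random conditional on $\mathtt{Bb}^{\ast}(u)$ is equivalent to fixing any particular node and conditioning on that node having borderline bias. The event $\mathtt{Bb}^{\ast}(u)$ forces $|\Theta(\mathcal{N}(u))|$ to take a single specified value; further conditioning on the sign of the bias (WLOG positive, by the obvious $\alpha\leftrightarrow\beta$ symmetry) pins $\Theta(\mathcal{N}(u))$ to a fixed $\rho>0$, equivalently fixes the $\alpha$-proportion in $\mathcal{N}(u)$ at some $\theta>\tfrac12$. Since the original labels are i.i.d.\ Bernoulli$(\tfrac12)$ and the conditioning event depends only on the sum of labels in $\mathcal{N}(u)$, we obtain a clean factorisation: conditionally, labels inside $\mathcal{N}(u)$ are distributed as i.i.d.\ Bernoulli$(\tfrac12)$ given their sum, while labels outside $\mathcal{N}(u)$ remain unconditionally i.i.d.\ Bernoulli$(\tfrac12)$ and independent of what lies inside.

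Next I would apply Lemma \ref{smooth} to $\mathcal{N}(u)$, refining the partition into enough equal pieces to resolve every subinterval $I_1(j:k)^-$ and $I_2(j:k)$ appearing in the definition of $\mathtt{Smooth}_{k,\epsilon'}$. This immediately delivers the second bullet of the definition, and more importantly says that the $\alpha$-proportion in each of the subintervals on either side of $u$ lies in $[\theta-\epsilon'',\theta+\epsilon'']$ for any desired $\epsilon''$, with probability $>1-\epsilon''$ for all sufficiently large $w$. Now fix $j$ with $1\leq j\leq k$ and compare $\mathcal{N}(v_j)$ with $\mathcal{N}(u)$: the two neighbourhoods differ by losing a left-end segment of length $\lfloor j(2w+1)/k\rfloor$ lying inside $\mathcal{N}(u)$, and by gaining a right-end segment of the same length lying strictly outside $\mathcal{N}(u)$. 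By the previous step, the lost segment contributes an $\alpha$-count of $\theta\cdot j(2w+1)/k+O(\epsilon'' w)$. The gained segment is a sum of fresh i.i.d.\ Bernoulli$(\tfrac12)$ labels, so Hoeffding's inequality (exactly as deployed in the proof of Lemma \ref{unhappy}) gives that its $\alpha$-count is $(1/2)\cdot j(2w+1)/k+O(\sqrt{w\log w})$ with probability $>1-e^{-cw}$ for some $c>0$. Combining these two estimates:
\[ \Theta(\mathcal{N}(v_j))\;=\;\Theta(\mathcal{N}(u))\;-\;(2\theta-1)\cdot\frac{j(2w+1)}{k}\;+\;O(\epsilon'' w)\;=\;\rho\cdot\frac{k-j}{k}\;+\;O(\epsilon'' w), \]
which, after choosing $\epsilon''$ small enough in terms of $\epsilon'$, is exactly the first bullet of $\mathtt{Smooth}_{k,\epsilon'}$. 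The symmetric argument handles the $v_{-j}$ on the other side.

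Finally, take a union bound over the finitely many indices $j$ with $1\leq|j|\leq k$ and over the $O(k)$ subintervals to which Lemma \ref{smooth} is applied; since $k$ is fixed while $w\to\infty$, all error probabilities can be driven below $\epsilon'$ in total. The main obstacle is really the bookkeeping around the clean conditional factorisation: one must justify that translation invariance plus the fact that $\mathtt{Bb}^{\ast}$ is measurable with respect to $\mathcal{N}(u)$ allow the ``outside'' nodes to be treated as independent fair coins despite the conditioning. Once that is laid out, everything else is a straightforward marriage of the hypergeometric Chebyshev bound in Lemma \ref{smooth} with a one-line Hoeffding bound on the freshly uncovered segment.
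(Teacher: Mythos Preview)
Your proposal is correct and follows essentially the same approach as the paper's proof. The paper's argument is terser but identical in structure: it observes that the conditioning on $\mathtt{Bb}^{\ast}(u)$ leaves the distributions on $\mathcal{N}(u-(2w+1))$ and $\mathcal{N}(u+(2w+1))$ unaffected, invokes the law of large numbers (where you use Hoeffding) to control the outside subintervals, and then applies Lemma~\ref{smooth} to the inside; your more explicit treatment of the conditional factorisation and the lost/gained segment bookkeeping simply spells out what the paper leaves implicit.
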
 
\begin{proof}  Let $u_{-1}=u-(2w+1)$ and $u_1=u+(2w+1)$. The fact that $u$ has borderline bias has no impact on the distributions for $\mathcal{N}(u_{-1})$ and $\mathcal{N}(u_1)$. Let $I_3=\mathcal{N}(u_{-1})$ and $I_4=\mathcal{N}(u_{1})$. By the weak law of large numbers, for large $w$ we can expect the proportion of $\alpha$ nodes  in each  $I_3(j:k)$ and $I_4(j:k)$ to be close to $\frac{1}{2}$. The result then follows from Lemma \ref{smooth}.  
\end{proof}

While  Smoothness Corollary \ref{smooth2} tells us what can be expected from the neighbourhood of a node chosen uniformly at random from among those with borderline bias, this does not immediately allow us to infer anything about what can be expected from the neighbourhood of each $l_i$ and $r_i$. What we need is that, if we choose a node $u$ uniformly at random and then move left (or right) until we find a first node $v$ with high bias, then with probability close to 1 $\mathtt{Smooth}_{k,\epsilon'}^{\ast}(v)$ holds.  Ideally, we would like to be able to apply Lemma \ref{gen}, but that would involve establishing an appropriate form of condition (ii) from the statement of the lemma. In order to work around this condition, one has to take a slightly circuitous route via several intermediate notions, but the following lemma can be established (the proof is given in Section \ref{def2}). 

\begin{lem}[Smoothness for $l_i$ and $r_i$]  \label{lismooth}
For any node $u$, let $x_u$ be the first node to the left of $u$ which has high bias in the initial configuration.  For any $\epsilon'>0$ and $k\geq 1$, if $0 \ll w \ll n$ and $u$ is chosen uniformly at random, then  $x_u$ is defined and  $\mathtt{Smooth}^{\ast}_{k,\epsilon'}(x_u)$ holds with probability $>1-\epsilon'$. 

An analogous result holds when `left' is replaced by `right'. 
\end{lem}

In order to see that Lemma \ref{lismooth} suffices to establish probable smoothness for all of the $l_i$ and $r_i$, note first that $k_0$ is fixed while we take $w$ large. At step $i$ of the iteration which defines the sequence $l_1,l_2,..$, the fact that $l_i$ has borderline bias tells us nothing about the neighbourhood of $l_i-(2w+1)$ or the nodes to the left of this neighbourhood (but at a distance small compared to $n$).  

We are now ready to define the third event in $\Pi$:

\begin{defin}[$\mbox{\textbf{Smooth}}$]  \label{k1k2} Let $\tau_0$ be as in Definition \ref{tau0}. Once and for all, choose $k_1$ such that $\frac{1}{k_1}\ll \tau-\tau_0$, and choose $k_2$ and $\epsilon_0$ such that $k_1 \ll k_2 \ll \frac{1}{\epsilon_0}$ and $k_2$ is a multiple of $k_1$.  We define $\mbox{\textbf{Smooth}}$ to be the event that all the $l_i$ and $r_i$ are defined for $1\leq i \leq k_0$, and that when $u=l_i$ or $u=r_i$, $\mathtt{Smooth}_{k_2,\epsilon_0}^{\ast}(u)$ holds. 
\end{defin} 

By Lemma \ref{lismooth}, when $k_0 \ll w$ the probability that $\mbox{\textbf{Smooth}}$ does not occur is $\ll \epsilon$.

\subsection*{The process to completion} Having established a clearer picture of what can be expected from the initial configuration, we now look to understand what will happen in the early stages, in the neighbourhood of each $l_i$ or $r_i$. First of all, we must establish that these nodes can be expected to complete.  The proof of Lemma \ref{complet} is given in Section \ref{def2}, but the reader can probably see immediately how the proof might go.  A happy node cannot become unhappy without changes to their neighbourhood -- meaning that unhappiness has to spread one neighbourhood at a time. Given the large distances between the $l_i$ and $r_i$, failure of completion for some $l_i$ would involve a large number of events causing the spread of unhappiness from $l_{i-1}$ or $l_{i+1}$, all occurring during stages when there are still unhappy nodes in $\mathcal{N}(l_i)$. At each stage the likelihood of such an event is not that much more than an unhappy node in $\mathcal{N}(l_i)$ being chosen to swap.     

\begin{lem}[$l_i$ and $r_i$ complete]  \label{complet} 
For any $\epsilon'>0$, if $0\ll w \ll n$ then for  all $i\in [1,k_0)$, $l_i$ and $r_i$ will (be defined and will) complete with probability $>1-\epsilon'$. 
\end{lem}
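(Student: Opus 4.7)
The plan is to condition on the intersection $Q_0 \wedge Q_1 \wedge Q_2$, which by the estimates following Definitions \ref{Q0}, \ref{tau0} and \ref{k1k2} holds with probability greater than $1 - \epsilon'$ once $0 \ll w \ll n$. Under this conditioning: each $l_i$ and $r_i$ is defined for $1 \leq i \leq k_0$, consecutive ones are at distance greater than $e^{w/d}$, no $\tau_0$-stable subinterval of $[l_{k_0}, r_{k_0}]$ exists initially, and $\mathtt{Smooth}^{\ast}_{k_2,\epsilon_0}(l_i)$ holds. I shall verify both parts of completion for $l_i$; the argument for $r_i$ is symmetric.

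For condition (1) of completion — that some stage $s$ is reached at which no node in $\mathcal{N}(l_i)$ is unhappy — I appeal directly to the termination argument from Section \ref{notation}: since $\tau \leq \tfrac{1}{2}$, the harmony index strictly increases at every type change and is bounded above by $n(2w+1)$, so the global process terminates, at which point no node anywhere is unhappy. Taking $s$ to be the least stage at which no node in $\mathcal{N}(l_i)$ is unhappy then gives the required stage.

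For condition (2) — existence of buffer nodes $x_0, x_1$ with no type changes in $[x_0-w, x_0]$ or $[x_1, x_1+w]$ by stage $s$ — the task is to bound how far type changes propagate outward from $l_i$ during stages $\leq s$. By the defining property of $l_j$, no node in the open interval $(l_i + 2w + 1, l_{i-1} - 2w - 1)$ has high bias initially, so every such node is happy in the initial configuration. For a type change ever to occur at a node $v$ far from $l_i$ in this interval, a chain of earlier type changes must have reached $\mathcal{N}(v)$ and driven its bias into the high-bias regime. By $\mathtt{Smooth}^{\ast}_{k_2,\epsilon_0}(l_i)$, the initial bias $|\Theta(\mathcal{N}(v))|$ decays essentially linearly with the distance from $l_i$ and is close to $0$ once that distance exceeds $2w+1$; so for $v$ more than a bounded distance from $l_i$, making $v$ unhappy requires a net bias change in $\mathcal{N}(v)$ of order $w$, and hence $\Omega(w)$ previous type changes inside $\mathcal{N}(v)$. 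A counting/potential-function argument on the envelope of nodes that have ever changed type then shows that this envelope expands only by a $\mathrm{poly}(w)$ distance from $l_i$, which is $\ll e^{w/d}$; so valid $x_0, x_1$ can be placed comfortably inside $(l_{i+1}+2w, l_i - 2w)$ and $(l_i + 2w, l_{i-1}-2w)$ respectively.

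The main obstacle is making the propagation bound on the envelope rigorous. The cleanest route I envisage is to define the left and right propagation fronts $a(s), b(s)$ as the leftmost and rightmost nodes that have changed type by stage $s$, and to argue inductively that $a(s)$ can advance leftward only when a node close to $a(s)$ becomes unhappy, which by the smoothness hypothesis requires a number of prior type changes in its immediate neighborhood that grows linearly with its distance from $l_i$. Combining this with the strict-increase bound on the local harmony index (which gives an a priori cap on the total number of type changes inside any fixed window) yields a sub-exponential — indeed polynomial — bound on $l_i - a(s)$ and $b(s) - l_i$, confining the entire process to a region of diameter $\mathrm{poly}(w)$ around $l_i$ and thereby leaving room for the required buffer intervals.
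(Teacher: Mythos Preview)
Your argument for condition (2) has a genuine gap: the deterministic $\mathrm{poly}(w)$ bound on the propagation envelope around $l_i$ by stage $s$ is simply false. Consider the case where $l_i$ is destined to originate a firewall. Once a run of $\alpha$'s of length $\geq w+1$ has formed, every $\beta$ node immediately to its right is unhappy (its left half-neighbourhood is entirely $\alpha$, and the right half is near-balanced), so the firewall front can advance indefinitely. Now imagine a realisation of the random process that, whenever it selects an unhappy node from the region around $l_i$, always picks the rightmost one. Then the firewall sweeps to the right through $e^{w/d}$ many nodes while the $\beta$ nodes in the left half of $\mathcal{N}(l_i)$ sit untouched and remain unhappy --- so $s$ has not yet been reached, but the envelope has already crossed the entire interval $[l_i,l_{i-1}]$. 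No counting or potential argument can rule this out, because each node along the way flips exactly once; your appeal to a ``local harmony index'' does not help, since the harmony index is a global quantity and there is no local version that is monotone under boundary interactions.

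This is why the paper's proof is probabilistic rather than deterministic. It sets up a race: partition $[l_{i+1},l_i]$ and $[l_i,l_{i-1}]$ into $\Theta(e^{w/d}/w)$ blocks of length $\geq w+1$, and call a stage a \emph{step towards failure} when activity first enters a new block (from any of the four ends), and a \emph{step towards completion} when a node in $\mathcal{N}(l_i)$ flips. At any stage with unhappy nodes still in $\mathcal{N}(l_i)$, the nodes whose selection would constitute a step towards failure number at most $4(w+1)$, while there is at least one unhappy node in $\mathcal{N}(l_i)$; so conditionally the probability of a failure-step is at most $4(w+1)$ times that of a completion-step. Since $2w+1$ completion-steps suffice (all changes in $\mathcal{N}(l_i)$ go one way, by $Q_2$), while $\Theta(e^{w/d'})$ failure-steps are needed to exhaust the buffers, the law of large numbers gives completion with probability $\to 1$. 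Your approach would need to be rebuilt along these probabilistic lines; the deterministic confinement you are aiming for does not hold.
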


\begin{defin}[$\mbox{\textbf{Completion}}$]  
We define $\mbox{\textbf{Completion}}$ $(\in \Pi)$ to be the event that all the $l_i$ and $r_i$  (for $1\leq i <k_0$) are defined and complete. 
\end{defin}

We are now ready to prove the required dichotomy. 

\begin{lem}[The required dichotomy]  \label{dichotomy} Suppose that $w$ is large and that $\mbox{\textbf{Good spacing}}$, $\mbox{\textbf{Stable clear}}$, $\mbox{\textbf{Smooth}}$  and $\mbox{\textbf{Completion}}$ all hold. 
Then for $i<k_0$,  $l_i$  and $r_i$  will each either subside or originate a firewall. 
\end{lem}  
\begin{proof} We prove the result for $l_i$, and the proof for $r_i$ is essentially identical.

 Note first that the choice of $k_1$ in Definition \ref{k1k2} means, in particular, that $10 w/k_1$ type changes in any given neighbourhood cannot create stable intervals, given that $\mbox{\textbf{Stable clear}}$ holds (the numbers here are fairly arbitrary).  Note also, that satisfaction of $\mbox{\textbf{Smooth}}$ suffices to ensure that there are not unhappy nodes of both types in the neighbourhood of $l_i$ in the initial configuration. 
Now suppose that $l_i$ completes at stage $s$ and has positive bias in the initial configuration (the case for negative bias is essentially identical).  

It is useful at this point to establish names for a number of relevant intervals. We let $u_1=l_i-(2w+1)$ and $u_2=l_i +(2w+1)$. Then we define: 

\begin{itemize} 
\item $J=\mathcal{N}(u_1) \cup \mathcal{N}(l_i) \cup \mathcal{N}(u_2)$. 
\item $I=[u_1,u_2]$. 
\item $I_1=[u_1,l_i]$, $I_2=[l_i,u_2]$. 
\item $K_j^1=I_1(j:k_1)^- \cup I_2(j:k_1)$. 
\item $K_j^2=I_1(j:k_2)^- \cup I_2(j:k_2)$. 
\end{itemize}

In Definition \ref{k1k2} we assumed that $k_2$ is a multiple of $k_1$, so  we may let $m$ be such that $k_2=mk_1$. It is convenient to assume that $k_2$ is even.  Now we divide into two cases. 

$l_i$ \textbf{subsides}.  First of all, suppose that at stage $s$, there is a $\beta$-node $u$  in the interval $K_1^1$. Then $u$ must be happy at stage $s$. The fact that $\mbox{\textbf{Smooth}}$ is satisfied, together with the fact that $l_i$ completes at stage $s$, means that prior to stage $s$, the only type changes in the interval $J$ are from type $\beta$ to type $\alpha$, so that $u$ must be happy at \emph{every} stage $\leq s$. Now, since $k_1 \ll k_2 \ll \frac{1}{\epsilon_0}$ and   $\mathtt{Smooth}_{k_2,\epsilon_0}^{\ast}(l_i)$ holds, any nodes in $I- (K_1^1 \cup K_2^1)$ have lower bias than $u$, and hence are happy,  in the initial configuration. It then follows  by induction on the stages $\leq s $ that no node in $J- (K_1^1 \cup K_2^1)$ changes type prior to stage $s$.  In order to see this suppose that it holds prior to stage $s'\leq s$. Then at stage $s'$, if $v\in I- (K_1^1 \cup K_2^1)$, it still has lower bias than $u$ and so cannot change type from $\beta$ to $\alpha$, since $u$ is happy so $v$ must be. If $v\in J-I$, then $v$ changing type would contradict the fact that $l_i$ completes. 

  We therefore get at most  $|(K_1^1 \cup K_2^1)|<10w/k_1$ many type changes in the interval $J$ prior to completion. As observed above, this means that no stable intervals are created and $l_i$ subsides, as required.

$l_i$ \textbf{originates a firewall}. So suppose instead that, at stage $s$, all nodes in the interval $K_1^1$ are of type $\alpha$.  Given $m$ as above, another way of putting this, is that all nodes in $\bigcup_{j\leq m} K^2_j$ are of type $\alpha$ at stage $s$. We now show by induction on $r\geq m$ that, when $r\leq k_2/2$, any  nodes in $K_r^2$ must be of type $\alpha$  at stage $s$ -- i.e. that all nodes in $\mathcal{N}(l_i)$ are $\alpha$ nodes at stage $s$. So suppose that  $m\leq r<k_2/2$ and that the hypothesis holds for all $r'\leq r$. Consider $u \in K_{r+1}^2$. Let $\rho$ be the bias of $l_i$ in the initial configuration. First let us form a lower bound for the bias of $u$ in the initial configuration. The fact that $\mbox{\textbf{Smooth}}$ holds means that the leftmost and rightmost nodes in $K_r^2$ have bias at least $\rho-\frac{r}{k_2}\rho -\epsilon_0 w$. Then, since the bias can change by at most 2 if we move left or right one node, we conclude that in the initial configuration $u$ has bias 
\[ \rho_1 \geq  \rho - \left( \frac{r}{k_2}\rho + \epsilon_0 w +\frac{2(2w+1)}{k_2}\right). \]

Now we have to take into account all of the $\beta$ nodes in $\bigcup_{j\leq r} K^2_r$ which have changed type. In fact, so that we can be sure that each change of type affects the bias of $u$, we shall consider just those which lie between $l_i$ and $u$. Let $\theta$ be the proportion of the nodes in $\mathcal{N}(l_i)$ which are $\alpha$ nodes in the initial configuration (recalling that $l_i$ has borderline bias at that point) -- so that $\rho =(2\theta -1)(2w+1)$. Then the number of $\beta$ nodes in $\bigcup_{j\leq r} K^2_j$ in the initial configuration, which lie between $l_i$ and $u$, is at least: 
\[ (1-\theta-\epsilon_0)(2w+1)r/k_2. \] 

 Each change of type for one of these nodes means an increase of 2 in the bias of $u$, so that at stage $s$, $u$ has bias:

\begin{align*}
\rho_2    &  \geq  \rho  - \left( \frac{r}{k_2}\rho + \epsilon_0 w +\frac{2(2w+1)}{k_2}\right) + (1-\theta-\epsilon_0)2(2w+1)r/k_2. \\
              & = \rho + (2w+1) \left(   \frac{ (1-\theta)2r}{k_2}    -\frac{\epsilon_0 w}{2w+1} -   \frac{2}{k_2}                -     \frac{r}{k_2}(2\theta -1)   -  \frac{\epsilon_0 2r}{k_2} \right)           \\ 
              \end{align*}

 We are left to compare the terms \[  \frac{ (1-\theta)2r}{k_2}, \ \ \frac{\epsilon_0 w}{2w+1}, \ \  \frac{2}{k_2}, \ \ \frac{r}{k_2}(2\theta -1)  \mbox{ and } \frac{\epsilon_0 2r}{k_2}. \]

 Since $1/k_2\gg \epsilon_0$, the second term is much smaller than the third. Since $\theta \in (0.5,0.65)$, $r/k_2\geq 1/k_1$ and $k_1\ll k_2$, the first term is much larger than the third (to see that $\theta \in (0.5,0.65)$ recall that $l_i$ has borderline bias in the initial configuration and $\tau>\kappa>0.35$). Since $\epsilon_0$ is small, the first term is also much larger than the last. The result then follows for large $w$, since $2-2\theta$ is always more than double $2\theta -1$ for  $\theta\in (0.5,0.65)$, meaning that the first term is more than double the fourth term, and thus $\rho_2 \geq \rho$, meaning that if $u$ is a $\beta$ node, it will be unhappy.
\end{proof}

 \begin{figure} \label{firewall} 
\includegraphics[scale=0.6, clip =true, trim=2cm 0.5cm 1cm 0.5cm]{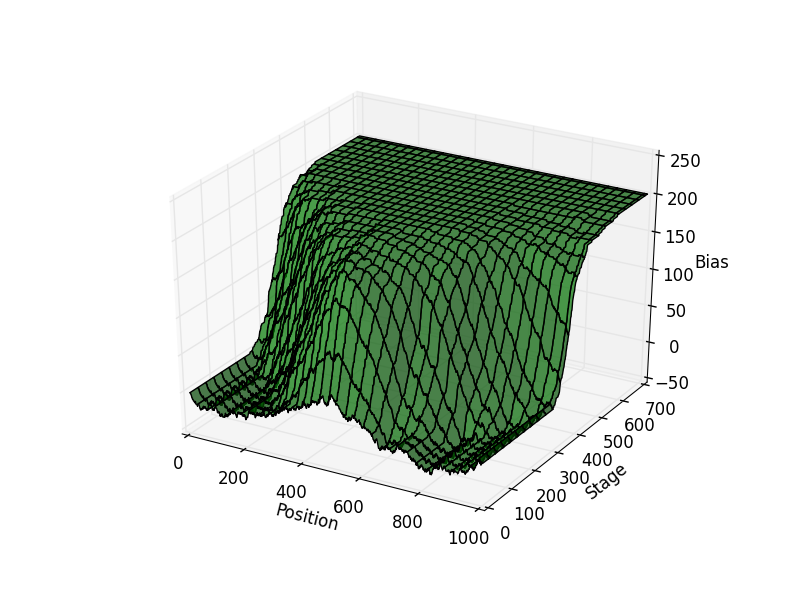}
 \caption{The creation of a firewall (extracted from simulation).}
\label{fig:firewall}
\end{figure}

\begin{lem}[Reasonable chance of firewall] \label{fork0} 
Suppose that $\mbox{\textbf{Good spacing}}$ holds. There exists $\delta>0$ which does not depend on $w$, such that if $1\leq i <k_0$, then $l_i$ originates a firewall with probability $>\delta$ (and similarly for $r_i$). 
\end{lem} 
\begin{proof} 
Rather than working directly with $l_i$, to begin with we consider $u$, which is chosen uniformly at random from amongst the nodes with borderline bias in the initial configuration. We extend the definitions of completion and firewall origination to apply to $u$ in the obvious way,  and we establish that there exists $\delta>0$ which does not depend on $w$, such that $u$  originates a firewall with probability $>\delta$. 
Let $\theta$ be the proportion of the nodes in $\mathcal{N}(u)$ which are of type $\alpha$ in the initial configuration.   Without loss of generality, suppose that $u$ has positive bias $\rho$.  Let $K_1^1(u)$ be as defined in the proof of Lemma \ref{dichotomy}, but relative to $u$, i.e.\ with $u$ replacing $l_i$ in that definition.  We first look to establish first that there exists $\delta'>0$ which does not depend on $w$, such that the probability of $u$ completing with all nodes in $K^1_1(u)$ being of type $\alpha$, is greater than $\delta'$. 

 Initially let us adopt the approximation that the nodes in $\mathcal{N}(u)$ are i.i.d.\ with probability $\theta$ of being an $\alpha$ node.

We consider a process, consisting of $\ell :=\lceil \frac{2w+1}{k_1} \rceil +1$ many steps. 

\textbf{Step} $0$. If $u$ is of type $\beta$ then define $\rho_0=\rho +2$, and otherwise define $\rho_0=\rho$. 

\textbf{Step} $s>0$ (with $s\leq \ell$). Consider the two nodes $u -s$ and $u +s$. Let $x_s$ be the number of these which are $\beta$ nodes (so $x_s\in \{ 0,1,2 \}$).  Let $y_s$ be $\Theta^{\ast}(\mathcal{N}(u-s)) -\Theta^{\ast}(\mathcal{N}(u-s+1))$. Then define $\rho_s= \rho_{s-1} + 2x_s +y_s$.   

So, given the approximation that the nodes in $\mathcal{N}(u)$ are i.i.d.\ with probability $\theta$ of being an $\alpha$ node, this gives a biased random walk.  We may also consider the mirror image process, in which $y_s$ is defined in terms of the bias at $u+s$ and $u+s-1$ instead of the bias at $u-s$ and $u-s+1$. Suppose that this process gives a set of values $\rho_s'$. If $\rho_s>\rho$ and $\rho'_s>\rho$ for all $s$ then let us say that $u$ is \emph{prone to firewall origination} - the idea is that if $u$ also completes and only has one type of unhappy node in $\mathcal{N}(u)$ in the initial configuration (though we do not yet assume these conditions), then being prone to firewall origination will mean that all the nodes in $K^1_1(u)$ must be $\alpha$ nodes at completion. To show that $u$ has some reasonable chance of being prone to firewall origination, it suffices (modulo our false assumption that the nodes in $\mathcal{N}(l_i)$ are i.i.d.) to show that $\rho_s-\rho_{s-1}$ is more likely to be positive than negative, for $0<s\leq \ell$. 

The probability that $\rho_s-\rho_{s-1}=-2$ is $\frac{1}{2}\theta^3$. The probability that $\rho_s-\rho_{s-1}=0$ is $\frac{1}{2}\theta\cdot 2\theta(1-\theta) + \frac{1}{2}\theta^2= \theta^2(1-\theta)+  \frac{1}{2}\theta^2$. 
Therefore the probability that  $\rho_s-\rho_{s-1}\in \{ 2,4,6 \} = 1- \frac{1}{2}\theta^3-\theta^2(1-\theta)-\frac{1}{2}\theta^2>\frac{1}{2}$, given that $l_i$ has positive borderline bias in the initial configuration, and so  $\theta \in (0.5,0.65)$. 

Now we must drop the false assumption of independence which was made earlier, and work with the hypergeometric distribution on $\mathcal{N}(u)$. In the random walk described above, let   $p>\frac{1}{2}$ be the probability that $\rho_s-\rho_{s-1}>0$. Now choose $p'$ with $\frac{1}{2}<p'<p$. For sufficiently large $k_1$,\footnote{This works for all $k_1>c$ for some constant $c$ (which can actually be chosen so as not to depend even on $\tau$, given that $\kappa<\tau<\frac{1}{2}$), so strictly speaking we should have required in Definition \ref{k1k2} that $k_1>c$. Since this would have been confusing at that point, we mention it now instead, and observe that there is no circularity, i.e.\ $c$ certainly need not depend on $k_1$ or any values defined in terms of $k_1$.}  when we drop the assumption of independence, the actual probability that $\rho_s-\rho_{s-1}>0$ at each step is greater than $p'$, no matter what has occurred at previous steps.

So far then, we have been able to conclude that there exists $\delta'>0$, which does not depend on $w$, such that $u$  has probability $>\delta'$ of being prone to firewall origination. Next, note that by Corollary  \ref{smooth2} and by the same argument as in the proof of Lemma  \ref{complet},  the probability that $u$ completes and satisfies  $\mathtt{Smooth}_{k_2,\epsilon_0}^{\ast}(u)$ tends to 1 as $w\rightarrow \infty$. Satisfaction of these conditions combined with being prone to firewall origination,  means that all the nodes in $K^1_1(u)$ must be $\alpha$ nodes at completion. Then, as observed in the proof of Lemma \ref{dichotomy}, this latter condition combined with satisfaction of $\mathtt{Smooth}_{k_2,\epsilon_0}^{\ast}(u)$ means that, in fact, all nodes in $\mathcal{N}(u)$ must be $\alpha$ nodes when $u$ completes. 

Finally, we move to consider $l_i$ rather than $u$. Given $l_i$, choose $v$ uniformly at random from amongst the nodes in $\mathcal{N}(l_i)$ with borderline bias. Now the information we have about $v$ completely specifies the probability that $v$ originates a firewall, and at this point, since we have only assumed $\mbox{\textbf{Good spacing}}$, the only information we have about $v$ is:  $(\ast)$ $v$ has borderline bias, belongs to a configuration which satisfies $\mbox{\textbf{Good spacing}}$ relative to \emph{some} node (which happened to be $u_0$ in this case), and for the node labelled $l_i$ within that configuration,  belongs to $\mathcal{N}(l_i)$.  For $u$ chosen uniformly at random from amongst those with borderline bias,  the probability that $u$ satisfies $(\ast)$ tends to 1 as $w\rightarrow \infty$. So if $u$ has probability $\delta'>0$ of originating a firewall, then for any $\delta<\delta'$, we may conclude that a node chosen uniformly at random from amongst those which satisfy $(\ast)$, has probability $>\delta$ of originating a firewall so long as $w$ is sufficiently large. 
Thus $v$ has  probability $>\delta$ of originating a firewall, so long as $w$ is sufficiently large. 
Now we assume that $\mbox{\textbf{Completion}}$ and $\mbox{\textbf{Smooth}}$ both hold. Then $v$ must belong to $K^1_1(l_i)$, meaning that  if  $v$ originates a firewall then $l_i$ belongs to that same firewall when $v$ completes, and so must also originate a firewall. 
\end{proof} 

The following definition gives our final event in $\Pi$ and also specifies $k_0$. 

\begin{defin}[Defining $\mbox{\textbf{Firewall}}$ and choosing $k_0$]  
We let  $\mbox{\textbf{Firewall}}$ be the event that one of the $l_i$ $i<k_0$ is defined and originates a firewall, and that the same holds for some $r_j$, $j<k_0$. According  to Lemma \ref{fork0}, given $\epsilon>0$ we can choose $k_0$ once and for all, which is large enough such that the probability  $\mbox{\textbf{Firewall}}$ does not occur is $\ll \epsilon$ for  $0\ll w\ll n$.  
\end{defin} 

 The following lemma completes the proof of Theorem \ref{3} for the simple model. The proof is given in Section \ref{def2}. 

\begin{lem}[$u_0$ ultimately joins a firewall] \label{final} Suppose that all events in $\Pi$ hold.
Let $i$ be the least such that $l_{i}$ is defined and originates a firewall, and let $j$ be least such that $r_{j}$ is defined and originates a firewall. For any $\epsilon'>0$, for all sufficiently large $w$, with probability $>1-\epsilon'$,  $u_0$ will eventually be contained in one of the two firewalls originated at $l_i$ and $r_j$.
\end{lem}

\subsection*{The standard model}  

The difficulty that arises when one moves to the standard model, is that when there are different numbers of unhappy $\alpha$ and $\beta$ nodes, it is no longer true that every unhappy node is equally likely to be chosen as part of a swapping pair. If there are more unhappy $\alpha$ nodes than unhappy $\beta$ nodes at a given stage, for example, then unhappy $\beta$ nodes belong to more unhappy pairs of opposite type than do their unhappy $\alpha$ counterparts, and so are more likely to be chosen as part of an unhappy pair. This potentially complicates our proof of Lemma \ref{complet}, that each $l_i$ and $r_i$ will almost certainly complete, for example, since it may make `steps towards completion' (as defined in that proof) less likely. The solution, just as in \cite{BK}, is to use technology developed by Wormald \cite{NW} in order to show that we can sufficiently accurately model the discrete process with a continuous one which is governed by a system of differential equations, and thereby demonstrate that  the number of unhappy nodes of each type actually remains very evenly balanced.  Our use of the Wormald machinery is very similar to the corresponding argument in \cite{BK}. We give the full proof since there are some minor differences, and to make the paper as self-contained as possible. We aim to prove the following fact, which  shows that our proof of Theorem \ref{3}  for the simple model suffices for the standard model as well (the proof of $(\diamond)$ appears in Section \ref{def2}): 

\begin{enumerate} 
\item[$(\diamond)$] Suppose $w, \tau$ (with $\kappa<\tau <\frac{1}{2})$ and $\epsilon>0$ are fixed (where  $\epsilon$ is the value we have fixed throughout this section, and which played a role in the definition of $k_0$).  Let $l_{k_0}$ and $r_{k_0}$ be as defined previously. For any $\epsilon'>0$, when  $n$ is sufficiently large the following holds with probability $>1-\epsilon'$: there exists a first stage at which there are no unhappy nodes in the interval $[l_{k_0},r_{k_0}]$ and at all stages up to this the total number of unhappy $\alpha$ nodes divided by the total number of unhappy $\beta$ nodes lies in the interval $[1-\epsilon', 1+\epsilon']$.
\end{enumerate}

\subsection*{Increasing \texorpdfstring{$\tau$}{tau} in the interval \texorpdfstring{$[\kappa,\frac{1}{2}]$}{[k,0.5]} decreases run-lengths} Let us fix $\epsilon>0$ and  $\tau_1$ and $\tau_2$ such that $\kappa<\tau_1<\tau_2<\frac{1}{2}$.  For some large $w$ and $n\gg w$ suppose that we run two version of the process, one for $\tau_1$ and the other for $\tau_2$, and then let $\ell_1$ and $\ell_2$ be the corresponding run-lengths to which $u_0$ (chosen uniformly at random) belongs in the final configuration. Our aim in this subsection is to observe that, so long as $w$ is sufficiently large, the probability that $\ell_1>\ell_2$ is greater than $1-\epsilon$.   

In order to see this we may reason as follows. In the proof of  Theorem \ref{3}, we chose $k_0$ in such a way we could be almost certain at least one of the $l_i$ and at least one of the $r_i$ would originate firewalls. We could equally well have chosen $k_0$ so that the following fails to occur with probability $\ll \epsilon$: two of the $l_i$ originate firewalls of opposite type, and similarly two of the $r_i$ originate firewalls of opposite type.  Our previous analysis then suffices to show that the following fails to occur with probability $\ll \epsilon$ for sufficiently large $w$; $u_0$ ultimately belongs to a firewall of length $>\mbox{min}\{ |u_0-l_1|, |u_0 -r_1| \}$ and of length $<[l_{k_0},r_{k_0}]$.    So it suffices to show that with probability $>1-\epsilon$,  the length of the interval $[l_{k_0},r_{k_0}]$ for the process with $\tau_2$ is less than the value $\mbox{min}\{ |u_0-l_1|, |u_0 -r_1| \}$ for the process with $\tau_1$. This then follows for sufficiently large $w$, by applying  Lemma \ref{gen} to the events $P_u$, that $u$ is $\tau_2$-unhappy, and $Q_u$ that $u$ is $\tau_1$-unhappy.

\section{The case \texorpdfstring{$\tau=\frac{1}{2}$}{tau equals 1/2}} \label{half} This case was dealt with in \cite{BK}, for both the simple and standard models. 
Although we have talked about threshold behaviour occurring at both $\tau=\kappa$ and $\tau=\frac{1}{2}$, it is worth noting that these two thresholds are of essentially different types. For a fixed large $w$, as one increases the value of $\tau$ past the lower threshold $\kappa$, one will very suddenly observe entirely different results for the final configuration -- from a situation in which almost no type changing occurs, one moves to a situation where $u$ chosen uniformly at random can be expected to belong to a firewall of length exponential in $w$ in the final configuration. As one increases $\tau$ gradually up to $\frac{1}{2}$, however, what one observes is a smooth decrease in the expected length of the firewall to which $u$ belongs, with the situation only reversing very suddenly when $\tau>\frac{1}{2}$. In order to observe the difference in behaviour for when $\tau$ is just less than $\frac{1}{2}$ or when $\tau=\frac{1}{2}$, one should consider instead these two fixed $\tau$, and gradually increase $w$. Then, when $\tau<\frac{1}{2}$, the expected length of the firewall to which $u$ belongs grows exponentially, while for $\tau=\frac{1}{2}$ it grows only polynomially in $w$. Figure \ref{fig:st}  displays this difference in behaviour, and also illustrates the distinct mechanisms involved for the two cases. These diagrams illustrate the process exactly as in Figure \ref{fig:thresbehav}, the difference being that now $n=1000000$ rather than $100000$. It is also interesting to observe the difference in the way that the firewall spreads as a `wave' emanating from unhappy nodes in the initial configuration for different $\tau<\frac{1}{2}$. When $\tau$ is just above $\kappa$ the wave is sharply defined, (as illustrated in Figure \ref{fig:thresbehav}), much less so  when $\tau$ is just below $\frac{1}{2}$.

 \begin{figure}
\begin{center}
\epsfig{file=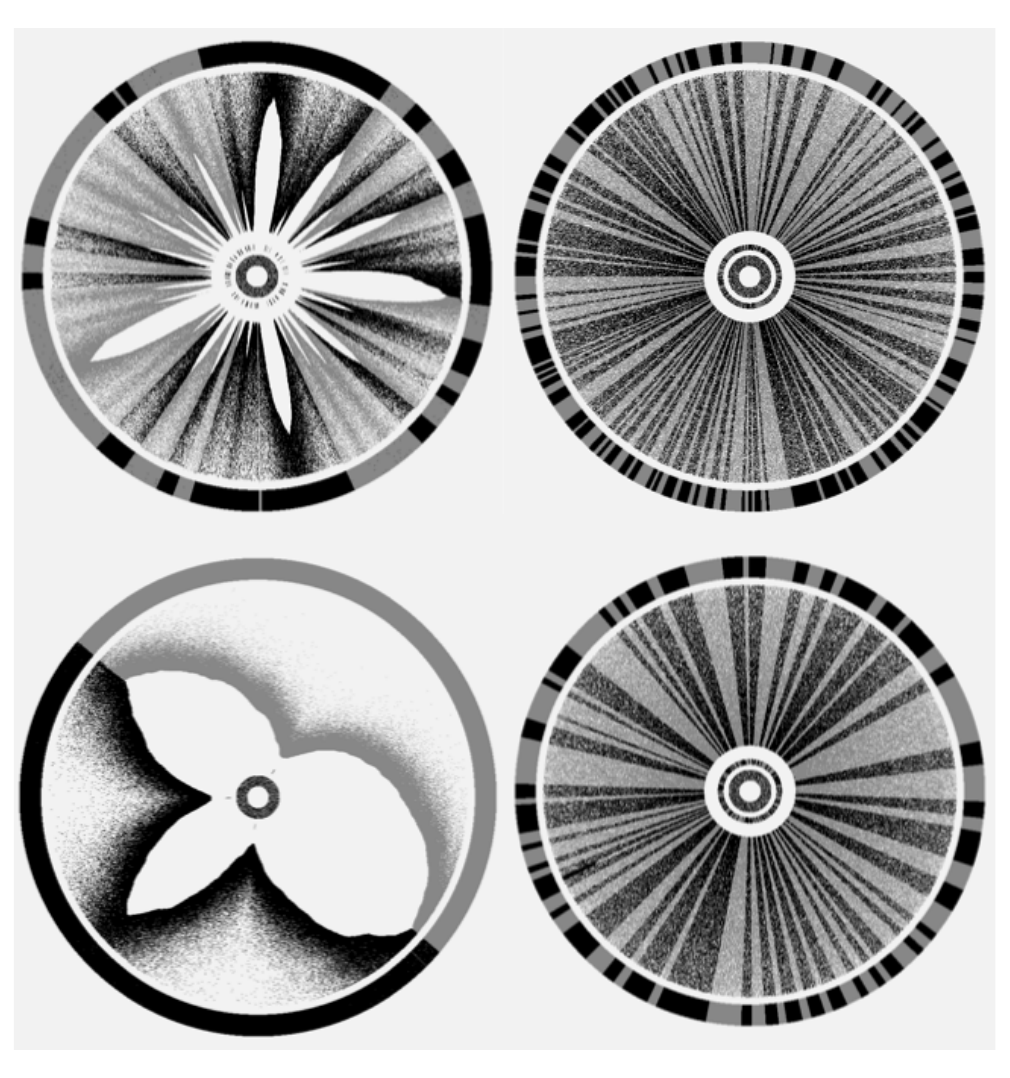,width=12cm}
\end{center}  
 \caption{The second kind of threshold behaviour: top left $w=1500, \tau=0.48$; bottom left $w=3000, \tau=0.48$; top right $w=1500, \tau=0.5$; bottom right $w=3000, \tau=0.5$.}
 \label{fig:st}
\end{figure}

\section{The case \texorpdfstring{$\tau>\frac{1}{2}$}{tau greater than 1/2}}  \label{complete} 

We consider the standard model first. 
Note that if $\frac{1}{2}<\tau\leq \frac{w+1}{2w+1}$ then the process is identical to that for $\tau=\frac{1}{2}$. We therefore assume in what follows that $w$ is sufficiently large to ensure $\tau > \frac{w+1}{2w+1}$, which is equivalent to the condition that adjacent nodes of opposite types cannot both be happy. We show that with probability $1-\epsilon$ the starting configuration is such that complete segregation results with probability 1, where $\epsilon \rightarrow 0$ as $n\rightarrow \infty$.  Recall that complete segregation refers to any configuration in which all $\alpha$ nodes belong to a single run, and that, as observed in Section \ref{notation}, once a completely segregated configuration is reached all future configurations must be completely segregated.

 Let $x$ be the number of $\alpha$ nodes in the initial configuration, and let $x_p=x/n$. Our task is to show that for sufficiently large $n$ it is possible to reach a completely segregated configuration from any other, so long as $x_p$ is close to $\frac{1}{2}$. To prove this, however, one must be able to ensure the existence of unhappy individuals of both types at each step along the way. Surprisingly,  the following lemma turns out to be the most tricky aspect of proving Theorem \ref{5}. The proof is given in Section \ref{def3}. The mention of intervals of length $4w+1$ in the statement of the lemma is a just a technicality which will be convenient later.

 \begin{lem} \label{combo} With probability $1-\epsilon$ the initial  configuration on $n$ nodes will have a value $x$ sufficient to ensure  unhappy nodes of both types  outside any interval of length $4w+1$ in \textbf{any} configuration with $x$-many  $\alpha$ nodes and $(n-x)$-many $\beta$ nodes (regardless of ordering),  where $\epsilon \rightarrow 0$ as $n\rightarrow \infty$. 
 \end{lem}

From now on, then,  we assume that all configurations considered have unhappy nodes of both types outside any  given interval of length $4w+1$.  We build a list of configurations from which it is possible to reach complete segregation. Let $\Theta_s(u)$ denote the bias of $u$ at the end of stage $s$. Note that in order for a swap between an unhappy $\alpha$ node $u$ and an unhappy $\beta$ node $v$ to be legal at stage $s+1$,  it suffices either that $v\notin \mathcal{N}(v)$ and $\Theta_s(v)\geq \Theta_s(u)-2$ (since then $v$ changing type will increase the bias at that position by 2 to become at least $\Theta_s(u)$, and similarly $u$ changing type will decrease the bias by 2 to become at most $\Theta_s(v)$), or else that $v\in \mathcal{N}(u)$ and $\Theta_s(v)\geq \Theta_s(u)$. One consequence of this, is that an unhappy $\alpha$ node can always legally swap into the position next to a happy $\alpha$ node if occupied by a $\beta$ node. 

 Consider first any configuration which is not completely segregated, but which has a run of length at least $2w$. Without loss of generality, suppose that this is a run of  $\alpha$ nodes  occupying the interval $[a,b]$, where this interval is chosen to be of maximum possible length. Our aim is to show that from this configuration, one may legally reach another with greater maximal run length.   Now if the nodes $a$ and $b$ are both happy then the length of the interval ensures that all nodes in the run are happy -- this follows by induction on the distance from the edge of the interval by considering the difference between successive neighbourhoods. In this case let $u$ be an unhappy $\alpha$ node (whose existence we have assumed is guaranteed) and let $c\in \{ a,b \}$ be distance at least $w+1$ from $u$.  Then $u$ and the $\beta$ neighbour of $c$ may legally be swapped, increasing the length of the run by at least 1.

 So suppose instead that at least one of the individuals  $a$ and $b$ is not happy, and without loss of generality  suppose that $a$ has bias less than or equal to  $b$. Then $a$ and  $b+1$ may legally be swapped. Performing this swap causes position $b+1$ to have at least the same bias as $b$ did before the swap, and causes  $a+1$ to have at most the same bias as $a$ did before the swap. Thus, the swap has the effect of shifting the run one position to the right and may be repeated until the length of the run is increased by at least 1, i.e.\  for successive $i\geq 0$ we can swap the nodes $a+i$ and $b+i+1$, so long as the latter is of type $\beta$. The first stage at which the latter is of type $\alpha$ the length of the run has been increased.  Putting these observations together,  we conclude that from any configuration which  has a run of length at least $2w$ it is possible to reach full segregation. 
   
   Next consider a configuration in which the longest run $[a,b]$ is of length at least $w$, but strictly less than $2w$. We shall suppose that $[a,b]$ contains $\alpha$ nodes, the case for $\beta$ nodes is similar. Let $c$ be the first $\alpha$ node strictly to the left of $a$. If $c$ is unhappy, then we may legally swap $c$ and $a-1$, strictly increasing the length of the longest run. In order to see this, compare the neighbourhood of $c$ before the swap with the neighbourhood of $a-1$ after the swap. If $|c-a|>w$, then prior to the swap there are at most $w+1$ many $\alpha$ nodes in $\mathcal{N}(c)$, while $\mathcal{N}(a-1)$ has $w+1$ many after the swap. If $|c-a|\leq w$, then in forming $\mathcal{N}(a-1)$ from $\mathcal{N}(c)$, we lose some nodes to the left of $c$, some of which may be $\beta$ nodes, and gain some to the right of $a-1$, all of which are $\alpha$ nodes. If $c$ is happy then the distance between $c$ and $a$ is at most $w$ (otherwise $c$ would only have $\alpha$ nodes in the left half of its neighbourhood) and we may successively swap  unhappy $\alpha$ nodes from outside the interval $[c-w,a +2w]$ with  the nodes $c+i$ for $1\leq i <a-c$ (starting with $i=1$ and proceeding in order), in order to strictly increase the length of the longest run. This follows because with the $i$th swap $c+i$ becomes a happy $\alpha$ node.

    It remains to show that we can always move to a configuration with a run of length at least $w$. 
    Given any configuration (satisfying the condition that any other configuration reached by swapping unhappy nodes has unhappy nodes of both types outside any  given interval of length $4w+1$), we first of all perform a procedure which selects a number of individuals with relatively unfavourable bias. Let $u_0$ be  an $\alpha$ node with least possible bias. Given $u_k$ for $k<w^2$ choose $u_{k+1}$   outside $\bigcup_{j\leq k} \mathcal{N}(u_j)$ which has least possible bias amongst all the $\alpha$ nodes at such sites (we can suppose that $n$ is large enough that such a choice is possible). Once $u_k$ is defined for each $k\leq w^2$, choose $v_0$  outside   $\bigcup_{j\leq w^2} \mathcal{N}(u_j)$, which has the greatest possible bias amongst all $\beta$ nodes at such sites. Given $v_k$ for $k<w^2$ choose $v_{k+1}$ outside  $\bigcup_{j\leq w^2} \mathcal{N}(u_j)$ and outside $\bigcup_{j\leq k} \mathcal{N}(v_j)$ which has greatest possible bias amongst all the $\beta$ nodes at such sites. Now choose an interval $[a,b]$ of length $w$ such that $[a-w, b+w]$  has no intersection with any of the neighbourhoods $\mathcal{N}(u_j)$ or $\mathcal{N}(v_j)$ for $j\leq w^2$ (again assuming $n$ is sufficiently large).

    The point of this procedure was to provide a pool of individuals which we can use for legal swaps.  Now we perform another  iteration, which produces a run in the interval $[a,b]$. At any point during the iteration we say that a neighbourhood $\mathcal{N}(u_j)$ or $\mathcal{N}(v_j)$ is \emph{tarnished} if any node in that neighbourhood has been involved in a swap since the iteration began.

\textbf{ Step 0}. Let $\gamma=\alpha$ if  $a$ is of type $\alpha$, otherwise let $\gamma =\beta$. At any given point in the iteration the value of $\gamma$ specifies the type of run that we are looking to produce, which may change as the iteration progresses.  Set $i:=0$. 

\textbf{Step $s>0$}. We are given a configuration in which all nodes in $[a,a+i]$ are of type $\gamma$, and no nodes in $[a-w, b+w]$ have yet been involved in swaps, except possibly those in $[a,a+i]$. Also, at most $i^2$ of the neighbourhoods $\mathcal{N}(u_k)$ or $\mathcal{N}(v_k)$ are tarnished.  Let $j$ be the least $>i$ such that  $a+j$ is not of type $\gamma$. If $j\geq w$ then the iteration is  complete, and we carry out no further instructions. Otherwise, if  $a+j$ is unhappy, we divide into two subcases. If $a+j-1$ is happy then we can legally swap any unhappy $\gamma$ node from outside the interval $[a-w, b+w]$ with $a+j$. If  $a+j-1$ is unhappy,  we can select any $u_k$ or $v_k$ of type $\gamma$ whose neighbourhood is not tarnished and legally swap that node with  $a+j$. In order to see why this is the case suppose that $\gamma=\alpha$, the case for $\beta$ is similar. Firstly, the fact that at most $i^2$ of these neighbourhoods are tarnished means that we  certainly have untarnished neighbourhoods of the appropriate type to choose from. Also, before the iteration began any $u_k$  had at most bias $\Theta(\mathcal{N}(a+j-1))$ (as that value was defined then). At the present point in the iteration, $\mathcal{N}(u_k)$ being untarnished means that the neighbourhood $\mathcal{N}(u_k)$ is unchanged, while any node in the interval $[a-w, b+w]$ which is now of different type than before, is presently of type $\alpha$.  Swapping $u_k$ with $a+j$ will cause it to have bias at least the same as that which $a+j-1$ had before the swap.  In either of these two subcases, once the swap is performed,  redefine $i:=j$ and proceed to step $s+1$. The final case to consider is that the individual at $a+j$ is happy. In this case, if $\gamma=\alpha$ then redefine $\gamma =\beta$, or if instead $\gamma=\beta$ then redefine $\gamma=\alpha$.  For successive values of $k$, $1\leq k \leq j$,  swap an unhappy $\gamma$ individual from outside the interval $[a-w,b+w]$ with $a+j-k$, causing that node to become happy.  Once this sequence of swaps is complete, redefine $i:=j$ and proceed to step $s+1$.  

\subsection*{The simple model} We wish to show that, whatever the initial configuration, with probability 1 a configuration in which all nodes are of the same type is reached. For $\gamma \in \{ \alpha, \beta \}$, if $\gamma =\beta$ then let $ \gamma^{\ast}=\alpha$, and otherwise let $ \gamma^{\ast}= \beta$.  For the purposes of this discussion, we shall say that $\gamma$ is a \emph{minority type} if there are at most as many nodes of type $\gamma$ as of type $\gamma^{\ast}$. The proof of Lemma \ref{combo} actually suffices to show that if $\gamma$ is a minority type, then there exists at least one $\gamma$ node which is unhappy, and which would have at least as many neighbours of its own (new) type if it changed type. Given any configuration, one can then select $\gamma$ of minority type and successively select nodes of type $\gamma$ which can legally have their type swapped, until all nodes are of type $\gamma^{\ast}$.  

\section{Proofs deferred from Section \ref{taulessthankappa}} \label{def1} 

\begin{replemma}{lem:binom}
Suppose $h: \mathbb{N} \to \mathbb{N}$ and $p \in (0,1)$ are such that there exists $k \in (0,1)$ so that for all large enough $N$, we have $\left( 1+ \left( \frac{1}{p} -1 \right) k \right) h(N)> N \geq h(N)> pN > 0$. Then for all large enough $N$, if $X_N \sim b(N,p)$, we have 
$$P \left( X_N = h(N) \right) \ \ \leq \ \  \textbf{P} \left( X_N \geq h(N) \right) \ \ \leq \ \ \left( \frac{1}{1-k} \right) \cdot \textbf{P} \left( X_N = h(N) \right).$$
That is to say in asymptotic notation, $\textbf{P} \left( X_N \geq h(N) \right)= \Theta \left( \textbf{P} \left( X_N = h(N) \right) \right)$.
\end{replemma}

\begin{proof}
Fix $N$, and for the duration of this proof let $h$ denote $h(N)$. The first inequality is self-evident. For the second, $\textbf{P} \left( X_N \geq h \right) = \ds \sum_{j=h}^N p^j (1-p)^{N-j} \bpm N \\ j \epm $. Looking at the ratio  between successive terms of this sum, we find
$$p^{j+1} (1-p)^{N-j-1} \bpm N \\ j+1 \epm \Big{/} \left( p^{j} (1-p)^{N-j} \bpm N \\ j \epm \right) = \frac{p}{1-p} \cdot \frac{N-j}{j+1}.$$

What is more, this ratio decreases at each step, since $\frac{N-j}{j+1}> \frac{N-(j+1)}{j+2}$. Thus for all $j = h, \ldots, N$, we have $\frac{p}{1-p} \cdot \frac{N-j}{j+1}\leq  \frac{p}{1-p} \cdot \frac{N-h}{h+1} <k$. Thus
$$\textbf{P} \left( X_N \geq h(N) \right) < p^h (1-p)^{N-h} \bpm N \\ h \epm \sum_{j=0}^{N-h} k^j < p^h (1-p)^{N-h} \bpm N \\ h \epm \left( \frac{1}{1-k} \right).$$
\end{proof}

\begin{replemma}{gen}  

Let $P_u$ and $Q_u$ be events which only depend on the neighbourhood of $u$ in the initial configuration, meaning that if the neighbourhood of $v$ in the initial configuration is identical that of  $u$ (i.e. for all $i \in [-w,w]$, $u+i$ is of the same type as $v+i$), then $P_u$ holds iff $P_v$ holds and $Q_u$ holds iff $Q_v$ holds. Suppose also that:
\begin{enumerate}[(i)]  
\item $\textbf{P}(P_u)\neq 0$ and $\textbf{P}(Q_u)\neq 0$. 
\item  For all $k$, for all sufficiently large $w$, $\textbf{P}(P_u)/\textbf{P}(Q_u) >kw$.
\end{enumerate}   
For any $u$, let $x_u$ be the first node to the left\footnote{By the first node to the left of $u$ satisfying a certain condition we mean the first in the sequence $u,u-1,u-2,\cdots$  which satisfies the condition.}  of $u$ such that either $P_{x_u}$ or $Q_{x_u}$ holds.  For any $\epsilon>0$, if $0\ll w \ll n$ then the following occurs with probability $>1-\epsilon$ for $u$ chosen uniformly at random:  $x_u$ is defined and for no node $v$  in $[x_u-2w,x_u]$ does $Q_v$ hold.

An analogous result holds when `left' is replaced by `right'. 
\end{replemma} 
\begin{proof}  For a general node $u$, we'll say it is of \emph{type 1} if $P_u$ holds and no node $u' \in [u - 2w, u]$ satisfies $Q_{u'}$, and of \emph{type 2} if $Q_u$ holds. Let $\pi$ be the probability, for $u$ chosen uniformly at random, that $x_u$ is
defined and of type 1. Our aim is to show that $\pi > 1 -\epsilon$, for all $0 \ll w \ll  n$.
We define an iteration which assigns colours to nodes as follows. 

Step 0. Pick a node $t_0$ uniformly at random. 

Step $s+1$. Let $v_s$ be the first node to the left of $t_s$, such that $v_s=x_{t_s}$ or such that $s>0$ and $v_s=t_0$. Carry out the instructions for the first case below which applies: 

\begin{enumerate} 
\item If there exists no such $v_s$ then terminate the iteration. 
\item If $v_s=t_{0}$ and $s>0$ then make $t_s$ undefined  and terminate the iteration.

\item If there exist any nodes $x$ in $[v_s-2w,v_s]$ such that $Q_x$ holds, then colour $t_s$ black, otherwise it is of type 1 and we colour it white.  Define $t_{s+1}=v_s-(2w+1)$, unless $t_0$ lies in the interval $[v_s-(2w+1),v_s)$, in which case terminate the iteration. 
\end{enumerate} 

This completes the description of the iteration. Let  $S$ be the maximum value of $s$ for which $t_s$ is defined
and coloured. \\

First note that hypothesis (i) in the statement of the lemma guarantees that $S \rightarrow \infty$ as $n \rightarrow \infty$. Similarly, we may assume that at least one $t_s$ is coloured black. Now let $\pi'$ be the proportion of the
$t_s$ which are coloured white, and observe that with high probability,  $\pi'\rightarrow \pi$ as $n\rightarrow \infty$, i.e.\ for all $\epsilon'>0$, for all  $0\ll n$, we have $|\pi-\pi'|<\epsilon'$ with probability $>1-\epsilon'$. In order to see this, consider the situation at the beginning of step $s+1$ of the iteration, when $s>0$. In order to define $t_s$, we moved left until finding  $x_{t_{s-1}}$, and then moved a further $2w+1$ nodes to the left. So far, then, nothing that has happened in the iteration tells us anything about the neighbourhood of $t_s$ and all those nodes to the left of $t_s$ and strictly to the right of $t_0$. So long as $S>s$, the way in which $t_s$ is coloured depends only on these nodes.

So, we now wish to show that for all sufficiently large $w$, with probability tending to $1$ as $n \rightarrow \infty$, we
have $\pi' > 1 - \epsilon$. Let $\rho$ be the ratio of type 1 nodes to type 2 nodes (in $[0,n-1]$), and consider the interval $(t_{s+1}, t_s]$. However $t_s$ is coloured, we have at most $(2w + 1)$ many type 1 nodes in here, namely some subset of
$[v_s-2w, v_s]$. Similarly, if $t_s$ is coloured black then we get at least one type 2 node in here. Thus, summing
over all intervals $(t_{s+1}, t_s]$, we find that
$$\rho \leq \frac{(2w+1)S}{|\{ s \leq S \, : \, t_s \textrm{ is black } \}|} \ \ \mbox{which is to say } \ \  \rho \leq \frac{2w+1}{1 - \pi'}.$$

So if we can show that $\rho \gg  2w + 1$, it will follow that $\pi'$ is close to $1$ as required.
Let $p_1$ and $p_2$ be the probabilities that uniformly randomly selected $u$ itself is of type 1 and 2 respectively. Then, by the weak law of large numbers, we have $\left| \rho - \frac{p_1}{p_2}\right| < \epsilon'$ with probability approaching 1 as $n\rightarrow \infty$. Now property (ii) from the statement of the lemma means that for each $k$, for all sufficiently large $w$, $\frac{p_1}{p_2} > kw$, giving the result.
\end{proof}

\section{Proofs deferred from Section \ref{hardtau}} \label{def2}     

\subsection{Proving Lemma \ref{stableclear}} First, let us restate the result:

\begin{replemma}{stableclear}  
For fixed $k_0$ and $\epsilon'>0$,  $\textbf{P}(\mbox{\textbf{Stable clear}})>1-\epsilon'$ for all  $w$ sufficiently large (and all $n$ sufficiently large compared to $w$). 
\end{replemma} 
\begin{proof} 
For any $u$, let $x_u$ be the first node to the left of $u$ which, in the initial configuration,  either has high bias or else belongs to an interval which is $\tau_0$-stable.  Recall that  $\kappa <\tau_0<\tau$ and that  $k_0$ is fixed while we take $w$ large. Note also that,  at step $i$ of the iteration which defines the sequence $l_1,l_2,..$, the fact that $l_i$ has borderline bias tells us nothing about the neighbourhood of $l_i-(2w+1)$ or the nodes to the left of this neighbourhood (but at a distance small compared to $n$). It therefore suffices to show that for any $\epsilon'>0$, if $0\ll w \ll n$ then the following occurs with probability $>1-\epsilon'$ for $u$ chosen uniformly at random:  $x_u$ is defined and no node in $[x_u-2w,x_u]$ belongs to a $\tau_0$-stable interval (and that an analogous result holds when `left' is replaced by `right'). 

 In order to see this, consider for a moment working with $\tau_0$ rather than $\tau$. In Section \ref{taulessthankappa}  we showed that for any $k$ and for sufficiently large $w$, the probability that a randomly chosen node is unhappy (with respect to $\tau_0$ rather than $\tau$)  is more than $kw$ times the probability that a randomly chosen node belongs to a $\tau_0$-stable interval. Now, since $\tau>\tau_0$, this only makes unhappy nodes \emph{more} likely in the initial configuration. Thus, working with $\tau$, for any $k$ and for all sufficiently large $w$,  the probability that a randomly chosen node is unhappy is more than $kw$ times the probability that a randomly chosen node belongs to a $\tau_0$-stable interval. The result then follows from Lemma \ref{gen}.  
\end{proof}

\subsection{Proving Lemma \ref{lismooth}} We shall need to prove an intermediate lemma first. 
 It turns out that in order to get around the fact that we can't immediately guarantee the required version of condition (ii) of Lemma \ref{gen},  what we need is a bound on the number of nodes of borderline bias which can be expected in the neighbourhood of $u$ such that $\mathtt{Bb}^{\ast}(u)$ holds. The following lemma is a step in this direction:

\begin{defin}
Given $k\geq 1$, let $\mathcal{N}_k(u)$ be the interval $[ u-\lceil w/k \rceil, u+ \lceil w/k \rceil]$. $\mathtt{Bb}^{\ast}(u,k,z)$ holds if $\mathtt{Bb}^{\ast}(u)$ holds and there are at most $z$ many nodes with borderline bias in  $\mathcal{N}_k(u)$ in the initial configuration. 
\end{defin}

\begin{lem} \label{z} Suppose we are given only that $\mathtt{Bb}^{\ast}(u)$ holds. 
For any  $\epsilon'>0$ there exists $z$ such that for $0\ll k \ll w$,  $\mathtt{Bb}^{\ast}(u,k,z)$ holds with probability $>1-\epsilon'$. 
\end{lem}    
\begin{proof} 
We consider the case that $u$ has positive bias $\rho$, and let $\theta$ be the proportion of the nodes in $\mathcal{N}(u)$ which are of type $\alpha$ (so  $\rho=(2 \theta -1)(2w+1)$). The case that $u$ has negative bias is almost identical. 

First of all, we want to show that for sufficiently large $z$, if we step $\lfloor z/2 \rfloor $ many nodes to the left (or right) of $u$, then we will very probably have bias which is well below $\rho$. The argument here is very similar to the proof of Corollary \ref{smooth2} -- in forming the neighbourhood of $v=u-\lfloor z/2 \rfloor $ we lose $\lfloor z/2 \rfloor $ many  nodes from $\mathcal{N}(u)$, with the proportion of $\alpha$ nodes being close to $\theta$, and we gain the same number of new nodes, with the proportion of $\alpha$ nodes here being close to $\frac{1}{2}$. Arguing more precisely,  from amongst the nodes that we lose from $\mathcal{N}(u)$, the expected number of $\alpha$ nodes, $x_0$ say,  is $\theta \lfloor z/2 \rfloor $. By applying  Chebyshev's Inequality just as in the proof of Lemma \ref{smooth}, we conclude that for any $\epsilon''>0$ and for all sufficiently large $z$, $\textbf{P}(|(x_0/\lfloor z/2 \rfloor) -\theta| >\epsilon'') \ll\epsilon'$.  Now consider  $x_1$,  the number of $\alpha$ nodes in $\mathcal{N}(v) \backslash \mathcal{N}(u)$. The weak law of large numbers tells us that for  any $\epsilon''>0$ and for all sufficiently large $z$, $\textbf{P}(|(x_1/\lfloor z/2 \rfloor) -\frac{1}{2}| >\epsilon'' )\ll\epsilon'$. Combining these facts gives that for any $m>0$ and for all sufficiently large $z$:

\[ \textbf{P}(\rho- \Theta(\mathcal{N}(v))<m) \ll \epsilon'.\] 

So far then, we have considered moving $\lfloor z/2 \rfloor $ many nodes to the left of $u$, and have concluded that the bias at this node $v$ will very probably be well below $\rho$ (a similar argument also applies, of course, moving to the right). Now we have to show that as we move left from $v$, so long as we remain within $\mathcal{N}_k(u)$ the bias will very probably remain below $\rho$.  In order to do this, we approximate the bias as we move to the left, by a biased random walk. Recall that for a biased random walk starting at value $-m\leq -1$, with probability $p> \frac{1}{2}$ of going down at each step and probability $1-p$ of going up, the probability of ever hitting 0 is: \[ \left( \frac{1-p}{p} \right)^{m}. \]  
So let us briefly adopt the approximation that  nodes in $\mathcal{N}(u)$ are i.i.d.\ random variables, each of which has probability $\theta$ of being of type $\alpha$. Then, as we move left one position from a location in $\mathcal{N}_k(u)$ to the left of $u$, the probability that the bias increases by 2 is $\frac{1}{2}(1-\theta)$, the probability that the bias remains the same is $\frac{1}{2}$, and the probability that the bias decreases by 2 is $\frac{1}{2}\theta$. Removing those steps at which the bias does not change, we get a biased random walk with probability $\theta$ of going down at each step and probability $1-\theta$ of going up. Choose $\theta'$ with $\frac{1}{2}<\theta' <\theta$. Now, dropping the false assumption of independence, by taking $k$ sufficiently large we ensure that as we take successive steps left from $v$ inside the interval $\mathcal{N}_k(u)$, at each step, no matter what has occurred at previous steps, the probability of the bias increasing is less than $\frac{1}{2}(1-\theta')$, the probability of the bias remaining the same is $\frac{1}{2}$, and the probability of the bias decreasing is greater than $\frac{1}{2}\theta'$. Thus, if the bias at $u-\lfloor z/2 \rfloor$ is $\leq \rho-m$, then the probability that any nodes in the interval $[u-\lceil w/k \rceil, u-\lfloor z/2 \rfloor)$ have high bias is less than $(\frac{1-\theta'}{\theta'})^m$. 

Finally, let $m$ be such that  $(\frac{1-\theta'}{\theta'})^m\ll \epsilon'$, and let $z$ be sufficiently large that, for $v=u-\lfloor z/2 \rfloor$,  
$\textbf{P}(\rho- \Theta(\mathcal{N}(v))<m) \ll \epsilon'$. 
\end{proof}

The reader might observe that the proof of Lemma \ref{z} actually establishes something stronger, and perhaps also more natural, than claimed. Suppose we are given that $\mathtt{Bb}^{\ast}(u)$ holds. 
The proof suffices to show that for any  $\epsilon'>0$ there exists $z$ such that, for $0\ll k \ll w$,  the probability there are more than $z$ nodes of high bias in $\mathcal{N}_k(u)$ is less than $\epsilon'$. One might reasonably wonder why we did not define $\mathtt{Bb}^{\ast}(u,k,z)$ to reflect this stronger condition -- that there are at most $z$ many nodes of \emph{high} bias (rather than borderline bias) in $\mathcal{N}_k(u)$. The reason is that in later counting arguments we shall need failure of the condition (in specific circumstances) to guarantee  that there are, in fact, at least $z$ many nodes of borderline bias in the relevant neighbourhood for which the condition fails.

Of course, Lemma \ref{z}, only restricts the number of nodes of borderline bias that will normally occur in the interval $\mathcal{N}_k(u)$. We have to be able to deal with a larger interval:

\begin{defin} 
We say that $\mathtt{Bb}^{\ast}(u,z)$ holds if $\mathtt{Bb}^{\ast}(u)$ holds and there are at most $z$ many nodes of borderline bias in the interval $[u-(2w+1), u+(2w+1)]$ in the initial configuration. 
\end{defin} 

Now note that if $k_2$ is sufficiently large compared to $k_1$, if $\epsilon'$ is sufficiently small, and if  $\mathtt{Bb}^{\ast}(u,k_1,z)$ and $\mathtt{Smooth}^{\ast}_{k_2,\epsilon'}(u)$  both hold, then in the initial configuration there are at most $z$ many nodes of borderline bias in the interval $[u-(2w+1), u+(2w+1)]$.  Applying Lemmas \ref{z} and \ref{smooth2} we therefore have:  

\begin{coro}[Few nodes of borderline bias] \label{hb}  Suppose we are given that $\mathtt{Bb}^{\ast}(u)$ holds.  For any $\epsilon'>0$, $z$ may be chosen so that for all sufficiently large $w$, $\mathtt{Bb}^{\ast}(u,z)$ holds with probability $>1-\epsilon'$.  
\end{coro}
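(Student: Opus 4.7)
The plan is to combine Lemma \ref{z} with Corollary \ref{smooth2}, exactly as flagged in the paragraph preceding the statement: fix $k_1$ first (this is the quantity that governs how tightly we control the number of borderline bias nodes near $u$), then choose a smoothness parameter $\delta$ and a partition size $k_2$ in terms of $k_1$, in such a way that smoothness rules out any borderline bias nodes in the annulus $[u-(2w+1),u+(2w+1)] \setminus \mathcal{N}_{k_1}(u)$.

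In detail, I would first invoke Lemma \ref{z} to produce $z$ together with a choice of $k_1$ such that, for all sufficiently large $w$, $\mathtt{Bb}^{\ast}(u,k_1,z)$ holds with probability at least $1-\epsilon'/2$. Writing $\rho=|\Theta(\mathcal{N}(u))|$, the event $\mathtt{Smooth}^{\ast}_{k_2,\delta}(u)$ pins $\Theta(\mathcal{N}(v_j))$ to within $\delta w$ of $\rho\,(k_2-|j|)/k_2$ at the anchor points $v_j$. For any $v_j$ lying outside $\mathcal{N}_{k_1}(u)$, the ratio $|j|/k_2$ is bounded below by a constant of order $1/k_1$, so the principal term $\rho\,(k_2-|j|)/k_2$ sits strictly below the borderline threshold $(1-2\tau)(2w+1) \approx \rho$ by a gap of order $w/k_1$. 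Choosing $\delta$ small compared with $1/k_1$ makes the smoothness error negligible compared with this gap. Since shifting a neighbourhood by one position changes $\Theta$ by exactly $\pm 2$, the bias at two adjacent anchor points differs by at most $O(w/k_2)$, which is also absorbed by the gap as soon as $k_2$ is taken large compared with $k_1$. Hence no node in the annulus carries borderline (or even high) bias.

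With these choices in place, Corollary \ref{smooth2} applied with parameters $k_2$ and $\delta$ yields $\mathtt{Smooth}^{\ast}_{k_2,\delta}(u)$ with probability at least $1-\epsilon'/2$ for all sufficiently large $w$, and a union bound with the event produced by Lemma \ref{z} gives total failure probability less than $\epsilon'$. On the intersection of the two events, the at most $z$ nodes of borderline bias found in $\mathcal{N}_{k_1}(u)$ are the only candidates in the entire interval $[u-(2w+1),u+(2w+1)]$, establishing $\mathtt{Bb}^{\ast}(u,z)$.

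The main (minor) obstacle is the order of quantifiers: the gap between the anchor-point bias and the threshold scales like $w/k_1$, while the interpolation error between consecutive anchors scales like $w/k_2$, so one has to fix $k_1$ first, then pick $\delta$ small in terms of $k_1$, and finally pick $k_2$ large in terms of both $k_1$ and $\delta^{-1}$. Once this ordering is respected, the interpolation step between consecutive anchors is routine and the conclusion follows.
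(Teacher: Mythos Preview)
Your proposal is correct and follows exactly the route the paper takes: combine Lemma~\ref{z} (to bound the borderline-bias nodes inside $\mathcal{N}_{k_1}(u)$) with Corollary~\ref{smooth2} (smoothness rules them out in the rest of $[u-(2w+1),u+(2w+1)]$), then union-bound. Your write-up in fact supplies the interpolation-between-anchors detail and the correct quantifier order ($k_1$, then $\delta\ll 1/k_1$, then $k_2\gg k_1$) that the paper leaves implicit.
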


We are now finally ready to prove Lemma \ref{lismooth}, which we now restate: 

\begin{replemma}{lismooth}
For any node $u$, let $x_u$ be the first node to the left of $u$ which has high bias in the initial configuration.  For any $\epsilon'>0$ and $k\geq 1$, if $0 \ll w \ll n$ and $u$ is chosen uniformly at random, then  $x_u$ is defined and  $\mathtt{Smooth}^{\ast}_{,k,\epsilon'}(x_u)$ holds with probability $>1-\epsilon'$. 

An analogous result holds when `left' is replaced by `right'. 
\end{replemma}
\begin{proof} 

 Applying  Corollary \ref{hb},  choose $z$ such that, for sufficiently large $w$, if we are given that $v$ has borderline bias, then  $\mathtt{Bb}^{\ast}(v,z) $ fails to hold with probability $\ll \epsilon'$, i.e.\ putting $\epsilon_1=\textbf{P}(\neg \mathtt{Bb}^{\ast}(v,z)| \mathtt{Bb}^{\ast}(v))$,  choose $z$ so that $\epsilon_1\ll \epsilon'$ for sufficiently large $w$.  
 Let  $\epsilon_2 =\textbf{P}(\neg \mathtt{Smooth}_{k,\epsilon'}^{\ast}(v)|\mathtt{Bb}^{\ast}(v))$. Then Corollary \ref{smooth2} gives that    for all sufficiently large $w$, $\epsilon_2\ll \epsilon'/z$ -- just apply the statement of the corollary to $k$ and $\epsilon''\ll \epsilon'/z$. 
 
 Now define \[ \mu := \frac{\textbf{P}( \mathtt{Smooth}_{k,\epsilon'}^{\ast}(v)| \mathtt{Bb}^{\ast}(v,z))}{\textbf{P}(\neg \mathtt{Smooth}_{k,\epsilon'}^{\ast}(v) |  \mathtt{Bb}^{\ast}(v,z))}. \] 
  Then 
 \[ \mu = \frac{\textbf{P}( \mathtt{Smooth}_{k,\epsilon'}^{\ast}(v)\wedge \mathtt{Bb}^{\ast}(v,z)   |\mathtt{Bb}^{\ast}(v))}{\textbf{P}(\neg  \mathtt{Smooth}_{k,\epsilon'}^{\ast}(v)\wedge \mathtt{Bb}^{\ast}(v,z)  |\mathtt{Bb}^{\ast}(v)))} \geq \frac{ 1-\epsilon_1- \epsilon_2}{\epsilon_2} \gg \frac{z}{\epsilon'}. \]

 Thus for sufficiently large $w$, with probability approaching 1 as $n\rightarrow \infty$, the ratio amongst the nodes $v$ such that $\mathtt{Bb}^{\ast}(v,z)$ holds,  between the number such that  $\mathtt{Smooth}_{k,\epsilon'}^{\ast}(v)$ holds and the number such that  it does not,  is much greater than $z/\epsilon'$.

 Consider the initial configuration. We define an iteration which assigns colours to nodes as follows. 

Step 0. Pick a node $t_0$ uniformly at random. 

 Step $s+1$. Let $v_s$ be the first node $v$  to the left of $t_s$ such that $v=x_{t_s}$ or such that $s>0$ and $v=t_0$. Carry out the instructions for the first case below which applies ($\neg$ denotes negation): 

\begin{enumerate} 
\item If there exists no such $v$ then terminate the iteration, and declare that it has `ended prematurely'. 
\item If $v_s=t_{0}$ and $s>0$ then make $t_s$ undefined  and terminate the iteration. 
\item If $|v_s-t_s|<2w+1$ then colour $t_s$ pink. 
\item If $\mathtt{Bb}^{\ast}(v_s,z)$ and $\mathtt{Smooth}_{k,\epsilon'}^{\ast}(v_s)$ both occur, then give $t_s$ the colours white and blue. 
\item If  $\mathtt{Bb}^{\ast}(v_s,z) $ and $\neg \mathtt{Smooth}_{k,\epsilon'}^{\ast}(v_s)$ both occur,  then give $t_s$ the colours white and red. 
\item If  $v_s$ satisfies $\neg \mathtt{Bb}^{\ast}(v_s,z) $, then give $t_s$ the colour black. 
\end{enumerate} 

In cases (3)--(6), define $t_{s+1}=v_s-(2w+1)$, unless $t_0$ lies in the interval $[v_s-(2w+1),v_s)$, in which case terminate the iteration. 

This completes the description of the iteration. \\

First note that the probability that the iteration terminates prematurely can be made arbitrarily small by taking $n$ large, and similarly that we may assume there are $t_s$ of all colours. 
Now let $S$ be the greatest $s$ such that $t_s$ is defined when the iteration terminates, and let $\pi$ be the proportion of the $t_s$, $s\leq S$, such that $t_s$ is coloured black. 
 Amongst the nodes $u$ which have borderline bias, let $\rho$ be the ratio between the number for which $\mathtt{Bb}^{\ast}(u,z)$  holds and the number for which it does not -- so that for large $n$, $\rho$ can be expected to be close to $(1-\epsilon_1)/\epsilon_1$.   In order to find an upper bound for $\rho$, first let us find an upper bound for the number of nodes $u$ such that $\mathtt{Bb}^{\ast}(u,z)$ holds.  Let $v_s$ be as defined in the iteration. However we colour $t_s$, there can be at most $z$ many nodes $u$  in $I_s:= (v_s-(2w+1),v_s]$ which satisfy $\mathtt{Bb}^{\ast}(u,z)$, giving an upper bound of $(S+1)z$. Now, let us find a lower bound for the number of nodes with borderline bias for which $\mathtt{Bb}^{\ast}(u,z)$ fails. If $t_s$ is coloured pink, then we are not guaranteed any nodes in $I_s$ of borderline bias for which  $\mathtt{Bb}^{\ast}(u,z)$ fails. If $t_s$ is coloured white, then the same applies. If $t_s$ is coloured black, however, we are guaranteed at least $z$ many nodes of borderline bias in  $I_s$ for which $\mathtt{Bb}^{\ast}(u,z)$ fails. We therefore get a lower bound of $(S+1)\pi z$. Thus: 
\[ \rho \leq \frac{ z(S+1)}{\pi(S+1)z} = \frac{1}{\pi}. \] 

 We chose $z$ previously, so that for all sufficiently large $w$, $\epsilon_1\ll \epsilon'$. Since $\rho$ is close to  $(1-\epsilon_1)/\epsilon_1$ for large $n$, we infer that for all sufficiently large $w$, with probability tending to 1 as $n\rightarrow \infty$, we have  $\rho \gg 1/\epsilon'$, so that $\pi \ll \epsilon'$. For large $n$, the probability  $t_s$ is coloured pink is less than $(2w+1)e^{-w/d}$ (with $d$ as in Definition \ref{d}). So, for sufficiently large $w$, with probability tending to 1 as $n\rightarrow \infty$, the proportion of the $t_s$ which are not coloured white is $\ll \epsilon'$. \\

Now let $\beta$ be the proportion of the $t_s$ which are coloured red. Amongst the nodes such that $\mathtt{Bb}^{\ast}(u,z)$ holds, let $\gamma$ be  the ratio between the number for which  $\mathtt{Smooth}_{k,\epsilon'}^{\ast}(u)$ holds,  and the number for which it does not -- so that $\gamma$ can be expected to be close to $\mu$ (as defined earlier) for large $n$. Again, let us form an upper bound for $\gamma$.  However $t_s$ is coloured, we get at most $z$ many nodes in the interval  $I_s$  for which $\mathtt{Bb}^{\ast}(u,z) \wedge \mathtt{Smooth}_{k,\epsilon'}^{\ast}(u)$ holds. If $t_s$ is coloured red then we get at least one node  in the interval  $I_s$ for which $\mathtt{Bb}^{\ast}(u,z) \wedge \neg \mathtt{Smooth}_{k,\epsilon'}^{\ast}(u)$ holds. Thus: 

\[ \gamma \leq \frac{z(S+1)}{\beta(S+1)}= \frac{z}{\beta}. \]

 We previously observed that, for all sufficiently large $w$, $\mu \gg z/\epsilon'$. Thus, for all sufficiently large $w$, with probability tending to 1 as $n\rightarrow \infty$,  we have that $\gamma  \gg z/\epsilon' $, so that   $\beta \ll \epsilon'$. Since we have already established that, for $0\ll w\ll n$, the probability $t_s$ is not coloured white is $\ll \epsilon'$, we conclude that for $0\ll w\ll n$ the probability $t_s$ is coloured blue is $>1-\epsilon'$.  Now for large $n$, the probability that  $t_s$ is coloured blue is less than the probability,  for $u$ chosen uniformly at random, that $v_u$ is defined and $\mathtt{Bb}^{\ast}(v_u,z)\wedge \mathtt{Smooth}^{\ast}_{k,\epsilon'}(v_u)$ holds, which concludes the proof. 
\end{proof}

\subsection{The proofs of Lemmas \ref{complet} and \ref{final}} 

\begin{replemma}{complet} 
For any $\epsilon'>0$, if $0\ll w \ll n$ then for  all $i\in [1,k_0)$, $l_i$ and $r_i$ will (be defined and will) complete with probability $>1-\epsilon'$. 
\end{replemma}  
\begin{proof} 
We form an upper bound for the probability that $l_i$ will fail to complete (the proof for $r_i$ is essentially identical). Suppose that $\mbox{\textbf{Good spacing}}$ holds, and that $w$ is large.   As before, it is convenient to define $l_0=u_0$. Fix $i<k_0$ and let $I_1=[l_{i+1},l_i]$ and $I_2=[l_i,l_{i-1}]$. Let $k$ be the greatest such that, when $1\leq j \leq k$, $I_1(j:k)$ and $I_2(j:k)$ are of length $\geq w+1$. For $1\leq j \leq \lfloor k/2 \rfloor $ define: 
\[ J_j:= I_1(j:k) \cup I_1(j:k)^- \cup I_2(j:k) \cup I_2(j:k)^-. \] 

Figure \ref{fig:thejjs} below shows, as an example, what $J_2$ looks like: 

\begin{figure}
\epsfig{figure=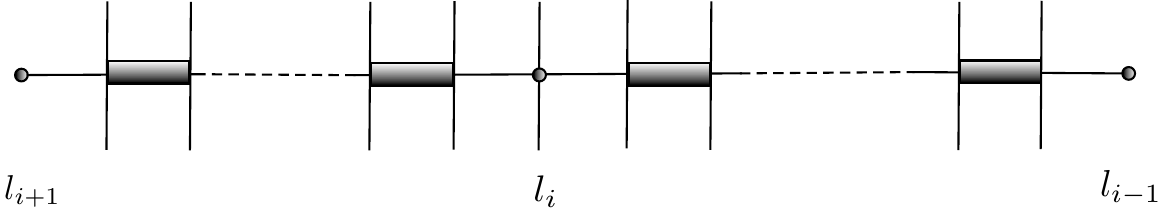,width=11cm}
\caption{$J_2=I_1(2:k) \cup I_1(2:k)^- \cup I_2(2:k) \cup I_2(2:k)^-$.}
\label{fig:thejjs}
\end{figure}

For $1\leq j \leq \lfloor k/2 \rfloor $, let $P_j$ be the event that a node in $J_j$ changes type, and note that $P_3$ cannot occur until $P_2$ has occurred, $P_4$ cannot occur until $P_3$ has occurred, and so on. This follows simply because nodes in $[l_{i+1},l_{i-1}]$ but outside the neighbourhoods $\mathcal{N}(l_{i+1}),  \mathcal{N}(l_{i})$ and  $\mathcal{N}(l_{i-1})$ are initially happy, and cannot become unhappy until another node in their neighbourhood changes type.  Now the basic idea is that if completion fails to occur,  then the sequence of events $P_2,...,P_{\lfloor k/2 \rfloor} $ must occur before any stage at which there are no unhappy nodes in the neighbourhood of $l_i$. 

We label certain stages as being a `step towards completion', and certain others as being a `step towards failure of completion' (while some stages are labelled as neither). These labels do not fully represent all aspects of the process, but contain enough information for our purposes. 

\textbf{Steps towards completion}.  So long as there are unhappy nodes in the neighbourhood of $l_i$, we label any stage at which a node in this neighbourhood swaps type as a  \emph{step towards completion}. Subsequent to any stage at which $l_i$ completes, we label  every stage as a step towards completion.  

 \textbf{Steps towards failure of completion}. If  $1\leq j<\lfloor k/2 \rfloor $ is the greatest  such that $P_j$ has occurred prior to stage $s$ or no $P_j$ has occurred and $j=1$,  and if  $P_{j+1}$ occurs at stage $s$, then we label $s$  a  \emph{step towards failure of completion}.  

Now at any stage $s$ at which some $P_j$ for $ j\leq \lfloor k/2 \rfloor$ is yet to occur, and at which there are unhappy nodes in the neighbourhood of $l_i$, the probability of $s$ being  a step towards failure of completion, is at most $4(w+1)$ times the probability of it being a step towards completion (since there are at most $4(w+1)$ times as many nodes which, if chosen to swap, will cause a step towards failure of completion, as those which will cause a step towards completion). Choosing $d'>d$ we get that, since $\mbox{\textbf{Good spacing}}$  holds, for all sufficiently large $w$, $\lfloor k/2 \rfloor  >e^{w/d'}$. We may therefore consider the first $e^{w/d'}$ many stages which are  steps either towards completion or failure of completion\footnote{In order to ensure the existence of $e^{w/d'}$ many such stages it is momentarily convenient to adopt the convention that the process continues after there are no unhappy nodes $u\in [1,n]$, but with nothing occurring at such stages, and that such stages are also labelled steps towards completion.}  and, for large $w$, consider the  probability that at most $2w$ of these are steps towards completion. By the weak law of large numbers, this probability tends to 0 as $w\rightarrow \infty$.  
Now if there do not exist unhappy nodes of both types in the interval $[l_i-2w, l_i] $ in the initial configuration (which is the  case with probability tending to 1 as $w\rightarrow \infty$, since it holds if $\mathtt{Smooth}^{\ast}(l_i,k',\epsilon'')$ holds for large $k'$ and small $\epsilon''$), then $2w+1$ many steps towards completion prior to $P_{\lfloor k/2 \rfloor}$ occurring, suffices to ensure completion for $l_i$.  \end{proof}

    \begin{replemma}{final} Suppose that all events in $\Pi$ hold.
Let $i$ be the least such that $l_{i}$ is defined and originates a firewall, and let $j$ be least such that $r_{j}$ is defined and originates a firewall. For any $\epsilon'>0$, for all sufficiently large $w$, with probability $>1-\epsilon'$,  $u_0$ will eventually be contained in one of the two firewalls originated at $l_i$ and $r_j$. 
\end{replemma}  
\begin{proof} 
If the firewalls originated at $l_i$ and $r_j$ are of the same type, $\alpha$ say, then the result is immediate -- by Lemma \ref{dichotomy} all $l_{i'}$ for $1\leq i'<i$ and all $r_{j'}$ for $1\leq j'<j$ subside, meaning that no $\beta$-stable intervals can ever be created in $[l_i,r_j]$.   So suppose that the firewall originated at $l_i$ is of type $\alpha$ and the firewall originated at $r_j$ is of type $\beta$. 

As we describe the argument we initially state two facts $(\dagger_1)$ and $(\dagger_2)$ without proof. Once the outline of the argument is complete, we then provide proofs for these facts.  First, note that there are certain type changes within the interval $[l_i,r_j]$ which we are not presently concerned with. If $u$ is in the neighbourhood of some $x$ which is $l_{i'}$ for $1\leq i' <i$ or $r_{j'}$ for $1\leq j'<j$,  and changes type at a stage which is less than or equal to that at which $x$ completes, then we say that this change of type is \emph{previous}. Ignoring changes which are previous, it is then easy to formalise the first stage $s(I)$ at which either of the two firewalls originated at $l_i$ or $r_j$ `have influence' on any given subinterval $I$ of $[l_i,r_j]$: $s(I)$ is the first stage at which any node in $I$ has a change of type which is not previous. 
The fact that all events in $\Pi$ are satisfied means that $s(I)$ must be defined for any subinterval $I$ of  $[l_i,r_j]$.

 For $\gamma \in \{\alpha, \beta \}$ we also define $s^{\gamma}(I)$ to be the first stage at which a node in $I$ has a change of type to $\gamma$ which is not previous -- note that, unlike $s(I)$, these values may be undefined (we write $s^{\gamma}(I)\downarrow$ to indicate that $s^{\gamma}(I)$ is defined).   Now let $u_1=u_0-(2w+1)$ and let $u_2= u_0+(2w+1)$. Let $I_0=\mathcal{N}(u_0)$, let $I_1$ be the left part of $\mathcal{N}(u_1)$, i.e.\ $I_1=[u_1-w,u_1]$  and let $I_2$ be the right part of $\mathcal{N}(u_2)$. 

\begin{enumerate} 
\item[$(\dagger_1)$] For any $\epsilon'>0$, for all sufficiently large $w$, the probability that there does not exist $\gamma \in \{ \alpha, \beta \}$ such that both $s^{\gamma}(I_1)\downarrow = s(I_1)$ and $s^{\gamma}(I_2)\downarrow = s(I_2)$, is $\ll \epsilon'$.
\end{enumerate} 

  So suppose that there does exist such $\gamma$, and suppose $\gamma = \beta$ (the case $\gamma =\alpha$ is similar). 
 
\begin{enumerate} 
\item[$(\dagger_2)$] Let  $v\in [l_i,r_j]$ be the rightmost node such that either $v=l_i$ or  $s^{\alpha}(\mathcal{N}(v))\downarrow <s(I_1)$. For any $\epsilon'>0$, for all sufficiently large $w$, the probability that $|u_0-v|<e^{w/d}$ is $\ll \epsilon'$. 
\end{enumerate} 

  Let us suppose that $|u_0-v|\geq e^{w/d}$. Now consider  any node $u\in I_0$.  Let $I_3$ be the leftmost $w$ many nodes in $\mathcal{N}(u)$, and let $I_4$ be the rightmost  $w$ many nodes. Let $\delta =1-2\tau$, so that if any node $x$ has bias 0 in the initial configuration then $w\delta$ many type changes in $\mathcal{N}(x)$, all to type $\beta$ suffice to give $x$ almost exactly borderline bias (less than borderline bias plus 3 to be more precise).  By the weak law of large numbers, for any $\epsilon'>0$,  for all  sufficiently large $w$, the probability that any node in $[u_1-w, u_2+w]$ has bias in the initial configuration which is of modulus  $>w\frac{1}{3}\delta$ is $\ll \epsilon'$. Suppose that this is not the case.    Now no node in $I_3$ can become unhappy until there have been  $>\frac{2}{3}\delta w$ many changes to $\beta$ type in $I_4$ (with room to spare). Similarly no node in $I_1$ can become unhappy until there have been at least  $\frac{2}{3}\delta  w$ many changes to $\beta$ type in $I_3$. So at stage $s(I_1)$ we conclude that there have been at least $\frac{4}{3}\delta w$ many changes to $\beta$ type in the neighbourhood of $u$. This means that $u$ now has bias at most $w \frac{1}{3} \delta - w\frac{8}{3} \delta$. We conclude that \emph{all} $\alpha$ nodes $u$ in $I_0$ must be unhappy at stage $s(I_1)$. We can now define a notion of completion for $u_0$. We say that $u_0$ completes at stage $s>s(I_1)$ if:

\begin{enumerate} 
\item No node in $I_0$ is unhappy at stage $s$, and this is not true for any $s'<s$ with $s'>s(I_1)$. 
\item  Letting $v$ be defined as in $(\dagger_2)$, there exists $x_0$ with $v+2w<x_0 < u_0 -2w$, such that by the end of stage $s$, no node in $[x_0-w,x_0]$ has had a change of  type which is not previous. 
\end{enumerate} 

 We can then argue, precisely as in Lemma \ref{complet}, that $u_0$ almost certainly completes. In this case it then follows that $u_0$ ultimately belongs to a firewall, which includes the interval $[u_0,r_j]$.  
To complete the proof, we are therefore left to verify $(\dagger_1)$ and $(\dagger_2)$. 

We verify $(\dagger_1)$ in such a way  that almost precisely the same proof suffices to verify $(\dagger_2)$ also. The very basic idea is that, given the large distance between $l_i$ and $r_j$, it is very unlikely that the influence of these two firewalls will first meet sufficiently close to $u_0$ as to cause us trouble.  Given $\epsilon'>0$, choose $k\gg \frac{1}{\epsilon'}$. For all sufficiently large $w$, it follows from our earlier analysis that, in fact, the probability that $|u_0-l_1|<(k+1)\lceil e^{w/d} \rceil $ or $|u_0 - r_1| <(k+1)\lceil e^{w/d}\rceil $ is  $\ll \epsilon'$. So suppose that neither of these possibilities hold. Now, for each instance $\mathcal{Z}$ of the  process on the circle of $n$ nodes  for which there does not exist $\gamma \in \{ \alpha, \beta \}$ such that both $s^{\gamma}(I_1)\downarrow = s(I_1)$ and $s^{\gamma}(I_2)\downarrow = s(I_2)$, we wish to show that there are $k$-many distinct others which we can label as \emph{corresponding} to $\mathcal{Z}$, each of which occurs with the same probability as $\mathcal{Z}$ and for which there \emph{does} exist $\gamma$ of the kind described (and for which all events in $\Pi$  hold for the same $u_0$).  We also require that, for distinct $\mathcal{Z}$ and $\mathcal{Z}'$ for which $\gamma$ as required does not exist, the two corresponding sets of processes have no intersection.   The $k$-many distinct processes corresponding to $\mathcal{Z}$ are easily defined. Given that $\gamma$ does not exist, we must have that $s(I_1)=s^{\alpha}(I_1)$ and that $s(I_2)=s^{\beta}(I_2)$. Note also that (since the `influence' of each of the firewalls originated at $l_i$ and $r_j$ has to spread by one interval of length at most $w$ at a time)  we then have that: 

\begin{enumerate} 
\item[$(\ast)$]  For all $u$ to the left of $I_1$ in $[l_1,r_1]$, $s^{\alpha}(\mathcal{N}(u))\downarrow = s(\mathcal{N}(u))$. Similarly, for all $u$ to the right of $I_2$ in $[l_1,r_1]$, $s^{\beta}(\mathcal{N}(u))\downarrow = s(\mathcal{N}(u))$. 
\end{enumerate} 
The processes corresponding to $\mathcal{Z}$ are the `rotations' of the entire process $\mathcal{Z}$ by $m\lceil e^{w/d}\rceil $ many nodes to the right, for $1\leq m \leq k$.   If $s(I_1)=s^{\alpha}(I_1)$ and $s(I_2)=s^{\beta}(I_2)$, then it follows from $(\ast)$ that this does not hold for the specified rotations. We therefore conclude that the probability that $\gamma$ does not exist as required, is $<\frac{1}{k}\ll \epsilon'$. 

As remarked above, an almost identical proof then suffices to establish $(\dagger_2)$. 
\end{proof} 

\subsection{Dealing with the standard model}  As noted in Section \ref{hardtau}, our aim is to establish the following: 

\begin{enumerate} 
\item[$(\diamond)$] Suppose $w, \tau$ (with $\kappa<\tau <\frac{1}{2})$ and $\epsilon>0$ are fixed (where  $\epsilon$ is the value we have fixed throughout this section, and which played a role in the definition of $k_0$).  Let $l_{k_0}$ and $r_{k_0}$ be as defined previously. For any $\epsilon'>0$, when  $n$ is sufficiently large the following holds with probability $>1-\epsilon'$: there exists a first stage at which there are no unhappy nodes in the interval $[l_{k_0},r_{k_0}]$ and at all stages up to this the total number of unhappy $\alpha$ nodes divided by the total number of unhappy $\beta$ nodes lies in the interval $[1-\epsilon', 1+\epsilon']$.
\end{enumerate}

The very basic idea is as follows. For the remainder of the section we suppose that $w$, $\tau$ and $\epsilon$ are fixed. With probability close to 1, for large $n$  we will have in the initial configuration that the proportion of nodes which are  unhappy and of type $\alpha$ is roughly equal to the proportion of nodes which are unhappy and of type $\beta$.  Briefly, however, let us make the simplification that these proportions   will be \emph{exactly} equal. We then want to show that the process can be sufficiently accurately modelled, for large $n$, by a  system of differential equations, which are entirely symmetric in $\alpha$ and $\beta$. This symmetry means that the solution to the system of differential equations must describe an evolution in which there are always precisely equal numbers of unhappy $\alpha$ and $\beta$ nodes. Of course, we then have to deal with the fact that the numbers  of unhappy  $\alpha$ and $\beta$ nodes in the initial configuration need not actually be exactly equal, but this turns out not to present too many problems.   

As discussed in \cite{BK}, there are, however, some further complications which arise immediately as one looks to apply the Wormald machinery. The method applies to a process in which the state of the system at any given moment in time is described by an $\ell$-dimensional vector of real numbers, where $\ell$ remains fixed, and we then look to approximate the discrete process by a continuous one as $n\rightarrow \infty$.  How can we describe the configuration at any given stage by an $\ell$-dimensional vector, where $\ell$ is independent from $n$? Let $C_n$ be the graph which is a cycle of size $n$.  Up until now, then, we have been considering processes unfolding on each $C_n$. From now on we consider also a value $L$ which depends on $w$, but not on $n$. Generally we shall work under the assumption that $w\ll L\ll n$. For the sake of simplicity we assume that $L$ divides $n$ (but everything that follows is easily modified to deal with the possibility that this is not the case). As we consider the process unfolding on $C_n$, we consider also a parallel process on $G_n$, which is a disjoint union of cycles of length $L$. More precisely, nodes $u$ and $v$ are connected in $G_n$ iff $\lfloor u/L \rfloor = \lfloor v /L \rfloor$ and $u\equiv v \pm 1 \mbox{ mod }  L $.   In order to consider a parallel process on $G_n$, it is also convenient to modify the way in which we count the stages of the process.  In the process as previously described, an unhappy pair of nodes of opposite type are selected at each stage, which may then swap (in fact will swap for the values of $\tau$ considered here). Since we shall now have a situation in which the same node $u$ may be unhappy in $C_n$ but happy in $G_n$, or vice versa, it becomes convenient to consider a process in which two nodes  are selected uniformly at random at each stage, which will only then swap if they are of opposite type and both are unhappy. Of course this makes no real difference to the evolution of the system, except for the way in which we count the stages. The parallel process on $G_n$ then unfolds as follows: when $u$ and $v$ are selected for a potential swap in $C_n$, they swap in $G_n$ if they are both of opposite type and are unhappy in $G_n$. Now let us see how the configuration of $G_n$ at any stage can be described by a $2^L$-dimensional vector. We let $2^L$ denote the set of binary strings of length $L$. For each node $u$ and each stage $s$ we define a string $\tau_{u,s}\in 2^L$: $\tau_{u,s}(0)=1$ if $u$ is of type $\alpha$ at stage $s$, otherwise $\tau_{u,s}(0)=0$, and then $\tau_{u,s}(1)=1$ if the node to the right of $u$ in $G_n$ is of type $\alpha$ at stage $s$, and so on. For each $\sigma\in 2^L$ we define $\zeta_{\sigma}(s)$ to be the number of nodes $u$ such that $\tau_{u,s}=\sigma$. Then the $2^L$-dimensional vector $\boldsymbol{\zeta}(s)$ in which the components are the values $\zeta_{\sigma}(s)$ for $\sigma\in 2^L$ describes the configuration at stage $s$ up to isomorphism.  \\

Our first task now is to show that for large $L$ and large $n$ compared to $L$, the processes on $G_n$ and $C_n$ do not diverge too quickly. To this end we inductively define a set of \emph{tainted} nodes for each stage $s$, denoted $T(s)$. These are  nodes whose neighbourhoods might possibly look different in $G_n$ and $C_n$. $T(0)$ is the set of $u$ such that $-w \leq \ u \mbox{ mod } L < w$ (we assume that the swapping process begins at stage $s=1$). 
If $u$ and $v$ are chosen for a potential swap at stage $s>0$ and are both untainted, then $T(s)=T(s-1)$. Otherwise $T(s)$ is the union of $T(s-1)$ with the set of all nodes which are in the neighbourhood of $u$ or $v$ in either $G_n$ or $C_n$. Immediately it is clear that $|T(s)-T(s-1)|<4(2w+1)$. In fact, by counting more carefully those nodes which belong to the neighbourhoods of $u$ and $v$ in \emph{both} graphs, we get 
$|T(s)-T|(s-1)|<2(3w+1)$, so that, since $w$ is large: 
\begin{equation} \label{firsttaintbound}  
|T(s)-T(s-1)|<7w.
\end{equation} 
 The next lemma provides precisely the kind of probabilistic bound on the number of tainted nodes at any given stage which we will need later. 

\begin{lem} \label{taint bound} The following conditions hold at every stage $s\geq 0$: 
\begin{enumerate} 
\item The expected number of tainted nodes at stage $s$ is bounded above by $e^{14ws/n} \left( \frac{2w-1}{L}\right)n$. 
\item The probability that $\frac{|T(s)|}{n}>2e^{14ws/n} \left( \frac{2w-1}{L}\right)$ is at most $e^{-wn/(21L^2)}$.  
\end{enumerate} 
\end{lem}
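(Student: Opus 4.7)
The plan is to prove both parts by tracking the growth of $|T(s)|$ one step at a time. At step $s$ two nodes $u,v$ are drawn uniformly at random; if neither lies in $T(s-1)$ then $T$ is unchanged, while otherwise (\ref{firsttaintbound}) gives $|T(s)|-|T(s-1)| < 7w$. A union bound shows that the probability that either of $u,v$ lies in $T(s-1)$ is at most $2|T(s-1)|/n$, so
\[ \mathbb{E}\bigl[|T(s)|-|T(s-1)|\,\bigl|\,\mathcal{F}_{s-1}\bigr] \;\leq\; 7w \cdot \frac{2|T(s-1)|}{n} \;=\; \frac{14w}{n}|T(s-1)|. \]

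For part (1) I would induct on $s$: the display above gives $\mathbb{E}[|T(s)|] \leq (1+14w/n)\,\mathbb{E}[|T(s-1)|]$, and iterating from the base case $|T(0)| \leq (2w-1)n/L$ together with $1+x \leq e^x$ yields the claimed bound.

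For part (2) the challenge is that the growth indicators are not independent — the chance of $T$ growing at step $i$ depends on the full history through $|T(i-1)|$. I would handle this with the standard stopping-time coupling. Let $G_i$ be the indicator that at least one of the two nodes selected at step $i$ lies in $T(i-1)$, so that $|T(s)| \leq |T(0)| + 7w\sum_{i\leq s} G_i$. Define the stopping time $\sigma := \min\{i : |T(i)|/n > 2e^{14wi/n}(2w-1)/L\}$. On the event $\{\sigma > i\}$ the conditional probability $\textbf{P}[G_i = 1 \mid \mathcal{F}_{i-1}]$ is at most $4e^{14wi/n}(2w-1)/L$, so the stopped indicators $G_i \mathbf{1}_{\{\sigma>i\}}$ can be coupled so as to lie below independent Bernoulli variables $\tilde G_i$ with this parameter. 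The event whose probability we want to bound is precisely $\{\sigma \leq s\}$, and on this event the dominating sum forces $|T(0)|+7w\sum_{i\leq s}\tilde G_i$ past the threshold $2e^{14ws/n}(2w-1)n/L$. A standard Chernoff bound applied to the sum of independent Bernoullis then yields the stated $e^{-wn/(21L^2)}$ after choosing the deviation parameter appropriately.

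The main obstacle is the dependence just described; once the stopping-time coupling replaces the $G_i$ by independent Bernoullis the remainder is essentially bookkeeping to extract the specific constant $21$ in the exponent. A minor subtlety is that the Bernoulli parameters grow with $i$, but since the threshold grows at the same rate the relative deviation required stays uniformly bounded below, which is exactly what feeds into the Chernoff exponent.
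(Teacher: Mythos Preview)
Your argument for part (1) is correct and matches the paper's approach exactly: the one-step conditional expectation bound yields $\textbf{E}[|T(s)|] \leq (1+14w/n)^s |T(0)| \leq e^{14ws/n}|T(0)|$.

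For part (2), however, there is a genuine gap. You assert that on the event $\{\sigma \leq s\}$ the dominated sum $|T(0)| + 7w\sum_{i\leq s}\tilde G_i$ must exceed the time-$s$ threshold $2e^{14ws/n}(2w-1)n/L$. This does not follow: what the coupling actually gives is that the partial sum up to time $\sigma$ exceeds the time-$\sigma$ threshold $2e^{14w\sigma/n}(2w-1)n/L$, which is strictly smaller when $\sigma < s$. Extending the sum from $\sigma$ to $s$ only adds nonnegative terms, but the threshold has grown as well, so the implication fails. Relatedly, your final remark that ``the relative deviation required stays uniformly bounded below'' is incorrect: the ratio of the time-$j$ threshold to the mean of $\sum_{i\leq j}\tilde G_i$ works out to $(2e^{14wj/n}-1)\big/\bigl(2(e^{14wj/n}-1)\bigr)$, which tends to $1$ as $j\to\infty$, so the Chernoff exponent degrades and no uniform bound results. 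A union bound over the value of $\sigma$ would not obviously recover the stated constant either.

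The paper sidesteps this by working not with $|T(s)|$ but with the rescaled process $Y_s := |T(s)|\,e^{-14ws/n}$. The one-step bound you derived shows that $(Y_s)$ is a supermartingale; the point of the rescaling is that the increments now satisfy $|Y_{s+1}-Y_s| \leq c_s := (21w/n)\,e^{-14ws/n}$, which decay geometrically, so $\sum_{s\geq 0} c_s^2 < 21w/n$ is finite and independent of $s$. A one-sided Azuma-type inequality applied to $X_s := Y_s/n - (2w-1)/L$ with deviation $\Omega = (2w-1)/L$ then delivers the stated bound uniformly in $s$. The exponential rescaling is exactly what converts the multiplicative growth your Chernoff approach stumbles on into a supermartingale with summable squared increments.
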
 
\begin{proof} Let $\rho(s)=|T(s)|$. In order to prove (1) first, let $u$ and $v$ be the nodes chosen for a swap at stage $s+1$. Let $\gamma$ be the probability that either of these nodes are tainted (at the end of stage $s$). The probability that $u$ is tainted is $\rho(s)/n$, and similarly for $v$, so $\gamma \leq 2\rho(s)/n$. Now if neither of $u$ or $v$ is tainted then $\rho(s+1)=\rho(s)$, and otherwise, by (\ref{firsttaintbound}), $\rho(s+1) <\rho(s)+7w$. We therefore have: 
\[ \textbf{E}[\rho(s+1)\ |\ \rho(s)] < (1-\gamma)\rho(s) +\gamma(\rho(s)+7w)  \leq \rho(s) + 2\frac{\rho(s)}{n} 7w = \rho(s)\cdot \left(1+ \frac{14w}{n} \right).\] 
 For $x>0$, $1+x<e^x$, giving: 
\[ \textbf{E}[\rho(s+1)\ |\ \rho(s)] <\rho(s)\cdot e^{14w/n}.\] 
 So far, then, we have established that the sequence of random variables $Y_s=\rho(s)e^{-14ws/n}$ is a supermartingale. This suffices to give (1) since $Y_0 = \left(\frac{2w-1}{L} \right) n$. 

In order to establish (2) we make use of the following version of Azuma's inequality:  If $X_1,...,X_s$ is a supermartingale with $X_0=0$ and $|X_i-X_{i-1}|\leq c_i$ for all $i$ and constants $c_i$, then for all $\Omega>0$, 
\begin{equation} \label{Wormlem} 
 \textbf{P}(X_s\geq \Omega)\leq \mbox{ exp}\left( - \frac{\Omega^2}{2\sum_{i=1}^s c_i^2} \right). 
 \end{equation} 
 If we put $X_s=\frac{Y_s}{n}-\frac{2w-1}{L}$ then the sequence $X_0,X_1,...$ is a supermartingle and $X_0=0$. In order to apply (\ref{Wormlem}) we need a bound on the values $|X_i-X_{i-1}|$.  To this end we consider the sum: 
\[ | Y_s - e^{-14w/n}Y_s| + | Y_{s+1} - e^{-14w/n}Y_s |.\] 
 Now  
\[ Y_s- e^{-14w/n}Y_s = Y_s( 1 -  e^{-14w/n}) < Y_s 14w/n,\] 
 since for $x>0$, $1-e^{-x}<x$.  Also,
\[ Y_{s+1} - e^{-14w/n}Y_s = e^{-14(s+1)w/n} (\rho(s+1)-\rho(s)) \leq 7w  e^{-14(s+1)w/n},\] 
  by (\ref{firsttaintbound}). Thus, 
\[ | Y_s - e^{-14w/n}Y_s| + | Y_{s+1} - e^{-14w/n}Y_s | < 14wY_s/n +7w e^{-14w(s+1)/n} < 21 we^{-14ws/n}. \] 
 We may therefore put $c_s= \frac{21}{n} we^{-14ws/n}$. Now since for $0<x<1$, $\frac{1}{1-x}=1+x+x^2+x^3 \cdots $ we conclude that $\sum_{s=0}^{\infty} c_s^2 = (21w/n)^2(1-e^{-28w/n})^{-1} <21w/n$, the latter inequality following from the fact that for small positive $x$, $1-e^{-x}>3x/4$. 

 Finally, applying (\ref{Wormlem}), we get:  
 
\begin{eqnarray*}
\textbf{P}\left( \frac{\rho(s)}{n} > 2 e^{14ws/n} \left( \frac{2w-1}{L} \right) \right) & = & \textbf{P}\left( \frac{Y_s  e^{14ws/n} }{n} > 2 e^{14ws/n} \left( \frac{2w-1}{L} \right) \right) \\
 &=& \textbf{P} \left( X_s+\left( \frac{2w-1}{L} \right)   > 2 \left( \frac{2w-1}{L} \right) \right) \\
 &= & \textbf{P} \left( X_s > \left( \frac{2w-1}{L} \right) \right) \\
&\leq & \mbox{exp} \left( -\frac{n(2w-1)^2}{42wL^2} \right) \\
&<&  \mbox{exp} \left( - \frac{nw}{21L^2} \right). \\
 \end{eqnarray*}
  This establishes (2) as required. 
\end{proof} 

So Lemma \ref{taint bound} tells us that for fixed $x$, we can ensure with probability close to 1  that the proportion of nodes which are tainted by stage $xn$ is very small, so long as we take $L$ large and $n$ large compared to $L$. With this in place, we now concentrate for a while on the processes on the graphs $G_n$. In order to approximate these processes by the solution to a system of differential equations, we first of all have to draw up reasonable candidates for the differential equations to be used. To this end we begin by considering the conditional expectations $\textbf{E}[\zeta_{\sigma}(s+1)\ | \ \boldsymbol{\zeta}(s)]$. One can form an expression for this expectation, simply by listing all of the different ways in which the number of nodes $u$ with $\tau_u=\sigma$ can increase or decrease at a given stage and establishing their probabilities (we formally defined $\tau_{u,s}$ rather than $\tau_u$, but shall omit $s$ where no ambiguity results). Consider a proposed swap  between $u$ and $v$ which belong to different cycles, such that $\tau_{u}=\sigma'$ and $\tau_{v}=\sigma''$.  Define $a(\sigma,\sigma',\sigma'')$ to be the net change resulting from this proposed swap, in the number of nodes $u'$  for which $\tau_{u'}=\sigma$. Note that the net change will be zero unless $u$ and $v$ are of opposite type and are both unhappy, and that in any case $| a(\sigma,\sigma',\sigma'')| <2L$. Then: 

\begin{equation} \label{sys} 
\textbf{E}[\zeta_{\sigma}(s+1)-\zeta_{\sigma}(s)\ |\ \boldsymbol{\zeta}(s)] = \sum_{\sigma',\sigma''} a(\sigma,\sigma',\sigma'') \frac{\zeta_{\sigma'}(s)\zeta_{\sigma''}(s)}{n^2} \ + \ O\left( \frac{L^2}{n} \right). 
\end{equation} 

 The $O\left( \frac{L^2}{n} \right)$ correction term accounts for the possibility that the nodes selected for a potential swap might belong to the same cycle, the probability of this occurring being $L/n$ and the net change then being bounded by $L$. We next perform some scaling, so that the variables reach fixed functions in the limit as $n\rightarrow \infty$: for each $\sigma$ we want a real function  $z_{\sigma}(s)$ to model the behaviour of $\frac{1}{n}\zeta_{\sigma}(sn)$.  Now (\ref{sys}) \emph{suggests} the following system of differential equations for the functions $z_{\sigma}$: 
\begin{equation} \label{sys2} 
z_{\sigma}'(s)= \sum_{\sigma',\sigma''} a(\sigma,\sigma',\sigma'') z_{\sigma'}(s)z_{\sigma''}(s). 
\end{equation} 
\begin{equation} \label{init} 
 z_{\sigma}(0)=\frac{1}{n} \zeta_{\sigma}(0). 
\end{equation}   

Note that for any solution to (\ref{sys2}) and (\ref{init}), if we put $y(s)=\sum_{\sigma} z_{\sigma}(s)$, then: 

 \[ y'(s)=\sum_{\sigma} z_{\sigma}'(s)= \sum_{\sigma',\sigma''} \sum_{\sigma} a(\sigma,\sigma',\sigma'') z_{\sigma'}(s)z_{\sigma''}(s)= \sum_{\sigma',\sigma''} 0=0. \] 
 
   Thus, since  $y(0)=1$, we also have $y(s)=1$ for all $s$ in the domain. This in turn means that : 
 
 \begin{equation} \label{derivativebound} |z_{\sigma}'(s)| = |\sum_{\sigma'} \sum_{\sigma''} a(\sigma,\sigma',\sigma'') z_{\sigma'}(s) z_{\sigma''}(s) | \leq   |\sum_{\sigma'} \sum_{\sigma''} 2L z_{\sigma'}(s) z_{\sigma''}(s) | = |\sum_{\sigma'} 2L z_{\sigma'}(s)| = 2L 
 \end{equation} 

Suppose for now that we are interested in the first $xn$ steps of the process on $G_n$. Later we shall choose a value for $x$ which suits our needs. For now though, this means that we are interested in solutions to (\ref{sys2}) and (\ref{init}) for $s\in [0,x]$. Let $\Pi$ be the set of points  $(s,z_1,...,z_{2^L})$ such that   $s\in [0,x]$ and, for each $i\leq 2^L$, $|z_i|\leq 2Lx+1$. Let $D$ be a bounded connected open subset of $\mathbb{R}^{2^L+1}$ containing $\Pi$ and such that the minimum $\ell^{\infty}$ distance between the boundary of $D$ and any point in $\Pi$ is bounded below (and where the $\ell^{\infty}$ distance between two vectors is the maximum difference between corresponding components). For the remainder of this section, let all the strings in $2^L$ be enumerated as $\sigma_1,\sigma_2,...,\sigma_{2^L}$. Then by standard results in the theory of differential equations (see \cite{WH} for example), the fact that each of the functions 
\[ f_{i}(s,z_1,...,z_{2^L}):= \sum_{j,k}  a(\sigma_i,\sigma_j,\sigma_k) z_jz_k\] 

  satisfies a Lipschitz condition for each argument $z_r$ in $D$, together with (\ref{derivativebound}), means that  there exists a unique set of functions $\{ z_{\sigma_i}(s) \ |\ \sigma_i\in 2^L \}$ defined on the interval $[0,x]$  and which satisfy (\ref{sys2}) and (\ref{init}) for all $s\in [0,x]$.  

The following lemma is just a slightly stripped down version of Theorem 5.1 from \cite{NW}.

\begin{lem} \label{Worm}  Let $D$ and $x$ be as above.  Suppose that for all $n$, $\lambda(n)>L^2/n$ and that as $n\rightarrow \infty $, $\lambda(n)\rightarrow 0$. Consider the unique set of functions $\{ z_{\sigma}\ |\ \sigma\in 2^L \}$ defined on the interval $[0,x]$ which satisfy (\ref{init}) and (\ref{sys2}) for all $s\in [0,x]$. Then the following holds with probability $1-O(\frac{2^{L+1}L}{\lambda(n)} \mbox{\emph{exp}} ( - \frac{n\lambda(n)^3}{8L^3} ))$: 

\[ \zeta_{\sigma}(s)=nz_{\sigma}(s/n) + O(n\lambda(n)) \] 

 uniformly for $0\leq s \leq xn$.   
\end{lem}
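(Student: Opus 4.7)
My plan is to follow the strategy of Wormald's original proof, adapted to the specific process here, in three stages: concentration of the discrete trajectory around its expected drift via a martingale argument; identification of the drift with the right-hand side of the ODE system (\ref{sys2}) up to an explicit error; and a Gr\"onwall-type comparison of discrete and continuous trajectories, union-bounded over the $2^L$ coordinates.

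For the concentration step, I fix $\sigma \in 2^L$ and consider the process $\zeta_\sigma(0), \zeta_\sigma(1), \ldots$. At each stage two nodes are picked uniformly at random from among $n$ many, so the one-step increment $|\zeta_\sigma(s+1)-\zeta_\sigma(s)|$ is bounded by $2L$ (the constant from $|a(\sigma,\sigma',\sigma'')| < 2L$). Set
\[
X_t \;=\; \zeta_\sigma(t) \;-\; \zeta_\sigma(0) \;-\; \sum_{s=0}^{t-1} \textbf{E}\bigl[\zeta_\sigma(s+1)-\zeta_\sigma(s)\ \big|\ \boldsymbol{\zeta}(s)\bigr].
\]
Then $(X_t)$ is a martingale with $|X_{t+1}-X_t|\le 4L$, so the Azuma-type inequality (\ref{Wormlem}) already used in Lemma \ref{taint bound} (take $c_i = 4L$ and $\Omega = n\lambda(n)/(4\cdot 2^L)$, say) yields
\[
\textbf{P}\bigl(|X_t| > n\lambda(n)/(4\cdot 2^L)\bigr) \;\le\; \exp\!\left(-\frac{n^2 \lambda(n)^2}{C L^2 \cdot 2^{2L}\, t}\right)
\]
for $t\le xn$ and a suitable constant $C$. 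A union bound over all $t\le xn$ and all $\sigma$ gives probability $O\!\left(\frac{2^{L+1}L}{\lambda(n)}\cdot \exp\!\left(-n\lambda(n)^3/(8L^3)\right)\right)$ of any large-deviation failure; the exact exponents here drop out by optimising the martingale-increment bound as in \cite{NW}.

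For the drift step, I compare $\sum_{s<t}\textbf{E}[\zeta_\sigma(s+1)-\zeta_\sigma(s)\mid \boldsymbol{\zeta}(s)]$ with the integral $n\int_0^{t/n} f_\sigma(\boldsymbol{z}(u))\,du$ along the ODE. By (\ref{sys}) the one-step conditional drift differs from $\frac{1}{n}f_\sigma(\boldsymbol\zeta(s)/n)$ by an additive term of order $L^2/n^2$ per step, accumulating to $O(L^2/n)$ after $xn$ steps, which is absorbed into the $O(n\lambda(n))$ error by the hypothesis $\lambda(n)>L^2/n$. It remains to upgrade concentration of $\zeta_\sigma(s)/n$ around the \emph{discrete} drift to closeness to the \emph{continuous} ODE solution $z_\sigma$. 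For this I invoke the Lipschitz property of $f_i$ on the bounded domain $D$: since $D$ has a positive $\ell^\infty$-distance from the boundary of the region where the $f_i$ are defined, and each $f_i$ is a quadratic polynomial in the $z_j$, there is a global Lipschitz constant $\Lambda = O(L)$ on $D$. A standard discrete Gr\"onwall inequality then propagates a per-step error of $O(\lambda(n)/n + L^2/n^2)$ over $xn$ steps to a total error of
\[
O\bigl((\lambda(n)+L^2/n)\cdot e^{\Lambda x}\bigr),
\]
which is $O(\lambda(n))$ after adjusting constants (and choosing $D$ so that the trajectory stays strictly inside $D$, guaranteed by (\ref{derivativebound})).

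The main obstacle is balancing the three error sources — the martingale fluctuations, the $O(L^2/n)$ discretisation-per-step term, and the Gr\"onwall amplification factor $e^{O(L)}$ — against the single slack parameter $\lambda(n)$ and the $2^L$-fold union bound, while keeping the failure probability in the stated form. This is precisely where the hypothesis $\lambda(n)>L^2/n$ is used and where the peculiar exponent $\exp(-n\lambda(n)^3/(8L^3))$ arises (one power of $\lambda(n)$ from demanding the martingale bound exceed the per-step drift error, and two more from the Azuma exponent once the $\Omega$ is rescaled to absorb the Gr\"onwall factor). Verifying that the stated constants work out as in \cite{NW} is a routine but delicate bookkeeping exercise; at the conceptual level everything reduces to the three steps above.
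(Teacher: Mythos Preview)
The paper does not prove this lemma from scratch: immediately before the statement it says the lemma ``is just a slightly stripped down version of Theorem~5.1 from \cite{NW}'', and immediately after the statement the only work done is to check that the domain condition in Wormald's theorem (that the trajectory stays at distance $\ge C\lambda(n)$ from $\partial D$) is automatically satisfied here, using the derivative bound (\ref{derivativebound}) and the fact that $D$ was constructed with a positive $\ell^\infty$-gap from $\Pi$. So the paper's ``proof'' is a one-paragraph citation plus a verification of hypotheses.

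Your proposal instead attempts to redo Wormald's argument in outline. The three-stage structure you describe --- martingale concentration, identification of the drift with the ODE right-hand side, and a Gr\"onwall/Lipschitz comparison --- is indeed the skeleton of Wormald's method, and you correctly note that the $O(L^2/n)$ per-step error is what the hypothesis $\lambda(n)>L^2/n$ is designed to absorb. However, some of your bookkeeping is off: your choice $\Omega = n\lambda(n)/(4\cdot 2^L)$ and the resulting Azuma exponent do not directly yield $\exp(-n\lambda(n)^3/(8L^3))$, and your verbal explanation of how three powers of $\lambda(n)$ arise is not quite how it works in \cite{NW}. There, the cube comes from running the martingale over short time blocks of length roughly $n\lambda(n)/L$ rather than over the full $xn$ steps, applying Azuma within each block, and then chaining blocks; a single global martingale over $xn$ steps as you wrote it would give a different exponent. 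You partly acknowledge this by deferring the constants to \cite{NW}, which is exactly what the paper itself does --- so in the end you are citing the same source, just with more scaffolding around the citation.
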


 Actually, in the statement Theorem 5.1 in \cite{NW}, the conclusion is only guaranteed for those  $s$ for which
the vector $(s,z_{\sigma_1}(s),\dots,z_{\sigma_{2^L}}(s))$ is within $\ell^{\infty}$ distance $C\lambda(n)$  of the boundary of $D$ for some sufficiently large constant $C$ (where $C$  is independent from $n$). In our case, however, (\ref{derivativebound}) together with the way in which we defined $D$ removes these complications.  Since we required that the minimum $\ell^{\infty}$ distance between $D$ and $\Pi$ be bounded below, and since $\lambda(n)\rightarrow 0 $ as $n\rightarrow \infty$, the required condition is automatically satisfied for all sufficiently large $n$ and all $s\in [0,x]$ (and then the constant in the $O(\frac{2^{L+1}L}{\lambda(n)} \mbox{exp} ( - \frac{n\lambda(n)^3}{8L^3} ))$ term can be modified to ensure the theorem holds for all $n$). \\

Consider now the symmetry properties which must be satisfied by the functions $z_{\sigma}$. For each $\sigma\in 2^L$ let $\bar \sigma$ be the string which results from changing every 1 in $\sigma$ to 0, and changing every 0 to 1. For $\boldsymbol{z}\in \mathbb{R}^{2^L}$, let $\iota(\boldsymbol{z})$ be the vector which results from swapping the $i$th component with the $j$th component whenever $\bar \sigma_i=\sigma_j$ (recalling our enumeration of the strings of length $2^L$ from before). If we have $\iota(\boldsymbol{\zeta}(0))=\boldsymbol{\zeta}(0)$ (which we probably will not, but for large $n$ we shall have something close to it with good probability), then the symmetry properties of (\ref{sys2}) guarantee that for all $s\in [0,x]$, $\iota(\boldsymbol{\zeta}(s))=\boldsymbol{\zeta}(s)$.
Define $u(\sigma)=1$ if any node $u$ such that $\tau_u=\sigma$ is an unhappy $\alpha$ node, $u(\sigma)=-1$ if any node $u$ such that $\tau_u=\sigma$ is an unhappy $\beta$ node, and $u(\sigma)=0$ otherwise. For $\boldsymbol{z} = (z_1,...,z_{2^L}) \in \mathbb{R}^{2^L}$ consider the linear function: 
\[ \Delta(\boldsymbol{z})= \sum_i u(\sigma_i) z_i. \] 

 Then $\Delta(\boldsymbol{\zeta}(s)/n)$ is the difference at stage $s$, between the fraction of nodes which are unhappy and of type $\alpha$, and the fraction which are unhappy and of type $\beta$.  Let  $\boldsymbol{z}(s)$ be the vector solution to (\ref{sys2}) and (\ref{init}), i.e.\ with $i$th component $z_{\sigma_i}(s)$.  If $\iota(\boldsymbol{\zeta}(0))=\boldsymbol{\zeta}(0)$ then we have that $\Delta(\boldsymbol{z}(s))=0$ for all $s\in [0,x]$.  \\

In the above we considered the case that $\iota (\boldsymbol{\zeta}(0))=\boldsymbol{\zeta}(0)$. We now have to deal with the fact that this will probably not hold exactly.  Recall that we are interested in the processes on $C_n$ and $G_n$ up stage $xn$ for some fixed $x$, which we are yet to specify, but which will not depend on $n$ or $L$. For now suppose that we want to ensure the following modified version of our ultimate aim $(\diamond)$ (one of the differences being that this modified statement refers to a difference in numbers rather than a ratio): 

\begin{enumerate} 
\item[$(\dagger)$] For fixed $x>0$ and any $\epsilon_1>0$ the following holds for the process on $C_n$, with probability  $>1-\epsilon_1$ for all sufficiently large  $n$: at all stages $\leq xn$ the difference between the fraction of nodes which are unhappy and of type $\alpha$ and the fraction which are unhappy and of type $\beta$, is less than $\epsilon_1$.
\end{enumerate}

 To establish $(\dagger)$ take $\epsilon_2<\epsilon_1$.  It suffices to ensure that with probability $>1-\epsilon_2$,  $|\Delta(\boldsymbol{z}(s))| <\epsilon_2$ for all $s\in [0,x]$, since then we can take $L$ large and $n$ sufficiently large compared to $L$ and apply Lemmas \ref{taint bound} and \ref{Worm} (putting $\lambda(n) = \mbox{ max} \{ \frac{L^2+1}{n}, n^{-1/4} \}$, for example) to get the required result for $C_n$. 

Let $S$  denote the set of vectors $\boldsymbol{y}=(y_1,...,y_{2^L})$ such that the solution to (\ref{sys2}) with   $\boldsymbol{z}(0)=\boldsymbol{y}$,  satisfies $|\Delta(\boldsymbol{z}(s))|<\epsilon_2$ for all $s\in [0,x]$. Let $\Sigma =\{ \boldsymbol{y}\ |\ \iota(\boldsymbol{y})=\boldsymbol{y}, \forall i, y_i \geq 0, \sum_i y_i=1 \}$. By the continuity properties of ordinary differential equations (see for example \cite{BN}) it follows that $S$ is an open set, so since $\Sigma$ is compact  there exists $d_0$ such that  every point within distance $d_0$ of $\Sigma$ belongs to $S$. Then it follows directly from the law of large numbers that when $n$ is sufficiently large compared to $L$, $\boldsymbol{\zeta}(0)/n$ is within distance $d_0$ of $\Sigma$ with probability $>1-\epsilon_2$. This establishes $(\dagger)$ as required. \\

Finally, we have to specify the value $x$. Recall that our aim is to prove $(\diamond)$, as specified previously. 
Let $l_1$ be the length of the interval $ [l_{k_0},r_{k_0}]$. Note that (for fixed $w$ and $\epsilon$, where $\epsilon$ is the value for which we are proving Theorem \ref{3}, and which plays a role in the value of $k_0$, and for $\epsilon'$ as in $(\diamond)$),   we can take $ l_2 $ such that for all sufficiently large $n$, the probability that either $|l_{k_0}-u_0|>l_2/2$ or  $|u_0-r_{k_0}|>l_2/2$ is $\ll \epsilon'$. Taking  $l_3 \gg l_2/\epsilon'$,  consider now the first stage $s_1$ at which, for some $ \gamma \in \{ \alpha, \beta \}$, the number of unhappy $\gamma$ nodes as a proportion of  $n$ is less than $1/l_3$. 
Taking $\epsilon''\ll \epsilon'$ and putting  $x> (2w+1)(l_3)^2/\epsilon''$, suppose towards a contradiction that  $\textbf{P}(s_1>xn)>\epsilon''$. In that case, with probability $>\epsilon''$, the following holds at each stage $1\leq s\leq xn$:   given the configuration at the end of stage $s-1$, the probability at stage $s$ of choosing two nodes which do swap  is at least $\frac{1}{(l_3)^2}$. The expected number of stages $\leq xn$ at which swaps do occur is therefore $>(2w+1)n$, which gives the required contradiction since, according to the observation made in Section \ref{notation},  there can be at most $(2w+1)n$ stages at which a swap occurs.  Applying $(\dagger)$ to $\epsilon_1 \ll \epsilon'/l_3$  then establishes that with probability approaching 1 as $n\rightarrow \infty$, $s_1<xn$ and at all stages $\leq xn$ the difference  between the fraction of nodes which are unhappy and of type $\alpha$ and the fraction which are unhappy and of type $\beta$, is less than $\epsilon_1$. When the latter conditions hold, this means that at all stages up to $s_1$ the ratio between the number of unhappy $\alpha$ nodes and the number of unhappy $\beta$ nodes lies in the interval $[1-\epsilon',1+\epsilon']$. Now choosing $l_3\gg l_2/\epsilon'$ as we did previously, means that, since $u_0$ was chosen uniformly at random, and given that the total proportion of nodes which are unhappy at stage $s_1$ is at most $\frac{2+\epsilon'}{l_3}$, the probability that there are any unhappy nodes in the interval $[u_0-\lceil l_2/2\rceil, u_0+\lceil l_2/2 \rceil] $ (or in the almost certainly smaller interval $[l_{k_0},r_{k_0}] $)  at stage $s_1$ is $\ll \epsilon'$. Thus we have established $(\diamond)$, as required.

\section{Proofs deferred from Section \ref{complete}} \label{def3}  

 We must prove Lemma \ref{combo}. Initially, however, we ignore the required conditions concerning intervals of length $4w+1$, and we simply look to establish that when $x_p$ is close to $\frac{1}{2}$ there are unhappy $\alpha$ nodes in any configuration.  Rather than considering the nodes to be arranged in a circle, it is temporarily useful to suppose instead that they are arranged in a line which extends infinitely far to the right, with positions indexed by natural numbers (starting with 0). So we consider a fixed configuration in which each natural number is either of type $\alpha$ or $\beta$. It is convenient to suppose further that the node at position $w$ is of type $\alpha$, and that all  $\alpha$ nodes   $\geq w$ are happy.      From these assumptions, we will deduce the following bound on the proportion of $\alpha$ nodes: 
 
 \begin{enumerate} 
 \item[$(\dagger)$] For any $\epsilon>0 $ there exists $\ell$ such that for all $\ell'\geq \ell$, the proportion of $\alpha$ nodes in the interval $[0,\ell']$ is $\geq \frac{w+1}{2w}-\epsilon$.  
  \end{enumerate} 
  
   This suffices because the existence of any circle in which all $\alpha$ nodes are happy and in which $0<x_p<\frac{w+1}{2w}$ (the latter condition on $x_p$ being satisfied almost certainly for large enough $n$, by the weak law of large numbers) would give a contradiction, since this circle could then be cut and infinitely many copies placed end to end, giving a counter example to $(\dagger)$.
 
In order to prove $(\dagger)$,  it will be useful to consider the right and left parts of any neighbourhood  $\mathcal{N}(u)$ for $u\geq w$. We define $\mathcal{L}(u)$ (respectively $\mathcal{R}(u)$) to be the leftmost (rightmost) $w$-many  nodes in $\mathcal{N}(u)$.  Note  that any $\alpha$ node  $u\geq w$ must have $\alpha$ nodes  in both $\mathcal{L}(u)$ and $\mathcal{R}(u)$. Also, any $\beta$ nodes $v>w$ must  have  $\alpha$ nodes  in $\mathcal{L}(v)$ and in  $\mathcal{R}(v)$, otherwise the next  $\alpha$ node to the left or right (respectively) would not be happy. 
 
 We define a sequence of $\alpha$ nodes. Let $u_0=w$. Given $u_k$ define $u_{k+1}$ to be the rightmost $\alpha$ node in $\mathcal{N}_k:=\mathcal{N}(u_k)$. 
 Also define  $\mathcal{L}_k:=\mathcal{L}(u_k)$ and $\mathcal{R}_k=\mathcal{R}(u_k)$. 
Now, for $m\geq 1$ define $I_m:= \bigcup_{k=0}^{m} \mathcal{N}_k$ and $S_m:= \Sigma_{k=0}^{m} \Theta(\mathcal{N}_k)$. Since each $u_k$ is happy,  $\Theta(\mathcal{N}_k) \geq 3$, giving  $S_m \geq 3(m+1)$. This doesn't immediately tell us the bias on the interval $I_m$, however, because nodes may have been counted multiple times in forming the sum $S_m$. 
We therefore want to consider the way in which the neighbourhoods $\mathcal{N}_k$  overlap. 
For $k \geq 1$, define $\mathcal{N}'_k:=\mathcal{R}_{k-1} \cap \mathcal{L}_{k+1}$ $(=\mathcal{N}_{k-1} \cap \mathcal{N}_{k+1}$).  Notice that $u_k \in \mathcal{N}'_k$, but that $u_{k-1} \not\in \mathcal{N}'_k$ and  $u_{k+1} \not\in \mathcal{N}'_k$. We similarly partition each $\mathcal{N}'_k$ into a left and right part. Define $\mathcal{L}'_k:=\mathcal{N}'_k \cap \mathcal{L}_{k}$ and $\mathcal{R}'_k:=\mathcal{N}'_k \cap \mathcal{R}_{k}$. It is immediate that if $\mathcal{R}'_k$ is non-empty then all nodes in it are of type $\beta$.  
Now we partition the nodes in $I_m$ according to the number of times they are counted in forming the sum $S_m$, i.e.\ the number of $\mathcal{N}_k$ that they belong to for $k\leq m$.  We define  $J^m_r$ to be the set of $u\in I_m$ which belong to exactly $r$ distinct neighbourhoods $\mathcal{N}_k$ for $k\leq m$. Notice that $\bigcup_{r \geq 3} J_r^m \subseteq \bigcup_{k=1}^{m-1} \mathcal{N}'_k$. On the other hand, $u_k \in \bigcup_{r \geq 3}  J_r^m$ for $1\leq k \leq m-1$.  We now want to assess the size and bias of each of the various $J_r^m$. 

$\boldsymbol{J_1^m}$. We have $J_1^m \subseteq \mathcal{L}(u_0) \cup \mathcal{R}(u_m)$, so $|J_1^m| \leq 2w$.  

$\boldsymbol{J_4^m}$. We show that only $\beta$ nodes can belong to $J_4^m$. 
Suppose that $u \in J_4^m$ is of type $\alpha$. Then $u \in \mathcal{N}'_k$ for some $k$ with $1\leq k\leq m-1$, and by the remark above $u \not\in \mathcal{R}'_k$. So $u=u_k$ or $u \in \mathcal{L}'_k$. In either case, $u \in \mathcal{N}_{k-2}$ will contradict the definition of $u_{k-1}$, while $u \in \mathcal{N}_{k+2}$ will contradict the definition of $u_{k+1}$, giving the required contradiction. 

$\boldsymbol{J_r^m}, \ r\geq 5$. Suppose  $u \in \mathcal{N}_{k-2} \cap \mathcal{N}_{k-1} \cap \mathcal{N}_{k} \cap \mathcal{N}_{k+1} \cap \mathcal{N}_{k+2}$. Then either $u_k \in \mathcal{N}_{k-2}$ contradicting the definition of $u_{k-1}$ or $u_{k} \in \mathcal{N}_{k+2}$ contradicting that of $u_{k+1}$.  Thus $J_r^m$ is empty for $r\geq 5$.

$\boldsymbol{J_3^m}$.  We claim that $\Theta(J_3^m)<m+w$.
From our analysis of $J_r^m$ for $r>3$ it follows that we must have $u_k \in J_3^m$ for $1\leq k\leq m-1$. So $\Theta(J_3^m)\leq m - 1 + \Sigma_{k=1}^{m-1} |\mathcal{L}'_k| - \Sigma_{k=1}^{m-1}|\mathcal{R}'_k|$. It therefore suffices to show that $\Sigma_{k=1}^{m-1} (|\mathcal{L}'_k| - |\mathcal{R}'_k|)$ is bounded by $w$. For $k\geq 1$ define  $d_k:=u_{k}-u_{k-1}$. Then $|\mathcal{R}'_k|=w-d_k$ and $|\mathcal{L}'_k|=w-d_{k+1}$. Hence $|\mathcal{L}'_k|=|\mathcal{R}'_{k+1}|$, and $\Sigma_{k=1}^{m-1} (|\mathcal{L}'_k| - |\mathcal{R}'_k|)=|\mathcal{L}'_{m-1}|-|\mathcal{R}'_1| \leq w$, proving the claim.

Putting these facts together, we get that $2\Theta(I_m)> 3m -2w -m-w=2m-3w$, so $\Theta(I_m)> m-\frac{3}{2}w$. For $m>\frac{3}{2}w$ choose a partition, $I_m= \Pi_0 \cup \Pi_1 \cup \Pi_2$, so that $|\Pi_0|=|\Pi_1|$, $\Pi_0$ contains only $\beta$ nodes, and $\Pi_1$ and $\Pi_2$ contain only $\alpha$ nodes (meaning that $|\Pi_2|>m-\frac{3}{2}w$). Then since $|I_m|\leq 2w+1 +mw$, the proportion of the nodes in $I_m$ which belong to $\Pi_2$ is greater than 
\[ \frac{m-\frac{3}{2}w}{2w+1+mw} > \frac{m}{m+3} \cdot \frac{1}{w} - \frac{3}{2m+6} \] 

 which tends to $\frac{1}{w}$ as $m\rightarrow \infty$.  Thus, for each $\epsilon >0$, there exists $m$ such that for all $m'\geq m$, the proportion of $\alpha$ nodes in $I_{m'}$ is $\geq \frac{w+1}{2w}-\epsilon$, giving $(\dagger)$, as required.

 Finally, we have to deal with the fact that the statement of the lemma actually requires the existence of unhappy nodes outside any interval of length $4w+1$. We can get this at the expense of assuming $n$ to be reasonably large compared to $w$. Given any circle configuration  in which all $\alpha$ nodes outside an interval of length $4w+1$ are happy, cut the circle at the left end of the interval of length $4w+1$ and consider the individuals in the circle  to lie in a line with sites indexed by natural numbers $<n$ (so those in the interval of length $4w+1$ occupy $[0,4w]$).  Performing an almost identical analysis  we still conclude that, for  any $\epsilon>0$, so long as $n$ is sufficiently large,  there exists $\ell$ such that for all $\ell'\geq \ell$ (with $\ell'<n$), the proportion of $\alpha$ nodes in the interval $[0,\ell']$ is $\geq \frac{w+1}{2w}-\epsilon$.  We conclude that with probability $1-\epsilon$ the initial  configuration on a circle of size $n$ will have a value $x$ sufficient to ensure  unhappy nodes of both types  outside any interval of length $4w+1$ in \emph{any} configuration with the same number of $\alpha$ and $\beta$ nodes,  where $\epsilon \rightarrow 0$ as $n\rightarrow \infty$.

\end{document}